\algnewcommand{\Initialize}[1]{%
 \State \textbf{Initialize:}
 \State \hspace*{\algorithmicindent}\parbox[t]{0.8\linewidth}{\raggedright #1}
}
\newcommand{\FP}{\mathrm{FP}}
\newcommand{\FN}{\mathrm{FN}}
\newcommand{\TP}{\mathrm{TP}}
\newcommand{\TN}{\mathrm{TN}}
\theoremstyle{plain}
\newtheorem{thm}{Theorem}
\newtheorem{cor}{Corollary}
\newtheorem{remark}{Remark}
\begin{document}

\title{Decision Theoretic Cutoff and ROC Analysis for \\
Bayesian Optimal Group Testing}

\author{Ayaka Sakata~and~Yoshiyuki Kabashima
\thanks{A. Sakata is with Department of Statistical Inference \& Mathematics, The Institute of Statistical Mathematics, Tachikawa, Tokyo 190-8562, Japan
and also with the Graduate University for Advanced Science
(SOKENDAI), Hayama, Kanagawa 240-0193, Japan (e-mail: ayaka@ism.ac.jp).}
\thanks{Y. Kabashima is with The Institute for Physics of Intelligence and
Department of Physics, The University of Tokyo, Tokyo 113-0033, Japan (e-mail: kaba@phys.s.u-tokyo.ac.jp).}}



\maketitle

\begin{abstract}
We study the inference problem 
in the group testing to identify 
defective items
from the perspective of the decision theory.
We introduce Bayesian inference and consider the Bayesian optimal setting 
in which the true generative process of the test results is known. 
We demonstrate the adequacy of the 
posterior marginal probability in the Bayesian optimal setting
as a diagnostic variable based on the area under the curve (AUC).
Using the posterior marginal probability,
we derive the general expression of the optimal cutoff value
that yields the minimum expected risk function.
Furthermore, we evaluate the performance 
of the Bayesian group testing without knowing the 
true states of the items: defective or non-defective. 
By introducing an analytical method from 
statistical physics, we derive the receiver operating characteristics 
curve, and quantify the corresponding AUC under the Bayesian 
optimal setting. 
The obtained analytical results 
precisely describes the actual 
performance of the belief propagation algorithm defined for 
single samples when the number of items is sufficiently large.
\end{abstract}

\begin{IEEEkeywords}
Group Testing, Bayes Risk, Cutoff, ROC curve
\end{IEEEkeywords}

\section{Introduction}
\IEEEPARstart{G}{roup} testing is an effective testing method
to reduce the number of tests
required for identifying defective items
by performing tests on the pools of items \cite{Dorfman}.
As the number of the pools is set to be less than that of the items, 
the number of tests required in the group testing is smaller than that of the items.
A mathematical procedure is required to estimate items' states
based on the test results,
which is equivalent to solving an underdetermined problem.
It is expected that when the 
prevalence (fraction of the defective items) is sufficiently small,
the defective items can be accurately identified
using an appropriate inference method,
as with sparse estimation \cite{Atia,Sparse_book},
wherein an underdetermined problem is solved under the assumption
that the number of nonzero (defective in the context of group testing)
components in the variables to be estimated is sufficiently small.
The reduction in the number of tests reduces the testing costs;
hence, the application of group testing to various diseases
such as HIV \cite{GT_HIV} and hepatitis virus
\cite{GT_HBV_B,GT_HBV_C} has been discussed.
In addition, the need to detect elements in heterogeneous states 
from a large population is common not only in medical tests but also 
in various other fields. 
The group testing matches such demands and is applied to 
the detection of rare mutations in genetic populations
\cite{rare_mutation} and the monitoring of exposure to chemical substances \cite{GT_biomonitoring}.

The accuracy of group testing 
depends on the pooling and estimation methods.
Representative pooling methods are random pooling under a constraint related to the number of pools each item belongs to
\cite{Mezard_GT}, and the binary splitting method
where the defective pools are sequentially divided into subpools
and the subpools are repeatedly tested \cite{Sobel1,Hwang}.
From an experimental point of view, a method called shifted transversal
for high-throughput screening has been proposed \cite{STD}.
In addition, a pooling method using a paper-like device
for multiple malaria infections has been developed for effective testing \cite{Origami}.
Recently, a pooling method using active learning has also been proposed 
\cite{Doucet_GT,Sakata_PRE}.

The estimation problem in group testing has been studied
based on the mathematical correspondence between the group testing and 
coding theory \cite{Katona,Aldridge}.
By considering errors that are inevitable in realistic testing,
Bayesian inference has been introduced to the estimation process in group testing for modeling the 
probabilistic error \cite{Johnson,Sakata_JPSJ}.
The estimation of the items' states using Bayesian inference is superior to that using
binary splitting-based methods in the case of finite error probability.
However, the applicability of 
the theoretic bounds for group testing studied so far
are restricted to the parameter region where asymptotic limits are applicable;
hence, the theoretical bounds are not necessarily practical 
for general settings \cite{Cevher,Coja-Oghlan}.

In contrast to such approaches, 
we quantify the performance of the Bayesian group testing
and understand its applicability as a diagnostic classifier.
Considering the practical situation,
we examine the no-gold-standard case
and introduce a statistical model for group testing.
The usage of the statistical model is
one of the approaches to understand the test property
without a gold standard \cite{no_GS}.
The statistical model considered here 
describes an idealized group testing,
and there are no practical tests that completely match this setting.
However, as explained in the main text, 
the Bayesian optimal setting considered here,
in which the generative process of the test result is known,
can provide a practical guide 
for group testing.

We consider two quantification methods used
in medical tests that output continuous values.
The first is the cutoff-dependent property.
The basis for applying Bayesian inference to estimate the 
items' states, which are discrete,
is the posterior distribution, which is a continuous function;
hence, a mapping from continuous to discrete variables is needed.
In prior studies on Bayesian group testing, 
the maximum posterior marginal (MPM) estimator,
which is equivalent to the cutoff of 0.5,
was used for the mapping to determine the items' states 
from the posterior distribution
\cite{Johnson,Sakata_JPSJ}.
However, there is no mathematical background behind the use of the MPM estimator.
We appropriately determine the cutoff in Bayesian group testing using a risk function
from the view point of decision theory \cite{Lehmann}
or utility theory \cite{Utility},
and understand the MPM estimator and the maximization of Youden index 
in the unified framework of the Bayesian decision theory. 

The second characterization is the cutoff-independent property
using the receiver operating characteristic (ROC) curve \cite{ROC1,ROC2}.
More quantitatively, the area under the curve (AUC) of the ROC curve
is used as an indicator of the usefulness of a test \cite{AUC}.
For evaluating the AUC in the no-gold standard case, 
we apply a method based on statistical physics to the group testing model.
The analytical result well describes the 
actual performance of the belief propagation (BP) algorithm 
defined for the given data.


The main contributions of our study are as follows.

\begin{itemize}
\item We show that the expected AUC is 
maximized under the Bayesian optimal setting when the 
marginal posterior probability is used as the 
diagnostic variable.

\item We show that in the Bayesian optimal setting, Bayes risk function
defined by the false positive rate and false negative rate is minimized
using the marginal posterior probability and appropriate cutoff.


\item We derive the distribution of the marginal posterior probability
for defective and non-defective items under the Bayesian optimal setting
without knowing which items are defective.
Using this distribution, we obtain the ROC curve and quantify the AUC.
Then, we identify the parameter region in which the 
group testing with smaller number of tests
yields a better identification performance than that of the original test 
performed on all items.

\item We demonstrate that the 
analytical results accurately describe the behavior of 
BP algorithm employed for a single sample
when the number of items is sufficiently large.

\end{itemize}

The remainder of this paper is organized as follows.
In Sec. \ref{sec:GT_model},
we describe our model for the group testing and Bayesian optimal setting.
In Sec. \ref{sec:BO},
we introduce several theorems hold in the Bayesian optimal setting,
and derive a general expression for the cutoff corresponding to the risk function.
In Sec.\ref{sec:Replica},
the performance evaluation method based on the replica method 
for the group testing is 
summarized and the results are presented.
In Sec.\ref{sec:BP}, the correspondence between the replica method 
and the BP algorithm is explained.
Sec.\ref{sec:Summary} summarizes this study
and explains the considerations regarding the assumptions used in this paper.

\section{Model and settings}
\label{sec:GT_model}

Let us denote the number of items as $N$.
We consider randomly generated pools
under the constraint that the number of items in
each pool is $K$,
and the number of pools each item belongs to is $C$.
Here, we refer to $K$ and $C$ as the pool size and overlap, respectively,
and set them to be sufficiently smaller than $N$.
There are $N_p= \displaystyle\frac{N!}{K! (N-K)!}$ pools
that satisfy the condition. We label them as 
$\nu = 1,2,\ldots,N_p$ and prepare the corresponding variable 
$\bm{c}=\{c_1,c_2,\ldots,c_{N_p}\}\in\{0,1\}^{N_p}$,
which represents pooling method:
$c_\nu=1$ indicates that the $\nu$-th pool is tested,
whereas $c_\nu=0$ indicates that it is not tested.
We consider that each pool is not tested more than once, 
and $M$-tests are performed in total; hence,
$\sum_{\nu}c_\nu=M$ and $C=KM\slash N$ hold.
From the definition of the group testing, $M$ is smaller than $N$,
and we set $\alpha=M\slash N~(<1)$.
The set of labels of the
items in the $\nu$-th pool is ${\cal L}(\nu)$, and the number of labels in
${\cal L}(\nu)$ is $K$
without dependence on $\nu$.
The true state of all items is denoted by $\bm{x}^{(0)}\in\{0,1\}^N$,
and that of the items in the $\nu$-th pool is denoted by 
$\bm{x}_{(\nu)}^{(0)}\in\{0,1\}^K$.
For instance, when the 1st pool contains the 1st, 2nd, and 3rd items,
$\bm{x}_{(1)}^{(0)}=\{x_1^{(0)},x_2^{(0)},x_3^{(0)}\}$.
We introduce the following assumptions in our model of the group testing.

\begin{description}

\item[$\bullet$ A1] The pools that contain at least one defective item 
are regarded as positive.

Under this assumption, the true state of the $\nu$-th pool, denoted by $T(\bm{x}_{(\nu)}^{(0)})$,
is given by $T(\bm{x}_{(\nu)}^{(0)})=\vee_{i\in{\cal L}(\nu)}^Kx_{i}^{(0)}$,
where $\vee$ is the logical sum.

\item[$\bullet$ A2] Each test result independently obeys the identical distribution.

In addition, we consider that the 
test property is characterized by 
the true positive probability
$p_{\mathrm{TP}}$ and false positive probability $p_{\mathrm{FP}}$.
Hence, the true generative process of the test result performed on the $\nu$-th pool is given by
\begin{align}
&f(y_{\nu}|c_{\nu},\bm{x}^{(0)}_{(\nu)})=(1-c_\nu)\\
\nonumber
&+c_\nu\Big\{\left(p_{\mathrm{TP}}y_\nu +(1-p_{\mathrm{TP}})(1-y_\nu)\right)T(\bm{x}^{(0)}_{(\nu)})\\
\nonumber
&+\left(p_{\mathrm{FP}}y_\nu+(1-p_{\mathrm{FP}})(1-y_\nu)\right)(1-T(\bm{x}^{(0)}_{(\nu)}))\Big\},
\end{align}
and the joint distribution of the $M$-test results is given by
$f(\bm{y}|\bm{c},\bm{x}^{(0)})=\prod_{\nu=1}^{N_p} f(y_{\nu}|c_{\nu},\bm{x}^{(0)}_{(\nu)})$.

\item[$\bullet$ A3] The true positive probability $p_{\mathrm{TP}}$ and 
false positive probability $p_{\mathrm{FP}}$ are known in advance.

Following this assumption, the assumed model for the inference is set 
as $f(\bm{y}|\bm{c},\bm{x})$.

\end{description}

As the prior knowledge, we introduce the following assumptions.
\begin{description}
\item[$\bullet$ A4] The prevalence $\theta$ is known.
\item[$\bullet$ A5] The pretest probability of all items is set to the prevalence $\theta$.

Hence, we set the prior distribution as 
\begin{align}
\phi(\bm{x})&=\prod_{i=1}^N\left\{(1-\theta)(1-x_i)+\theta x_i\right\}.
\end{align}

\end{description}

Following the Bayes' theorem,
the posterior distribution is given by
\begin{align}
P(\bm{x}|\bm{y},\bm{c})=\frac{1}{P_y(\bm{y}|\bm{c})}f(\bm{y}|\bm{c},\bm{x})\phi(\bm{x}),
\label{eq:posterior}
\end{align}
where $P_y(\bm{y}|\bm{c})$ is the normalization constant given by
\begin{align}
P_y(\bm{y}|\bm{c})=\sum_{\bm{x}}f(\bm{y}|\bm{c},\bm{x})\phi(\bm{x}).
\label{eq:partition_function}
\end{align}
We note that $P_y(\bm{y}|\bm{c})$ corresponds to the 
true generative process of the test results in the Bayesian optimal setting.
In the problem setting considered here,
the assumed model used in the inference matches the true generative process of the
test results.
We refer to such a setting as {\it Bayes optimal}.

\section{Appropriate diagnostic variable and cutoff}
\label{sec:BO}

In this section, we present
a discussion 
based on Bayesian decision theory
for the setting of the diagnostic variable and cutoff.

\subsection{Diagnostic variable for decision}

First, we 
consider the statistic that should be used as a diagnostic variable
to determine the items' states.
In this study, we adopt the 
statistic that is expected to maximize the AUC. 
We denote the arbitrary statistic $s_i(\bm{y},\bm{c})$
for the $i$-th item, which
characterizes the estimated item's state
under the pooling method 
$\bm{c}$ and the test result $\bm{y}$.
The statistic $\bm{s}$ does not need to be evaluated 
in the framework of Bayesian estimation 
but can be defined based on other methods.
We use the following expression for the AUC,
which is equivalent to the Wilcoxon--Mann--Whitney test statistic \cite{AUC}: 
\begin{align}
\nonumber
\mathrm{AUC}&(\bm{x}^{(0)},\bm{s}(\bm{y},\bm{c}))=\frac{1}{N^2\theta(1-\theta)}\\
\nonumber
&\times\sum_{i\neq j}x_i^{(0)}(1-x_j^{(0)})\Big\{\mathbb{I}(s_i(\bm{y},\bm{c})>s_j(\bm{y},\bm{c}))\\
&\hspace{2.0cm}+\frac{1}{2}\mathbb{I}(s_i(\bm{y},\bm{c})=s_j(\bm{y},\bm{c}))\Big\}.
\end{align}
Furthermore, we define the expected AUC as $\overline{\mathrm{AUC}}[\bm{s}]=E_{\bm{x}^{(0)}}\left[E_{\bm{y}|\bm{x}^{(0)},\bm{c}}
\left[\mathrm{AUC}(\bm{x}^{(0)},\bm{s}(\bm{y},\bm{c}))\right]\right]$,
where 
$E_{\bm{x}^{(0)}}[\cdot]$ and $E_{\bm{y}|\bm{x}^{(0)},\bm{c}}[\cdot]$
denote the expectation according to the prior
$\phi(\bm{x}^{(0)})$ and 
the likelihood $f({\bm{y}|\bm{x}^{(0)},\bm{c}})$, respectively,
and $[\bm{s}]$ denotes a functional of $\bm{s}$.
In the Bayesian optimal setting, the expected AUC is given by
\begin{align}
\nonumber
\overline{\mathrm{AUC}}[\bm{s}]&=\sum_{\bm{y}}P_y(\bm{y}|\bm{c})\widehat{\mathrm{AUC}}(\bm{y},\bm{c}),
\end{align}
where $\widehat{\mathrm{AUC}}(\bm{y},\bm{c})$ is the posterior AUC 
defined by
\begin{align}
\nonumber
&\widehat{\mathrm{AUC}}(\bm{y},\bm{c})=\frac{1}{N^2\theta(1-\theta)}\\
\nonumber
&\times\sum_{i\neq j}E_{\bm{x}|\bm{y},\bm{c}}\left[x_i(1-x_j)\right]\Big\{\mathbb{I}(s_i(\bm{y},\bm{c}))>s_j(\bm{y},\bm{c})))\\
&\hspace{3.0cm}+\frac{1}{2}\mathbb{I}\left(s_i(\bm{y},\bm{c})=s_j(\bm{y},\bm{c})\right)\Big\}.
\end{align}
We denote the marginal posterior probability under the Bayesian optimal setting as 
 $\rho_i(\bm{y},\bm{c})=\sum_{\bm{x}}x_iP(\bm{x}|\bm{y},\bm{c})$.
For simplicity, we consider that the case 
$E_{\bm{x}|\bm{y},\bm{c}}\left[x_i(1-x_j)\right]\sim\rho_i(\bm{y},\bm{c})(1-\rho_j(\bm{y},\bm{c}))$ holds; hence,
\begin{align}
\nonumber
&\widehat{\mathrm{AUC}}(\bm{y},\bm{c})=\frac{1}{N^2\theta(1-\theta)}\\
\nonumber
&\times\sum_{i\neq j}\rho_i(\bm{y},\bm{c})(1-\rho_j(\bm{y},\bm{c}))\Big\{\mathbb{I}(s_i(\bm{y},\bm{c}))>s_j(\bm{y},\bm{c})))\\
&\hspace{3.0cm}+\frac{1}{2}\mathbb{I}\left(s_i(\bm{y},\bm{c})=s_j(\bm{y},\bm{c})\right)\Big\}.
\label{eq:AUC_unbiased}
\end{align}
As a diagnostic variable,
we adopt the statistic that yields the largest $\widehat{\mathrm{AUC}}(\bm{y},\bm{c})$.
The following theorem suggests the statistic appropriate for the purpose.
\begin{thm}
The maximum of the posterior AUC (\ref{eq:AUC_unbiased}) is achieved 
at $\bm{s}=\bm{\rho}$ for any $\bm{y}$ and $\bm{c}$.
\end{thm}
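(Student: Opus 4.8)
The plan is to observe that the right-hand side of (\ref{eq:AUC_unbiased}), viewed as a function of the statistic $\bm{s}$, decouples over unordered pairs of items and can therefore be maximized pair by pair. Set $w_{ij}=\rho_i(\bm{y},\bm{c})\bigl(1-\rho_j(\bm{y},\bm{c})\bigr)$ and collect, for each pair $i<j$, the two ordered terms coming from $(i,j)$ and $(j,i)$. The combined contribution of $\{i,j\}$ to $N^2\theta(1-\theta)\,\widehat{\mathrm{AUC}}(\bm{y},\bm{c})$ depends on $\bm{s}$ only through the sign of $s_i-s_j$: it equals $w_{ij}$ if $s_i>s_j$, equals $w_{ji}$ if $s_i<s_j$, and equals $\tfrac12(w_{ij}+w_{ji})$ if $s_i=s_j$. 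Since $\tfrac12(w_{ij}+w_{ji})$ is the average of $w_{ij}$ and $w_{ji}$, it never exceeds $\max\{w_{ij},w_{ji}\}$, so each pair contributes at most $\max\{w_{ij},w_{ji}\}$ and hence
\[
N^2\theta(1-\theta)\,\widehat{\mathrm{AUC}}(\bm{y},\bm{c})\;\le\;\sum_{i<j}\max\{w_{ij},w_{ji}\}
\]
for every $\bm{s}$, with equality precisely when $\bm{s}$ orders each pair in the direction of the larger weight.

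Next I would determine that direction. A one-line computation gives $w_{ij}-w_{ji}=\rho_i(\bm{y},\bm{c})-\rho_j(\bm{y},\bm{c})$, so $w_{ij}\ge w_{ji}$ exactly when $\rho_i(\bm{y},\bm{c})\ge\rho_j(\bm{y},\bm{c})$, the two inequalities being equalities simultaneously. Therefore the single choice $\bm{s}=\bm{\rho}(\bm{y},\bm{c})$ attains the per-pair optimum for all pairs at once: when $\rho_i\neq\rho_j$ it puts $s_i>s_j$ precisely when $w_{ij}>w_{ji}$, collecting $\max\{w_{ij},w_{ji}\}$; when $\rho_i=\rho_j$ one has $w_{ij}=w_{ji}$ and the tie term $\tfrac12(w_{ij}+w_{ji})$ equals that common value, which is again $\max\{w_{ij},w_{ji}\}$. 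Thus $\bm{s}=\bm{\rho}$ meets the upper bound with equality, which establishes the theorem.

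I expect the only delicate point to be the treatment of ties: one must check that the $\tfrac12$-weighted tie term cannot beat either strict alternative — immediate, since it is an average — and that under $\bm{s}=\bm{\rho}$ the equalities $s_i=s_j$ arise exactly where $w_{ij}=w_{ji}$, so nothing is lost there. Everything else is routine bookkeeping. I would also note (although it is not needed for the statement) that the maximizer is not unique: any $\bm{s}$ inducing the same weak order as $\bm{\rho}$ — for example any strictly increasing transformation of $\bm{\rho}$, or any arbitrary tie-breaking among items with equal marginal probability — yields the same maximal value of $\widehat{\mathrm{AUC}}$.
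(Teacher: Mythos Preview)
Your proof is correct and follows essentially the same pairwise-comparison idea as the paper: both arguments reduce to showing that, for each unordered pair $\{i,j\}$, ordering by $\bm{\rho}$ picks out the larger of $\rho_i(1-\rho_j)$ and $\rho_j(1-\rho_i)$. Your version is in fact a bit tidier than the paper's---you make the key identity $w_{ij}-w_{ji}=\rho_i-\rho_j$ explicit and you handle the tie case $\rho_i=\rho_j$ carefully, whereas the paper's difference expression (\ref{eq:ineq}) implicitly assumes strict ordering---but the underlying strategy is the same.
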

\begin{proof}
We introduce the order statistic of $\bm{\rho}$ as 
$\rho_{\sigma(1)}\leq \rho_{\sigma(2)}\leq\cdots\leq \rho_{\sigma(N)}$,
where $\sigma(j)\in\{1,2,\cdots,N\}$ denotes the index of the component in
$\bm{\rho}$ whose value is the $j$-th smallest.
Using the order statistic, 
(\ref{eq:AUC_unbiased}) for $\bm{s}=\bm{\rho}$ is given by
\begin{align}
\widehat{\mathrm{AUC}}(\bm{\rho}(\bm{y},\bm{c}))=\frac{\sum_{i=2}^N\sum_{j<i}\rho_{\sigma(i)}(\bm{y},\bm{c})(1-\rho_{\sigma(j)}(\bm{y},\bm{c}))}{N^2\theta(1-\theta)}.
\label{eq:AUC_dag}
\end{align}
The difference between $\widehat{\mathrm{AUC}}(\bm{\rho}(\bm{y},\bm{c}))$ and
$\widehat{\mathrm{AUC}}(\bm{y},\bm{c},\bm{s}(\bm{y},\bm{c}))$ under an arbitrary statistic 
$\bm{s}$ is given by
\begin{align}
\nonumber
&\widehat{\mathrm{AUC}}(\bm{\rho}(\bm{y},\bm{c}))-
\widehat{\mathrm{AUC}}(\bm{y},\bm{c},\bm{s}(\bm{y},\bm{c}))=\frac{1}{N^2\theta(1-\theta)}\\
\nonumber
&\times\sum_{i=2}^N\sum_{j<i}\rho_{\sigma(i)}(\bm{y},\bm{c})(1-\rho_{\sigma(j)}(\bm{y},\bm{c}))\\
&\hspace{1.0cm}\times\{1-\mathbb{I}(s_{\sigma(i)}(\bm{y},\bm{c})\geq s_{\sigma(j)}(\bm{y},\bm{c}))\}
\geq 0,
\label{eq:ineq}
\end{align}
where the inequality trivially 
holds because $1-\mathbb{I}(s_{\sigma(i)}(\bm{y},\bm{c})\geq s_{\sigma(j)}(\bm{y},\bm{c}))\geq 0$.
Equation (\ref{eq:ineq}) indicates that 
the posterior marginal probability $\bm{\rho}$ yields the largest value
of $\widehat{\mathrm{AUC}}$.
\end{proof}
\begin{remark}
The equality holds when $s_{\sigma(i)}\geq s_{\sigma(j)}$
is satisfied for all $j<i$.
In other words, when the sorted $\bm{s}$ as $s_{\sigma(1)},s_{\sigma(2)},\cdots,s_{\sigma(N)}$ 
corresponds to the order statistic of $\bm{s}$, 
$\widehat{AUC}$ is the maximum, as in the case of the evaluation using the 
posterior marginal probability under the Bayesian optimal setting.
In principle, the statistic $\bm{s}$ not under the Bayesian optimal setting
can achieve the maximum of $\widehat{AUC}$
when it satisfies the abovementioned condition.
Furthermore, as an example, 
$s_i=\rho_i^k$ for $k>0$ also yields the largest value of $\widehat{AUC}$.
In the following, we evaluate the AUC using $\bm{\rho}$
for simplicity.
\end{remark}

The Bayesian optimal setting is the ideal case and is impractical,
but indicates the best possible performance of the
group testing in the sense that it yields the largest $\widehat{\mathrm{AUC}}$.

\subsection{Determination of cutoff}

The adequacy of the marginal posterior probability
as a diagnostic variable
can be confirmed in the interpretation of the cutoff based on a utility function.
We define the utility function 
for the use of an arbitrary estimator $\hat{\bm{x}}(\bm{y},\bm{c})\in\{0,1\}^N$ as 
follows \cite{Utility}:
\begin{align}
\nonumber
&U(\bm{x}^{(0)},\hat{\bm{x}}(\bm{y},\bm{c});\bm{u})\\
\nonumber
&=\theta\left(u_{\mathrm{TP}} \TP +u_{\mathrm{FN}}\mathrm{FN}\right)
+(1-\theta)\mathrm{FP}u_{\mathrm{FP}}+(1-\theta)\mathrm{TN}u_{\mathrm{TN}}\\
\nonumber
&=\theta u_{\mathrm{TP}}+(1-\theta)u_{\mathrm{FP}}\\
&\hspace{0.5cm}+\theta(u_{\mathrm{FN}}-u_{\mathrm{TP}})\mathrm{FN}+(1-\theta)
(u_{\mathrm{FP}}-u_{\mathrm{TN}})\mathrm{FP},\label{eq:utility}
\end{align}
where $\bm{u}=\{u_{\TP},u_{\FN},u_{\FP},u_{\TN}\}$, and
$\TP$, $\FN=1-\TP$, $\FP=1-\TN$, and $\TN$ 
are
the true positive rate, false negative rate, false positive rate, and true negative rate,
respectively.
Following our notations, $\FN$ and $\FP$ are given by 
\begin{align}
\mathrm{FN}(\bm{x}^{(0)},\hat{\bm{x}}(\bm{y},\bm{c}))&=\frac
{\sum_{i=1}^Nx_i^{(0)}(1-\hat{x}_i(\bm{y},\bm{c}))}
{N\theta}\\
\mathrm{FP}(\bm{x}^{(0)},\hat{\bm{x}}(\bm{y},\bm{c}))&=\frac{\sum_{i=1}^N(1-x_i^{(0)})\hat{x}_i(\bm{y},\bm{c})}{N(1-\theta)}.
\end{align}
The first and second terms of \eqref{eq:utility}
are constants given by the model parameters; 
hence, for convenience, we define the
risk function $R(\bm{x}^{(0)},\hat{\bm{x}}(\bm{y},\bm{c});\bm{\lambda})$
from the third and fourth terms of \eqref{eq:utility} as
\begin{align}
R(\bm{x}^{(0)},\hat{\bm{x}}(\bm{y},\bm{c});\bm{\lambda})=\lambda_{\FN}\FN
+\lambda_{\FP}\FP,
\label{eq:Risk_def}
\end{align}
where $\bm{\lambda}=\{\lambda_{\FN},\lambda_{\FP}\}$ 
is the set of parameters given by
$\lambda_{\FN}=-\theta(u_{\mathrm{FN}}-u_{\mathrm{TP}})>0$ and 
$\lambda_{\FP}=-(1-\theta)(u_{\mathrm{FP}}-u_{\mathrm{TN}})>0$, respectively.
The risk function evaluates the detrimental effect of the incorrectly estimated results.
The maximization of the utility function is 
equivalent to the minimization of the risk function;
hence, 
we consider the risk minimization.
We define an expected risk as
$\overline{R}[\hat{\bm{x}};\lambda]=E_{\bm{x}^{(0)}}\left[E_{\bm{y}|\bm{x}^{(0)},\bm{c}}
\left[R(\bm{x}^{(0)},\hat{\bm{x}}(\bm{y},\bm{c});\lambda)\right]\right]$.
Under the Bayesian optimal setting, the expected risk known as Bayes risk
is given by
\begin{align}
\overline{R}[\hat{\bm{x}};\lambda]=
\sum_{\bm{y}}P_y(\bm{y}|\bm{c})\hat{R}(\hat{\bm{x}}(\bm{y},\bm{c});\lambda),
\end{align}
where $\hat{R}(\hat{\bm{x}}(\bm{y},\bm{c});\lambda)$ is the 
posterior risk
defined as
\begin{align}
\hat{R}(\hat{\bm{x}}(\bm{y},\bm{c});\bm{\lambda})=
\lambda_{\FN}\widehat{\FN}(\hat{\bm{x}}(\bm{y},\bm{c}))
+\lambda_{\FP}\widehat{\FP}(\hat{\bm{x}}(\bm{y},\bm{c})),
\label{eq:R_unbiased}
\end{align}
and $\widehat{\FN}$ and $\widehat{\FP}$
are the posterior FN and posterior FP defined as
\begin{align}
\nonumber
\widehat{\mathrm{FN}}(\hat{\bm{x}}(\bm{y},\bm{c}))&=E_{\bm{x}|\bm{y},\bm{c}}
\left[\frac{\sum_{i=1}^Nx_i(1-\hat{x}_i(\bm{y},\bm{c}))}{N\theta}\right]\\
&=
\frac{\sum_{i=1}^N\rho_i(\bm{y},\bm{c})(1-\hat{x}_i(\bm{y},\bm{c}))}{N\theta},\\
\nonumber
\widehat{\mathrm{FP}}(\hat{\bm{x}}(\bm{y},\bm{c}))&=E_{\bm{x}|\bm{y},\bm{c}}
\left[\frac{\sum_{i=1}^N(1-x_i)\hat{x}_i(\bm{y},\bm{c})}{N\theta}\right]\\
&=
\frac{\sum_{i=1}^N(1-\rho_i(\bm{y},\bm{c}))\hat{x}_i(\bm{y},\bm{c})}{N(1-\theta)},
\end{align}
respectively.
Here, the following relationship holds:
\begin{align}
\nonumber
E_{\bm{x}^{(0)}}&\left[E_{\bm{y}|\bm{x}^{(0)},\bm{c}}[\widehat{\mathrm{FN}}(\hat{\bm{x}}(\bm{y},\bm{c}))]\right]\\
&\hspace{1.0cm}=E_{\bm{x}^{(0)}}[E_{\bm{y}|\bm{x}^{(0)},\bm{c}}[\FN(\bm{x}^{(0)},\hat{\bm{x}}(\bm{y},\bm{c}))]]\\
\nonumber
E_{\bm{x}^{(0)}}&\left[E_{\bm{y}|\bm{x}^{(0)},\bm{c}}[\widehat{\mathrm{FP}}(\hat{\bm{x}}(\bm{y},\bm{c}))]\right]\\
&\hspace{1.0cm}=E_{\bm{x}^{(0)}}[E_{\bm{y}|\bm{x}^{(0)},\bm{c}}[\FP(\bm{x}^{(0)},\hat{\bm{x}}(\bm{y},\bm{c}))]].
\end{align}
We define the optimal estimator 
as that minimizes the posterior risk
$\hat{R}(\hat{\bm{x}}(\bm{y},\bm{c});\bm{\lambda})$
for any $\bm{y}$ and $\bm{c}$ at a given $\bm{\lambda}$.
The optimal estimator 
yields the minimum expected risk, and the
estimator corresponds to the Bayes estimator in the decision theory \cite{Lehmann}.
The following theorem represents the basis for the cutoff determination.

\begin{thm}
\label{thm:AUC_max}
The optimal estimator is given by
a cutoff-based function using the marginal posterior probability in the Bayesian optimal setting $\bm{\rho}$
as
\begin{align}
\hat{x}_i(\bm{y},\bm{c})=\mathbb{I}\left(\rho_i(\bm{y},\bm{c})>\frac{\theta\lambda_{\FP}}{\lambda_{\FN}(1-\theta)+\theta\lambda_{\FP}}
\right).
\label{eq:cutoff_general}
\end{align}
\end{thm}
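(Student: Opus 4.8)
The plan is to use the fact that, for fixed $\bm{y}$ and $\bm{c}$, the posterior risk in \eqref{eq:R_unbiased} is a \emph{separable} function of the binary components $\hat{x}_1,\dots,\hat{x}_N$, so that the minimization over $\hat{\bm{x}}\in\{0,1\}^N$ decouples into $N$ independent scalar minimizations. First I would substitute the explicit forms of $\widehat{\FN}$ and $\widehat{\FP}$ into \eqref{eq:R_unbiased} to obtain
\[
\hat{R}(\hat{\bm{x}}(\bm{y},\bm{c});\bm{\lambda})=\sum_{i=1}^N\left[\frac{\lambda_{\FN}}{N\theta}\,\rho_i(\bm{y},\bm{c})(1-\hat{x}_i)+\frac{\lambda_{\FP}}{N(1-\theta)}\,(1-\rho_i(\bm{y},\bm{c}))\,\hat{x}_i\right].
\]
Since each summand depends on a single $\hat{x}_i$ and the summands involve disjoint variables, the global minimizer is obtained by minimizing each bracket separately over $\hat{x}_i\in\{0,1\}$.

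Next, for a fixed $i$ I would simply compare the two admissible values: the $i$-th bracket equals $\frac{\lambda_{\FN}}{N\theta}\rho_i$ when $\hat{x}_i=0$ and $\frac{\lambda_{\FP}}{N(1-\theta)}(1-\rho_i)$ when $\hat{x}_i=1$. Hence $\hat{x}_i=1$ is (strictly) preferable exactly when $\frac{\lambda_{\FP}}{1-\theta}(1-\rho_i)<\frac{\lambda_{\FN}}{\theta}\rho_i$; clearing denominators gives $\theta\lambda_{\FP}(1-\rho_i)<(1-\theta)\lambda_{\FN}\rho_i$, and collecting the $\rho_i$ terms yields $\rho_i>\theta\lambda_{\FP}/\bigl(\lambda_{\FN}(1-\theta)+\theta\lambda_{\FP}\bigr)$, which is precisely \eqref{eq:cutoff_general}. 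I would note in passing that, because $\lambda_{\FN},\lambda_{\FP}>0$ and $\theta\in(0,1)$, this threshold lies in $(0,1)$, so it is a genuine probability cutoff.

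There is essentially no hard step here; the only point worth a sentence is the boundary case $\rho_i=\theta\lambda_{\FP}/\bigl(\lambda_{\FN}(1-\theta)+\theta\lambda_{\FP}\bigr)$, where the two assignments give identical posterior risk, so either is optimal and the strict inequality in the indicator is just one valid convention. I would also remark that, since the threshold does not depend on $\bm{y}$ or $\bm{c}$, the same rule minimizes $\hat R$ for every realization and therefore minimizes the Bayes risk $\overline{R}[\hat{\bm{x}};\lambda]=\sum_{\bm{y}}P_y(\bm{y}|\bm{c})\hat{R}(\hat{\bm{x}}(\bm{y},\bm{c});\lambda)$, so the stated estimator is indeed the Bayes estimator. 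As a sanity check I would point out the special case $\lambda_{\FN}(1-\theta)=\theta\lambda_{\FP}$, which gives cutoff $1/2$ and recovers the MPM estimator, while other choices of $\bm{\lambda}$ recover the Youden-index rule mentioned in the Introduction; this also reconfirms the adequacy of $\bm{\rho}$ as the diagnostic variable.
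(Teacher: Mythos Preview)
Your proposal is correct and follows essentially the same approach as the paper: both substitute the explicit forms of $\widehat{\FN}$ and $\widehat{\FP}$ into the posterior risk, observe that the resulting expression is separable in the $\hat{x}_i$, and minimize component-wise by comparing the two admissible values to obtain the threshold in \eqref{eq:cutoff_general}. Your additional remarks on the boundary case, the threshold lying in $(0,1)$, and the consistency checks with the MPM and Youden-index special cases are sound but go slightly beyond what the paper's proof records.
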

\begin{proof}
Transforming the R.H.S. of (\ref{eq:R_unbiased}), we obtain
\begin{align}
&\hat{R}(\hat{\bm{x}}(\bm{y},\bm{c});\bm{\lambda})\label{eq:R_unbiased2}\\
\nonumber
&=\frac{1}{N}\sum_{i=1}^N\Big\{
\frac{\lambda_{\FN}\rho_i(\bm{y},\bm{c})}{\theta}\\
\nonumber
&\hspace{1.0cm}+\hat{x}_i(\bm{y},\bm{c})
\left(\frac{\lambda_{\FP}(1-\rho_i(\bm{y},\bm{c}))}{1-\theta}
-\frac{\lambda_{\FN}\rho_i(\bm{y},\bm{c})}{\theta}\right)\Big\}.
\end{align}
The component-wise minimization of (\ref{eq:R_unbiased2}) with respect to $\hat{x}_i$
under the constraint $\hat{x}_i\in\{0,1\}$ leads to
(\ref{eq:cutoff_general}).
\end{proof}
In general, any function that maps $[0,1]$-continuous values to 
$\{0,1\}$-discrete values can be an estimator for the items' states, but 
(\ref{eq:cutoff_general}) indicates that the optimal estimator is defined by the cutoff
given by the prevalence $\theta$ and the parameters of the risk function
$\lambda_{\FN}$ and $\lambda_{\FP}$.
Furthermore, the form of (\ref{eq:cutoff_general}) indicates that 
the marginal posterior probability is appropriate for the 
evaluation of the test performance using AUC.

Let us consider the maximization of the Youden index
as an example of risk minimization.
The Youden index is expressed as follows:
\begin{align}
J_Y(\bm{x}^{(0)},\bm{y},\bm{c})=\TP(\bm{x}^{(0)},\bm{y},\bm{c})-\FP(\bm{x}^{(0)},\bm{y},\bm{c}).
\label{eq:Youden}
\end{align}
Hence,
$J_Y(\bm{x}^{(0)},\bm{y},\bm{c})=1-R(\bm{x}^{(0)},\bm{y},\bm{c};\lambda_{\FN}=\lambda_{\FP}=1\slash 2)$.
Thus, the maximization of the Youden index corresponds to 
equal reductions in the false negative and false positive.
By following Theorem \ref{thm:AUC_max},
we immediately obtain Corollary \ref{cor:Youden}
by substituting $\lambda=1\slash 2$ into 
(\ref{eq:cutoff_general}).
\begin{cor}
\label{cor:Youden}
The cutoff that equals the prevalence $\theta$
maximizes the posterior Youden index
given by (\ref{eq:R_unbiased}) for $\lambda_{\FN}=\lambda_{\FP}=1\slash 2$.
\end{cor}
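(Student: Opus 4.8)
The plan is to derive the corollary as a one-line specialization of Theorem~\ref{thm:AUC_max}, so the only substantive work is to check that maximizing the posterior Youden index over estimators $\hat{\bm x}(\bm y,\bm c)\in\{0,1\}^N$ is the \emph{same} optimization problem as minimizing the posterior risk $\hat R(\hat{\bm x}(\bm y,\bm c);\bm\lambda)$ at $\lambda_{\FN}=\lambda_{\FP}=1/2$. For this I would write the posterior true positive rate as $\widehat{\TP}(\hat{\bm x}(\bm y,\bm c))=\frac{1}{N\theta}\sum_{i=1}^N\rho_i(\bm y,\bm c)\hat x_i(\bm y,\bm c)$, so that $\widehat{\TP}=\frac{1}{N\theta}\sum_i\rho_i-\widehat{\FN}$; the posterior Youden index $\widehat{J}_Y=\widehat{\TP}-\widehat{\FP}$ then equals $\frac{1}{N\theta}\sum_i\rho_i-(\widehat{\FN}+\widehat{\FP})=\frac{1}{N\theta}\sum_i\rho_i-2\,\hat R(\hat{\bm x}(\bm y,\bm c);\lambda_{\FN}=\lambda_{\FP}=1/2)$ by (\ref{eq:R_unbiased}). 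The leading term does not depend on $\hat{\bm x}$ at fixed $\bm y,\bm c$ (it reduces to $1$ only after averaging over $\bm y$), so for every $\bm y,\bm c$ the maximizer of $\widehat{J}_Y$ coincides with the minimizer of the posterior risk at $\lambda_{\FN}=\lambda_{\FP}=1/2$ — the posterior counterpart of the relation between $J_Y$ and $R$ noted above (\ref{eq:Youden}).

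Given this reduction, Theorem~\ref{thm:AUC_max} immediately yields that the maximizer of the posterior Youden index is the cutoff rule (\ref{eq:cutoff_general}). Substituting $\lambda_{\FN}=\lambda_{\FP}=1/2$ into the threshold in (\ref{eq:cutoff_general}) gives
\[
\frac{\theta\lambda_{\FP}}{\lambda_{\FN}(1-\theta)+\theta\lambda_{\FP}}
=\frac{(1/2)\theta}{(1/2)(1-\theta)+(1/2)\theta}=\theta,
\]
which is the asserted cutoff. I would also remark that the same cancellation shows the threshold equals $\theta$ for \emph{any} common value $\lambda_{\FN}=\lambda_{\FP}$, so the precise normalization used in the Youden-to-risk correspondence (a factor of $2$) is immaterial to the conclusion.

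I do not expect a genuine obstacle here; the construction is essentially a substitution into Theorem~\ref{thm:AUC_max}. The one point that deserves care is the equivalence in the first paragraph: because $\widehat{\TP}+\widehat{\FN}=\frac{1}{N\theta}\sum_i\rho_i$ is not identically $1$ for a single realization of $\bm y$, one cannot simply replace $\widehat{J}_Y$ by $1-\widehat{\FN}-\widehat{\FP}$; instead one must observe that the extra term is constant in $\hat{\bm x}$ and hence drops out of the optimization, after which the component-wise minimization argument used in the proof of Theorem~\ref{thm:AUC_max} applies verbatim.
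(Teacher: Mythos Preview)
Your proposal is correct and follows essentially the same route as the paper: the corollary is obtained by substituting $\lambda_{\FN}=\lambda_{\FP}=1/2$ into the cutoff formula (\ref{eq:cutoff_general}) of Theorem~\ref{thm:AUC_max}. You are in fact slightly more careful than the paper on one point: the paper simply writes $J_Y=1-R(\,\cdot\,;\lambda_{\FN}=\lambda_{\FP}=1/2)$ at the level of the true-state quantities and invokes the theorem, whereas you correctly observe that at the posterior level $\widehat{\TP}+\widehat{\FN}=\frac{1}{N\theta}\sum_i\rho_i$ need not equal $1$ for a fixed $\bm y$, and that this constant-in-$\hat{\bm x}$ term drops out of the optimization so the reduction to Theorem~\ref{thm:AUC_max} still goes through.
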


Next, let us consider the MPM
estimator corresponding to the cutoff of 0.5. 
Following (\ref{eq:cutoff_general}),
0.5 is the optimal cutoff at $\lambda_{\FN}=\theta$ and $\lambda_{\FP}=1-\theta$,
where the risk is equivalent to the mean squared error
$\frac{1}{N}E_{\bm{x}^{(0)}}\left[E_{\bm{y}|\bm{x}^{(0)},\bm{c}}\left[\sum_{i=1}^N(x_i^{(0)}-
\hat{x}_i(\bm{y},\bm{c}))^2\right]\right]$.
This fact is consistent with previous studies 
in which the optimality of the MPM estimator is supported
in terms of the minimization of the expected mean squared error 
\cite{Marroquin,Rujan,Iba1999}.
The risk at $\lambda_{\FN}=\theta$ and $\lambda_{\FP}=1-\theta$ implies that 
the priority of the decision is determined by the prevalence $\theta$;
when $\theta<0.5$, the priority is to reduce false positives,  
and when $\theta>0.5$,
the priority is to reduce false negatives.
In other words, the use of the MPM estimator indicates that
the priority of the decision 
is to avoid identification errors in larger populations
of the non-defective and defective populations.
Group testing is effective when the prevalence is sufficiently small; 
hence, the usage of the MPM estimator in the group testing decreases
the false positives
rather than the false negatives. 
If we need to reduce the 
false negative rate in group testing rather than the false positive rate,
such as a test where a low false positive probability is considered,
the usage of the MPM estimator may not achieve the purpose; hence,
the setting of the appropriate cutoff under the risk is important.

\subsection{Optimal cutoff and Bayes factor}

The expression of the estimator with the appropriate cutoff \eqref{eq:cutoff_general} 
has correspondence with the Bayes factor \cite{Goodman}.
The Bayes factor is defined as the ratio of the marginal likelihoods
of two competing models.
Here, we focus on the $i$-th item and 
consider $x_i^{(0)}=0$ and $x_i^{(0)}=1$ as the competing `models.'
We denote the Bayes factor for the $i$-th item
$\mathrm{BF}_i^{10}$ and define it as \cite{Jeffreys,Good1985}
\begin{align}
\mathrm{BF}_i^{10}(\bm{y},\bm{c})=\frac{\tilde{f}_i(\bm{y}|\bm{c},x_i=1)}{\tilde{f}_i(\bm{y}|\bm{c},x_i=0)},
\label{eq:def_BF}
\end{align}
where 
$\tilde{f}_i(\bm{y}|\bm{c},x_i)=\frac{1}{P_y(\bm{y}|\bm{c})}\sum_{\bm{x}\backslash x_i}f(\bm{y}|\bm{c},\bm{x})\prod_{j\neq i}\phi(x_j)$
is the marginalized likelihood under the constraint on $x_i$.
The Bayes factor can be expressed using the 
posterior odds $O_i^{\mathrm{post}}(\bm{y},\bm{c})$ and the prior odds 
$O_i^{\mathrm{pri}}(\bm{y},\bm{c})$ 
as 
\begin{align}
\mathrm{BF}_i^{10}(\bm{y},\bm{c})=\frac{O_i^{\mathrm{post}}(\bm{y},\bm{c})}{O_i^{\mathrm{pri}}(\bm{y},\bm{c})},
\end{align}
where
\begin{align}
O_i^{\mathrm{post}}(\bm{y},\bm{c})&=\frac{\rho_i(\bm{y},\bm{c})}{1-\rho_i(\bm{y},\bm{c})}\label{eq:post_odds}\\
O_i^{\mathrm{pri}}(\bm{y},\bm{c})&=\frac{\theta}{1-\theta}.
\label{eq:pri_odds}
\end{align}
Based on the expressions \eqref{eq:post_odds} and \eqref{eq:pri_odds}, 
the optimal estimator with the appropriate cutoff 
\eqref{eq:cutoff_general} is expressed using the Bayes factor as follows:
\begin{align}
\hat{x}_i(\bm{y},\bm{c})=\mathbb{I}\left(\mathrm{BF}^{10}_i(\bm{y},\bm{c})>\frac{\lambda_{\FP}}{\lambda_{\FN}}\right).
\label{eq:BF_decision}
\end{align}
In particular, 
in the maximization of the expected Youden index,
which corresponds to $\lambda_{\FP}=\lambda_{\FN}$,
the $i$th item is considered defective when
$\mathrm{BF}^{10}_i>1$,
and considered non-defective when 
$\mathrm{BF}^{10}_i<1$.
Following the conventional interpretation of the Bayes factor,
$\mathrm{BF}^{10}_i=1$ indicates that the 
evidence against $x_i^{(0)}=0$ is `not worth more than a bare mention' \cite{Jeffreys,Kass_Raftery}.
Hence, the maximization of the posterior Youden index
provides a loose criterion for deciding $x_i^{(0)}=1$.
Meanwhile, the MPM estimator, which corresponds to $\lambda_{\FP}=1-\theta$
and $\lambda_{\FN}=\theta$ with a small prevalence provides a strict criterion
for deciding $x_i^{(0)}=1$.
For instance, at $\theta=0.01$, 
the $i$-th item is regarded as defective when $\mathrm{BF}^{10}_i>99$.
In the conventional interpretation, $\mathrm{BF}^{10}_i=99$ indicates that 
the evidence against $x_i^{(0)}=0$ is `Strong' \cite{Jeffreys}
or `Very Strong' \cite{Kass_Raftery}.
Hence, the usage of the MPM estimator at small values of $\theta$ 
indicates that strong evidence is required 
to identify the defective items.

As explained in Sec.\ref{sec:BP},
in the BP algorithm,
the Bayes factor can be expressed by using the 
probabilities appearing in the algorithm.

\section{ROC Analysis by replica method}
\label{sec:Replica}

As discussed in the previous section,
Bayesian optimal setting maximizes the 
expected AUC and the expected risk.
In this section, we evaluate the expected AUC under the Bayesian optimal setting.
The procedure explained herein has been introduced for the 
analysis of error-correcting codes such as low-density-parity-check codes \cite{Kabashima_LDPC}
and compressed sensing \cite{Kabashima2009}, 
which have mathematical similarities with the group testing.
For deriving the ROC curve and the associated AUC, 
we need to obtain the distributions of the
marginal posterior probability of the 
non-defective items $P_\rho^-(\rho|\bm{y},\bm{x}^{(0)},\bm{c})$ and the 
defective items $P_\rho^+(\rho|\bm{y},\bm{x}^{(0)},\bm{c})$,
which are defined as
\begin{align}
P_\rho^-(\rho|\bm{y},\bm{x}^{(0)},\bm{c})&=\frac{1}{N(1-\theta)}\sum_{i=1}^N(1-x_i^{(0)})\delta\left(\rho-\rho_i\left(\bm{y},\bm{c}\right)\right)\\
P_\rho^+(\rho|\bm{y},\bm{x}^{(0)},\bm{c})&=\frac{1}{N\theta}\sum_{i=1}^Nx_i^{(0)}\delta\left(\rho-\rho_i(\bm{y},\bm{c})\right).
\end{align}
Assuming that the distributions under the fixed
test results $\bm{y}$, pooling methods $\bm{c}$,
and items' states $\bm{x}^{(0)}$ 
converge to the typical distribution at sufficiently large values of $N$,
we consider 
the averaged distribution functions
\begin{align}
P_\rho^-(\rho)&=\frac{1}{N(1-\theta)}\sum_{i=1}^NE_{\bm{y},\bm{c},\bm{x}^{(0)}}\left[(1-x_i^{(0)})\delta\left(\rho-\rho_i\right)\right],\label{eq:P_rho^-}\\
P_\rho^+(\rho)&=\frac{1}{N\theta}\sum_{i=1}^NE_{\bm{y},\bm{c},\bm{x}^{(0)}}\left[x_i^{(0)}\delta\left(\rho-\rho_i\right)\right],\label{eq:P_rho^+}
\end{align}
where $E_{\bm{y},\bm{c},\bm{x}^{(0)}}[\cdot]$
denotes the expectation with respect to the randomness
($\bm{y},\bm{c},\bm{x}^{(0)}$), whose joint distribution is given by
\begin{align}
P_{\mathrm{rand}}(\bm{y},\bm{c},\bm{x}^{(0)})&\!=\!\frac{1}{{\cal D}}
\prod_{i=1}^N\delta\!\left(\!\sum_{\mu\in{\cal G}(i)}c_\mu,C\!\right)
\!\!f(\bm{y}|\bm{c},\bm{x}^{(0)})\phi(\bm{x}^{(0)}).
\label{eq:randomness}
\end{align}
Here, ${\cal G}(i)$ is the set of pool indices 
where $i$-th item is contained,
and ${\cal D}$ is the normalization constant.

For further calculation, 
we introduce the integral representation of the delta function in 
(\ref{eq:P_rho^-}) and (\ref{eq:P_rho^+}).
\begin{align}
\nonumber
\delta\left(\rho-\rho_i(\bm{y},\bm{c})\right)&=\int d\hat{\rho}\exp\left\{-\hat{\rho}\left(\rho-\rho_i(\bm{y},\bm{c})\right)\right\}\\
&=\int d\hat{\rho}e^{-\hat{\rho}\rho}\sum_{k=0}^\infty\frac{1}{k!}\left(\hat{\rho}\rho_i(\bm{y},\bm{c})\right)^k.
\end{align}
We define the conditional expectation of the $k$-th power of the posterior 
marginal probability as follows:
\begin{align}
m_{ik}^-&=E_{\bm{y},\bm{c},\bm{x}^{(0)}}\left[(1-x_i^{(0)})\rho_i^k(\bm{y},\bm{c})\right],\\
m_{ik}^+&=E_{\bm{y},\bm{c},\bm{x}^{(0)}}\left[x_i^{(0)}\rho_i^k(\bm{y},\bm{c})\right].
\label{eq:power}
\end{align}
Using \eqref{eq:power}, the distribution of the marginal posterior probability is given by
\begin{align}
P_\rho^+(\rho)&=\frac{1}{N\theta}\sum_{i=1}^N\int d\hat{\rho}~e^{-\hat{\rho}\rho}\sum_{k=0}^\infty \frac{\hat{\rho}^k}{k!}m_{ik}^+.\label{eq:P_+_def}\\
P_\rho^-(\rho)&=\frac{1}{N(1-\theta)}\sum_{i=1}^N\int d\hat{\rho}~e^{-\hat{\rho}\rho}\sum_{k=0}^\infty \frac{\hat{\rho}^k}{k!}m_{ik}^-.\label{eq:P_-_def}
\end{align}
The strategy for obtaining the distributions is based on the 
reconstruction of the distribution by the moments 
$m_{ik}^+$, $m_{ik}^-$ for $i\in\{1,\cdots,N\}$, $k\in\{0,1,\cdots,\infty\}$,
as shown in (\ref{eq:P_+_def})--(\ref{eq:P_-_def}).
The calculation methods for $m_{ik}^+$ and $m_{ik}^-$ are the same; 
Hence, we mainly explain the calculation of $m_{ik}^+$.

For the expectation with respect to the randomness,
we introduce the following identity that holds for $k\in\mathbb{N}$:
\begin{align}
\nonumber
&E_{\bm{x}|\bm{y},\bm{c}}[g(\bm{x})]^k=\lim_{n\to 0}P_y^n(\bm{y}|\bm{c})E_{\bm{x}|\bm{y},\bm{c}}[g(\bm{x})]^k\\
&=\lim_{n\to 0}P_y^{n-k}(\bm{y}|\bm{c})
\prod_{\kappa=1}^k\sum_{\bm{x}^{(\kappa)}}g(\bm{x}^{(\kappa)})f(\bm{y}|\bm{x}^{(\kappa)},\bm{c})\phi(\bm{x}^{(\kappa)}),\label{eq:k_power}
\end{align}
where we express the $k$-th power by introducing 
$k$-replicated systems $\bm{x}^{(\kappa)}\in\{0,1\}^N,~\kappa\in\{1,\cdots,k\}$.
Using the identity (\ref{eq:k_power}), we obtain 
\begin{align}
m_{ik}^+&=\lim_{n\to 0}{\cal M}_{ik}^+(n)\label{eq:m_+}\\
\nonumber
{\cal M}_{ik}^+(n)&=
E_{\bm{y},\bm{c},\bm{x}^{(0)}}\Big[P_y^{n-k}(\bm{y}|\bm{c})x_i^{(0)}\\
&\hspace{1.0cm}\times\prod_{\kappa=1}^k\sum_{\bm{x}^{(\kappa)}}x_i^{(\kappa)}f(\bm{y}|\bm{c},\bm{x}^{(\kappa)})\phi(\bm{x}^{(\kappa)})
\Big].
\label{eq:M_+}
\end{align}
We introduce a calculation method known as the replica method for the evaluation of 
(\ref{eq:M_+}).
First, assume that $n\in\mathbb{N}$ and $n>k$; 
hence, $n-k\in\mathbb{N}$. We obtain the following expressions:
\begin{align}
\nonumber
{\cal M}_{ik}^+(n)&=E_{y,c,x^{(0)}}\Big[x_i^{(0)}\sum_{\bm{x}^{(1)}}\cdots\sum_{\bm{x}^{(n)}}x_i^{(1)}x_i^{(2)}\cdots x_i^{(k)}\\
&\hspace{2.0cm}\times
\prod_{a=1}^nf(\bm{y}|\bm{c},\bm{x}^{(a)})\phi(\bm{x}^{(a)})\Big]\label{eq:replica_1}\\
\nonumber
&=\sum_{\bm{y},\bm{c}}\sum_{\bm{x}^{(0)}}\sum_{\bm{x}^{(1)}}\cdots\sum_{\bm{x}^{(n)}}x_i^{(0)}x_i^{(1)}x_i^{(2)}\cdots x_i^{(k)}\\
&\hspace{2.0cm}\times\prod_{a=0}^nf(\bm{y}|\bm{c},\bm{x}^{(a)})\phi(\bm{x}^{(a)}),\label{eq:M_+2}
\end{align}
where $P_y^{n-k}(\bm{y}|\bm{c})$ is expressed using the $n-k$-replicas
$\{\bm{x}^{(k+1)},\cdots,\bm{x}^{(n)}\}$ in (\ref{eq:replica_1}). 
We combine $\bm{x}^{(0)}$ with other replica variables; hence, (\ref{eq:M_+2}) is represented by the $n+1$-replica variables 
$\bm{x}^{(0)},\bm{x}^{(1)},\cdots,\bm{x}^{(n)}$.
The analytical expression of (\ref{eq:M_+2}) for the integer $n$
is analytically continued to the real values of $n$ to take the limit $n\to 0$
in (\ref{eq:m_+}).
The detailed calculation is presented in the Appendix; here, 
we briefly explain the basic approach for the calculations.

We introduce an (unnormalized) probability mass function 
${\cal Q}_n(\tilde{\bm{x}}_i)$,
where $\tilde{\bm{x}_i}=\{x_i^{(0)},x_i^{(1)},\cdots,x_i^{(n)}\}\in\{0,1\}^{n+1}$ is the 
vector consisting of the replica variables.
Here, it is noted that 
both $\{\bm{x}^{(0)},\bm{x}^{(1)},\ldots,\bm{x}^{(n)}\}$ and 
$\{\tilde{\bm{x}}_1,\ldots,\tilde{\bm{x}}_N\}$ represent the $n+1$-replica variables.
In the replica method, the latter expression is used in the analysis.
As shown in the Appendix, 
\eqref{eq:M_+2} depends on the replica variables only through the function ${\cal Q}_n(\tilde{\bm{x}})$,
whose value 
needs to be determined to be consistent with the weight
$\prod_{a=0}^nf(\bm{y}|\bm{c},\bm{x}^{(a)})\phi(\bm{x}^{(a)})$ in (\ref{eq:M_+2}) 
for each configuration of $\{\tilde{\bm{x}}_i\}$.
Furthermore, for analytic continuation,
we introduce an assumption known as the replica symmetric (RS) assumption,
in which the function ${\cal Q}_n(\tilde{\bm{x}})$
is invariant against the permutation of the
indices of the replica variables except $x^{(0)}$.
In this case, 
the probability mass function can be described using the 
Bernoulli parameters because $x_i^{(a)}\in\{0,1\}$ is a binary variable.
For the invariance of the replica indices,
the Bernoulli parameter needs to be equivalent for all replicas.
Here, we set the Bernoulli parameter as $\mu$.
It is natural to consider that the Bernoulli parameter 
depends on the true state $x^{(0)}$.
This consideration and de Finetti's theorem \cite{deFinetti}
indicate that ${\cal Q}_n(\tilde{x})$ can be expressed in the form of 
\begin{align}
\nonumber
{\cal Q}_n(\tilde{\bm{x}})&=Q_np_n(x^{(0)})\int d\mu\pi(\mu|x^{(0)})\\
&\times\prod_{a=1}^n\left\{(1-\mu)(1-x^{(a)})+\mu x^{(a)}\right\}.
\label{eq:Q_RS}
\end{align}
Here, $\int d\mu\pi(\mu|x^{(0)})=1$
is satisfied for $x^{(0)}\in\{0,1\}$, and 
\begin{align}
p_n(x^{(0)})&=(1-\rho_n)(1-x^{(0)})+\rho_nx^{(0)},
\end{align}
where $\rho_n\in[0,1]$.
In the RS assumption, the distributions $\pi(\mu|x^{(0)})$,
$\rho_n$, $Q_n$ need to be determined
to be consistent with the weight $\prod_{a=0}^nf(\bm{y}|\bm{c},\bm{x}^{(a)})\phi(\bm{x}^{(a)})$.
This RS assumption and the associated analytic continuation
may cause instability of the solution;
hence, we need to check the adequacy of our analysis. 
In Sec.\ref{sec:Replica_vs_BP}, the result of the replica method is compared with that of
the BP algorithm,
and we consider the analysis shown here as adequate for the  
performance evaluation of the group testing.



Following the calculation shown in the Appendix, 
the distributions of the marginal posterior probability for the defective and 
non-defective items are 
given by
\begin{align}
P_\rho^+(\rho)&=\int \prod_{\gamma=1}^Cd\hat{\mu}_\gamma
\hat{\pi}(\hat{\mu}_\gamma|1)\delta\left(\rho-\mu(\hat{\bm{\mu}}_{(C)},\theta)\right),
\label{eq:P_+_RS_fin}\\
P_\rho^-(\rho)&=\int \prod_{\gamma=1}^Cd\hat{\mu}_\gamma
\hat{\pi}(\hat{\mu}_\gamma|0)\delta\left(\rho-\mu(\hat{\bm{\mu}}_{(C)},\theta)\right),
\label{eq:P_-_RS_fin}
\end{align}
where $\hat{\bm{\mu}}_{(C)}=[\hat{\mu}_1,\cdots,\hat{\mu}_{C}]^{\mathrm{T}}$, and
\begin{align}
\mu(\hat{\bm{\mu}}_{(C)},{\theta})&=\frac{{\theta}\prod_{\gamma=1}^{C}\hat{\mu}_\gamma}{(1-{\theta})\prod_{\gamma=1}^{C}(1-\hat{\mu}_\gamma)+{\theta}\prod_{\gamma=1}^{C}\hat{\mu}_\gamma}.
\label{eq:mu_replica}
\end{align}
The function $\hat{\pi}(\hat{\mu}|x^{(0)})$ 
is the conjugate of the function $\pi$ that satisfies $\int d\hat{\mu}\hat{\pi}(\hat{\mu}|x^{(0)})=1$ for $x^{(0)}\in\{0,1\}$.
As shown in the Appendix, $\hat{\pi}(\hat{\mu}|x^{(0)})$ for $x^{(0)}=1$ 
and $x^{(0)}=0$ are derived as
\begin{align}
\hat{\pi}(\hat{\mu}|1)&=\int\prod_{k=1}^{K-1}d\mu_k\sum_{u_k^{(0)}}\phi(u_k^{(0)})\pi(\mu_k|u_k^{(0)})\label{eq:pi_hat_+}\\
\nonumber
&\times\Big[p_{\mathrm{TP}}\delta\Big(\hat{\mu}-\hat{\mu}(p_{\mathrm{TP}},p_{\mathrm{FP}},\bm{\mu}_{(K-1)})\Big)\\
\nonumber
&\hspace{0.5cm}+(1-p_{\mathrm{TP}})\delta
\Big(\hat{\mu}-\hat{\mu}(1-p_{\mathrm{TP}},1-p_{\mathrm{FP}},\bm{\mu}_{(K-1)})\Big)\Big]\\
\nonumber
\hat{\pi}(\hat{\mu}|0)&=
\hat{\pi}(\hat{\mu}|1)\!-\!\int\prod_{k=1}^{K-1}d\mu_k\pi(\mu_k|0)(1\!-\!\theta)^{K-1}(p_{\mathrm{TP}}\!-\!p_{\mathrm{FP}})\\
\nonumber
&\times\Big[\delta\left(\hat{\mu}-\hat{\mu}
(p_{\mathrm{TP}},p_{\mathrm{FP}},\bm{\mu}_{(K-1)})\right)\\
&\hspace{0.5cm}-\delta\Big(\hat{\mu}-\hat{\mu}(1-p_{\mathrm{TP}},1-p_{\mathrm{FP}},\bm{\mu}_{(K-1)})\Big)\Big],\label{eq:pi_hat_-}
\end{align}
respectively, 
where $\bm{\mu}_{(K-1)}=[\mu_1,\cdots,\mu_{K-1}]^{\mathrm{T}}$
and 
\begin{align}
\nonumber
&\hat{\mu}(p_{\mathrm{TP}},p_{\mathrm{FP}},\bm{\mu}_{(K-1)})\\
&=\frac{p_{\mathrm{TP}}}{p_{\mathrm{TP}}+p_{\mathrm{TP}}(1-q(\bm{\mu}_{(K-1)}))+p_{\mathrm{FP}}q(\bm{\mu}_{(K-1)})}.
\label{eq:tilde_mu_def}
\end{align}
Here, we set $q(\bm{\mu}_{(K)})\equiv\prod_{\ell=1}^K(1-\mu_\ell)$.
Using the conjugate distribution $\hat{\pi}(\hat{\mu}|x^{(0)})$,
the distribution $\pi(\mu|x^{(0)})$ is given by
\begin{align}
\pi(\mu|x^{(0)})&\!=\!\int\!\prod_{\gamma=1}^{C-1}d\hat{\mu}_\gamma\hat{\pi}(\hat{\mu}_\gamma|x^{(0)})\delta\left(\mu\!-\!\mu(\hat{\bm{\mu}}_{(C-1)},{\theta})\right).\label{eq:pi}
\end{align}
We emphasize that in deriving the distributions,
we do not apply any knowledge
about which items are defective or non-defective.

\subsection{Numerical calculation of the distributions by population dynamics}
\label{sec:Population}

To obtain the distributions
(\ref{eq:P_+_RS_fin})--(\ref{eq:P_-_RS_fin}),
we need to calculate the distributions
$\hat{\pi}(\hat{\mu}_\gamma|x^{(0)})$ and ${\pi}(\mu_k|x^{(0)})$
to satisfy (\ref{eq:pi_hat_+}), (\ref{eq:pi_hat_-}), and (\ref{eq:pi}).
Here,  we numerically obtain these distributions using 
a sampling method known as population dynamics (PD) \cite{Mezard-Parisi};
the procedure is shown in Algorithm \ref{alg:Population}.
PD considers the recursive updating of (\ref{eq:pi_hat_+}), (\ref{eq:pi_hat_-}), and (\ref{eq:pi})
as 
\begin{align}
&\hat{\pi}^{(t)}(\hat{\mu}|1)=\int\prod_{k=1}^{K-1}d\mu_k\sum_{u_k}\phi(u_k)
\pi^{(t-1)}(\mu_k|u_k)\label{eq:PD_pi_hat_+}\\
\nonumber
&\hspace{0.5cm}\times\Big[p_{\mathrm{TP}}\delta\Big(\hat{\mu}-\hat{\mu}(p_{\mathrm{TP}},p_{\mathrm{FP}},\bm{\mu}_{(K-1)})\Big)\\
\nonumber
&\hspace{0.7cm}+(1-p_{\mathrm{TP}})\delta
\Big(\hat{\mu}-\hat{\mu}(1-p_{\mathrm{TP}},1-p_{\mathrm{FP}},\bm{\mu}_{(K-1)})\Big)\Big]\\
&\hat{\pi}^{(t)}(\hat{\mu}|0)=\hat{\pi}^{(t)}(\hat{\mu}|1)\label{eq:PD_pi_hat_-}\\
\nonumber
&-\int\prod_{k=1}^{K-1}d\mu_k\pi^{(t-1)}(\mu_k|0)(1-\theta)^{K-1}(p_{\mathrm{TP}}-p_{\mathrm{FP}})\\
\nonumber
&\hspace{0.5cm}\times\Big[\delta\left(\hat{\mu}-\hat{\mu}
(p_{\mathrm{TP}},p_{\mathrm{FP}},\bm{\mu}_{(K-1)})\right)\\
\nonumber
&\hspace{1cm}-\delta\Big(\hat{\mu}-\hat{\mu}(1-p_{\mathrm{TP}},1-p_{\mathrm{FP}},\bm{\mu}_{(K-1)})\Big)\Big]\\
\nonumber
&\pi^{(t)}(\mu|x)=\int\prod_{\gamma=1}^{C-1}d\hat{\mu}_\gamma\hat{\pi}^{(t-1)}(\hat{\mu}_\gamma|x)\delta\left(\mu-\mu(\hat{\bm{\mu}}_{(C-1)},{\theta})\right)\\
&\hspace{6.0cm}(x\in\{0,1\}),\label{eq:PD_pi}
\end{align}
where $t$ in the superscript denotes the iteration step.
For updating the distributions, we prepare four populations of random variables
$\bm{\pi}^-$, $\bm{\pi}^+$,
$\hat{\bm{\pi}}^-$, and $\hat{\bm{\pi}}^+$,
corresponding to the four distributions $\pi(\mu|0)$, $\pi(\mu|1)$, 
$\hat{\pi}(\hat{\mu}|0)$, and $\hat{\pi}(\hat{\mu}|1)$, respectively.
The number of components in the populations 
are set to be a sufficiently large value, which we denote as $N_\pi$.
For instance, let us consider the population dynamics corresponding to 
\eqref{eq:PD_pi}.
In updating the population $\bm{\pi}^\pm$,
we select $C-1$ components from the population $\hat{\bm{\pi}}^\pm$;
this operation mimics $C-1$-times-sampling according to $\hat{\pi}(\hat{\mu}|0)$ or 
$\hat{\pi}(\hat{\mu}|1)$.
Using the selected $C-1$-components, we compute $\mu(\hat{\bm{\mu}}_{(C-1)},\theta)$,
and replace a randomly chosen component in $\bm{\mu}^\pm$
with $\mu(\hat{\bm{\mu}}_{(C-1)},\theta)$.
The same procedure is repeated
for updating (\ref{eq:pi_hat_+}) and (\ref{eq:pi_hat_-}).
After updating sufficient time steps $T$
according to (\ref{eq:pi}), (\ref{eq:pi_hat_+}), and (\ref{eq:pi_hat_-}),
we approximately calculate the distributions
(\ref{eq:P_+_RS_fin}) and (\ref{eq:P_-_RS_fin})
by sampling the components of the populations $\hat{\bm{\mu}}^\pm$
a sufficient number of times $R$.

\begin{algorithm}[t]
\caption{Population dynamics for evaluation of $P_\rho^\pm(\rho)$}
\label{alg:Population}
\begin{algorithmic}[1]
\Require {$p_{\mathrm{TP}}$,  $p_{\mathrm{FP}}$, 
$\theta$, $K$, $C$, $N_\pi$, $T$, $R$}
\Ensure {$P_\rho^\pm(\rho)$ for $x\in\{0,1\}$}
\Initialize{$\bm{\pi}^\pm,~\hat{\bm{\pi}}^\pm\gets$ Initial values from $[0,1]^{N_\pi}$}
    \For{$t = 1 \, \ldots \, T$}\Comment{Start: Evaluate cavity distributions}
        \For {$\gamma = 1 \, \ldots \, C-1$}
            \State{$i^+\sim[1,N_\pi]$ and $\hat{\mu}_\gamma^+\gets\hat{\pi}_{i^+}^+$}
            \State{$i^-\sim[1,N_\pi]$ and $\hat{\mu}_\gamma^-\gets\hat{\pi}_{i^-}^+$}
            \EndFor
            \State{\textbf{end for}}
           \State{$\gamma^+\sim[1,N_\pi]$ and $\pi_{\gamma^+}^+\gets \mu(\hat{\bm{\mu}}^+_{(C-1)},\theta)$}
           \State{$\gamma^-\sim[1,N_\pi]$ and $\pi_{\gamma^-}^-\gets \mu(\hat{\bm{\mu}}^-_{(C-1)},\theta)$}
            \For {$\ell = 1 \, \ldots \, K-1$}
           \State{$j^\pm\sim[1,N_\pi],~b^\pm\sim[0,1]$ }
           \State{$\mu_\ell^+\gets \pi_j^+\mathbb{I}(b^+\leq\theta)+\pi_j^-\mathbb{I}(b^+>\theta)$}
                        \State{$u_\ell^-\gets\mathbb{I}(b^-\leq\theta)$ and $\mu_\ell^-\gets \pi_j^+u^-_\ell+\pi_j^-(1-u^-_\ell)$}
            \EndFor
            \State{\textbf{end for}}
            \State{$\ell^\pm\sim [1,N_\pi],~\tau^\pm\sim[0,1]$}
            \If{$\tau^+\leq p_{\mathrm{TP}}$}
            \State{$\hat{\pi}_{\ell^+}\gets \hat{\mu}(p_{\mathrm{TP}},p_{\mathrm{FP}},\bm{\mu}^+_{(K-1)})$}
            \Else
            \State{$\hat{\pi}_{\ell^+}\gets \hat{\mu}(1-p_{\mathrm{TP}},1-p_{\mathrm{FP}},\bm{\mu}_{(K-1)}^+)$}
\EndIf
\State{\textbf{end if}}
            \State{$v\gets (p_{\mathrm{FP}}-p_{\mathrm{TP}})\mathbb{I}\left(\sum_{\ell=1}^{K-1}u_\ell^-=0\right)+p_{\mathrm{TP}}$}
            \If{$\tau^-\leq v$}
            \State{$\hat{\pi}_{\ell^-}\gets \hat{\mu}(p_{\mathrm{TP}},p_{\mathrm{FP}},\bm{\mu}^-_{(K-1)})$}
            \Else
            \State{$\hat{\pi}_{\ell^-}\gets\hat{\mu}(1-p_{\mathrm{TP}},1-p_{\mathrm{FP}},\bm{\mu}^-_{(K-1)})$}
            \EndIf
            \State{\textbf{end if}}
    \EndFor
    \State{\textbf{end for}}
    \For{$r = 1 \, \ldots \, R$}\Comment{Start: Evaluation of $P_\rho^\pm(\rho)$}
    \For{$\gamma=1 \, \ldots \,C$}
    \State{$i^+\sim[1,N_\pi]$ and $\hat{\mu}_\gamma^+\gets \pi_{i^+}^+$}
    \State{$i^-\sim[1,N_\pi]$ and $\hat{\mu}_\gamma^-\gets \pi_{i^-}^-$}
    \EndFor
    \State{\textbf{end for}}
    \State{$P_r^\pm\gets \mu(\hat{\bm{\mu}}^\pm_{(C)},\theta)$}
    \EndFor
    \State{\textbf{end for}}
   \State{$P_\rho^\pm(\rho)\gets$ Histogram of $\{P_r^\pm\}$}
\end{algorithmic}
\end{algorithm}

\begin{figure*}
\begin{minipage}{0.495\hsize}
\centering
\includegraphics[width=3in]{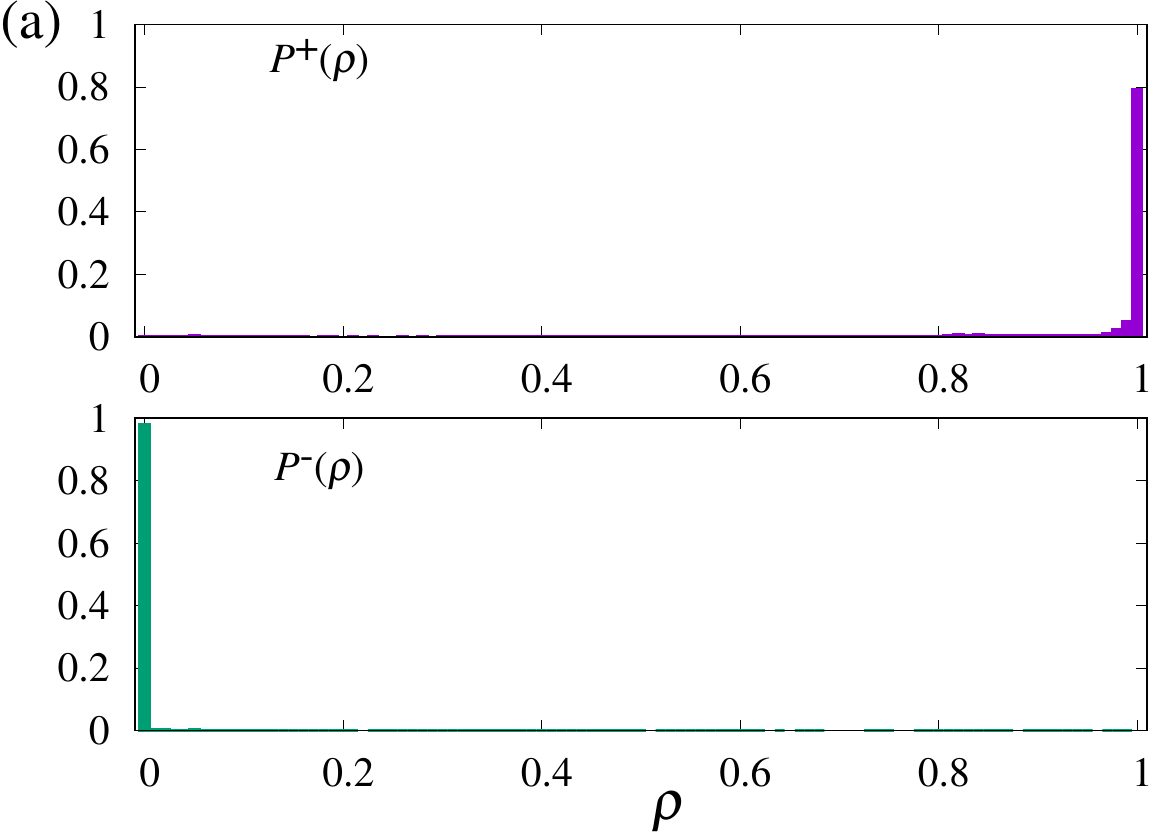}
\end{minipage} 
\begin{minipage}{0.495\hsize}
\centering
\includegraphics[width=3in]{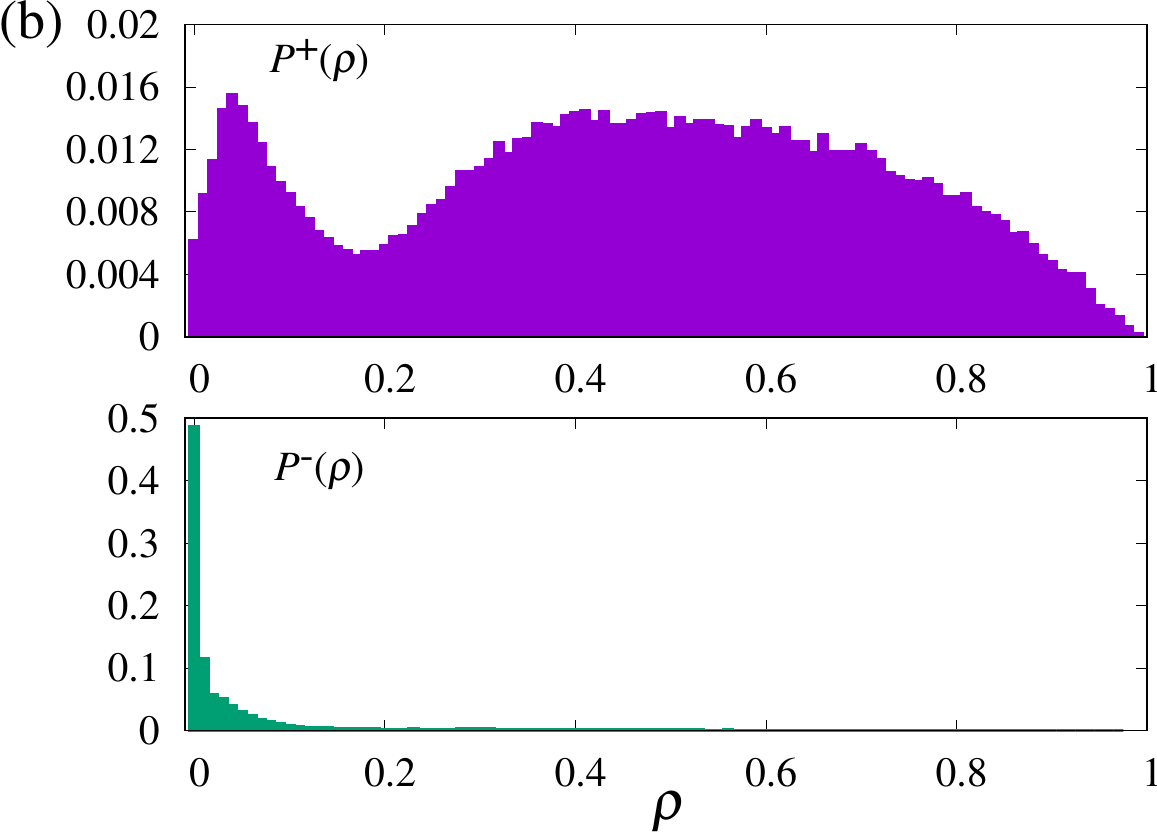}
\end{minipage}
\caption{Examples of $P_\rho^\pm(\rho)$.
(a) $\alpha=0.5$, $K=10$, $\theta=0.05$, $p_{\mathrm{TP}}=0.98$, and $p_{\mathrm{FP}}=0.01$,
and (b) $\alpha=0.5$, $K=10$, $\theta=0.1$, $p_{\mathrm{TP}}=0.9$, and $p_{\mathrm{FP}}=0.05$.}
\label{fig:dists}
\end{figure*}

\begin{figure}
\begin{minipage}{0.495\hsize}
\centering
\includegraphics[width=1.7in]{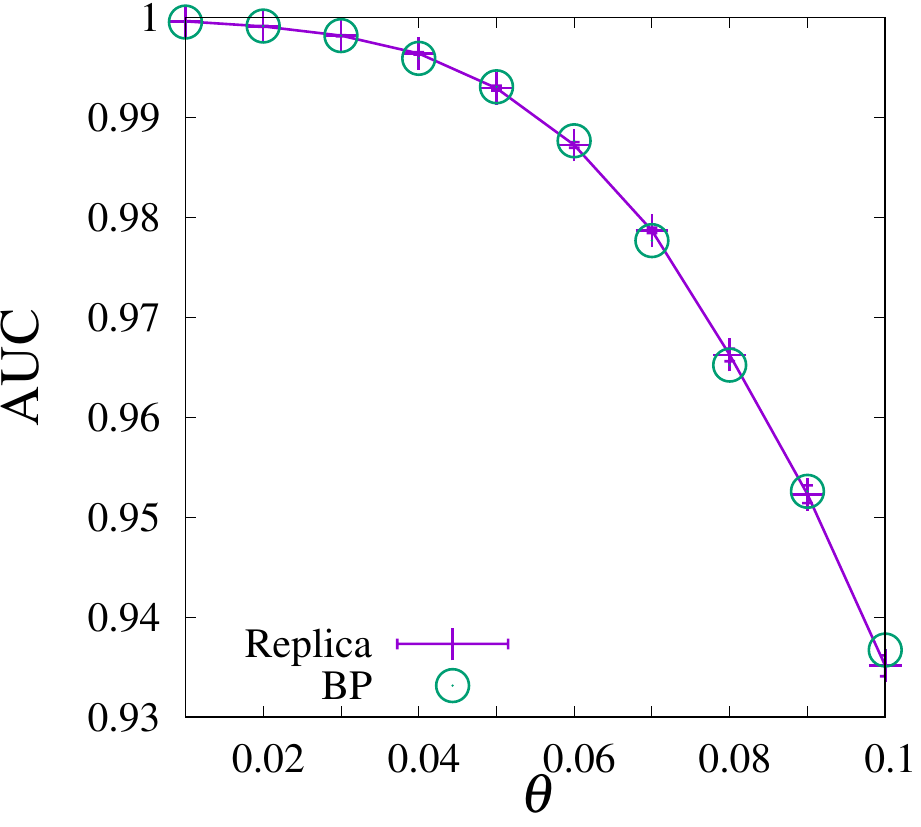}
\end{minipage}
\begin{minipage}{0.495\hsize}
\centering
\includegraphics[width=1.7in]{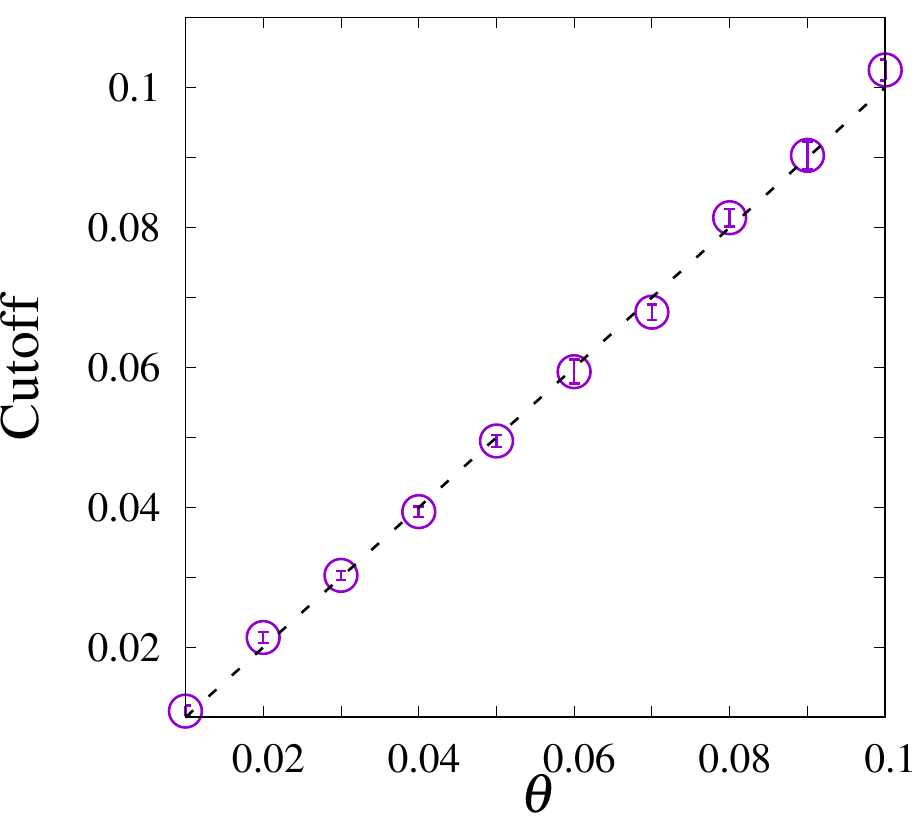}
\end{minipage}
\caption{(a) Comparison of AUCs evaluated by the replica method and BP algorithm for
$\alpha=0.5$, $K=10$, $p_{\mathrm{TP}}=0.95$, $p_{\mathrm{FP}}=0.1$.
The results of the BP algorithm are averaged over 100 samples at $N=1000$,
and the results of the replica method are averaged over 10 random number sequences
in the population dynamics.
(b) Prevalence dependence of the cutoff that maximizes the expected Youden index
at $\alpha=0.5$, $K=10$, $p_{\mathrm{TP}}=0.95$, $p_{\mathrm{FP}}=0.1$.
The slope of the dotted line is 1.}
\label{fig:AUC_and_Cutoff}
\end{figure}

Fig. \ref{fig:dists} shows the examples of $P_\rho^\pm(\rho)$ calculated by the population dynamics,
where the parameters are set as $N_\pi=10^4$, $T=10^7$, and $R=10^5$.
Distribution (a) is for $\alpha=0.5$, $\theta=0.05$, $p_{\mathrm{TP}}=0.98$, and 
$p_{\mathrm{FP}}=0.01$,
which is an example for small prevalence and small error probabilities,
and (b) is for $\alpha=0.5$, $\theta=0.1$, $p_{\mathrm{TP}}=0.9$, and $p_{\mathrm{FP}}=0.05$,
which is an example for relatively large prevalence and large error probabilities.
In the case of (a),
the peaks of the two distributions are far apart, 
and the two populations can be separated with high accuracy.
Meanwhile,
in case (b), the posterior probability for the defective items is widely distributed.

\begin{figure}
\begin{minipage}{0.495\hsize}
\centering
\includegraphics[width=1.75in]{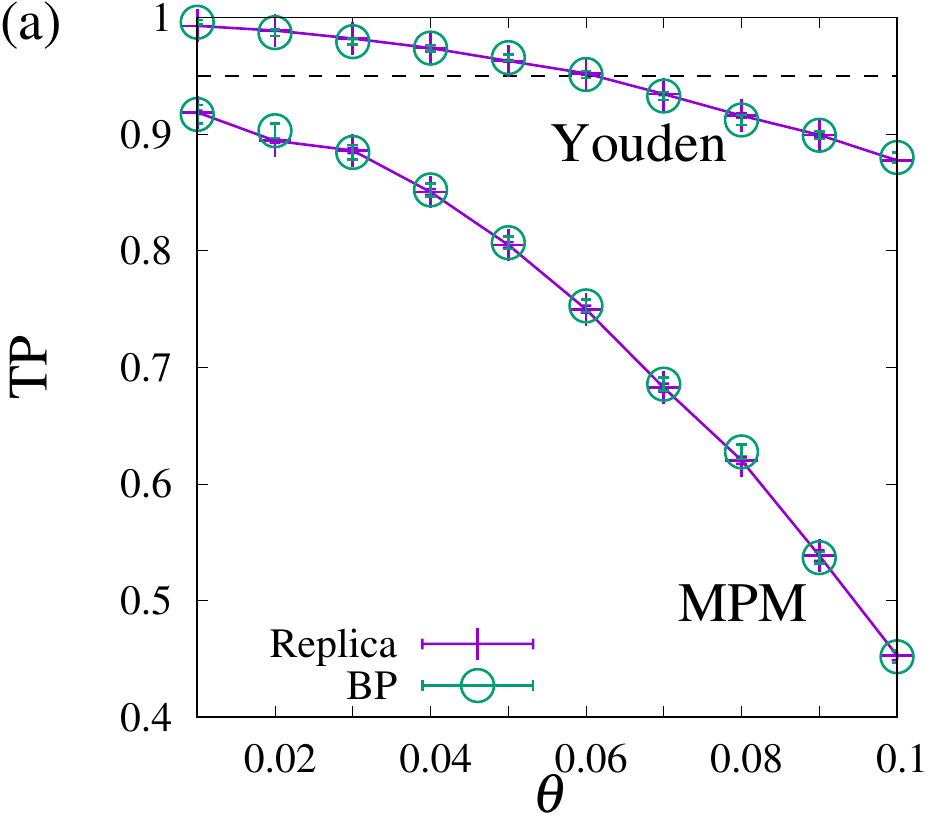}
\end{minipage}
\begin{minipage}{0.495\hsize}
\centering
\includegraphics[width=1.75in]{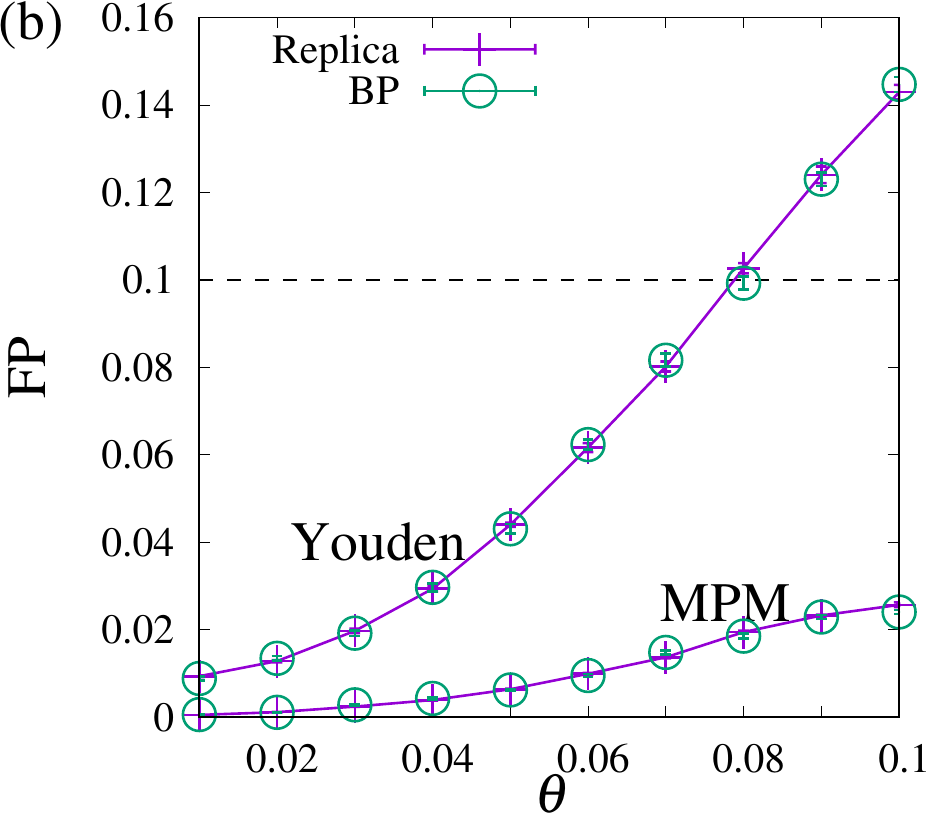}
\end{minipage}
\caption{Comparison of (a) TP and (b) FP under the cutoff of $\theta$ (Youden)
and 0.5 (MPM) for
$\alpha=0.5$, $p_{\mathrm{TP}}=0.95$, and $p_{\mathrm{FP}}=0.1$.
The dashed horizontal lines represent the characteristics of the original test;
$p_{\mathrm{TP}}$ for (a) and $p_{\mathrm{FP}}$ for (b).}
\label{fig:TPFP}
\end{figure}

\subsection{Comparison with the BP algorithm}
\label{sec:Replica_vs_BP}

We have imposed some assumptions in our analysis so far.
Here, we confirm the adequacy of these assumptions
by comparing the analytical results with those obtained using the BP algorithm
\cite{Kanamori,Sakata_JPSJ}.
The details of the BP algorithm and its correspondence with 
the replica analysis are discussed in Sec.\ref{sec:BP}.
When the pools are randomly constructed under the 
constraints on $K\ll N$ and $C\ll N$, 
it is known that the exact posterior marginals can be 
obtained by the BP algorithm
at sufficiently large values of $N$ and $M$
while setting the ratio $\alpha=M\slash N\sim O(1)$.

In Fig.\ref{fig:AUC_and_Cutoff}(a),
the AUC derived using the replica method and the associated population dynamics
are compared with that calculated using the BP algorithm
at $\alpha=0.5$, $p_{\mathrm{TP}}=0.95$, and $p_{\mathrm{FP}}=0.1$.
For BP, the number of items is set to $N=1000$, and the results are 
averaged over 1000 samples of the test results $\bm{y}$ and pooling method $\bm{c}$.
For calculating the AUC using BP,
we set the true state of the items to satisfy 
$\sum_{i=1}^Nx_i^{(0)}=N\theta$
and fix a pooling method $\bm{c}$; then, we 
generate one instance of $\bm{y}$ according to the likelihood.
TP and FP for each cutoff are calculated using the true state of the items
and the posterior probability by BP.
As shown in \ref{fig:AUC_and_Cutoff}(a),
the theoretical result, denoted by 'Replica,' coincides with the algorithmic result of BP,
denoted by 'BP,' where the result of BP is averaged over 100 samples of $\bm{y}$,
$\bm{c}$, and $\bm{x}^{(0)}$.
In Fig. \ref{fig:AUC_and_Cutoff}(b),
the dependence on the prevalence of the cutoff,
which maximizes the expected Youden index calculated by the replica method,
is shown.
For comparison, we show the diagonal dashed line with gradient 1.
As discussed in Sec. \ref{sec:BO}, the cutoff 
that maximizes the expected Youden index is coincident with the prevalence.
This observation also supports the adequacy of our analysis.

In Fig.\ref{fig:TPFP}, the cutoff-dependent properties, (a)
true positive rate (TP) and (b) false positive rate (FP), are shown in comparison with the results of 
the replica method and BP algorithm,
where the lines with labels `Youden' and `MPM' indicate the TP and FP 
under the cutoff of $\theta$ and $0.5$, respectively.
The cutoff-dependent property evaluated by the replica method also matches with that 
calculated by the BP algorithm.
The horizontal dashed lines represent the original test properties, 
(a) $p_{\mathrm{TP}}$, 
(b) $p_{\mathrm{FP}}$, and
$\mathrm{TP}>p_{\mathrm{TP}}$ 
and $\mathrm{FP}<p_{\mathrm{FP}}$ 
indicate that a part of the test errors are corrected by the group testing.
As discussed in Sec. \ref{sec:BO}, the MPM estimator prefers to decrease FP,
and as a tradeoff, the TP obtained using the MPM estimator 
is lower than that in the Youden index maximization.
In the case of Youden index maximization,
the false positive and false negative can be corrected simultaneously when $\theta<0.06$.
This result also matches the behavior of the BP algorithm.

\subsection{Effectiveness of the group testing measured by ROC curve}

\begin{figure}
\centering
\includegraphics[width=3.5in]{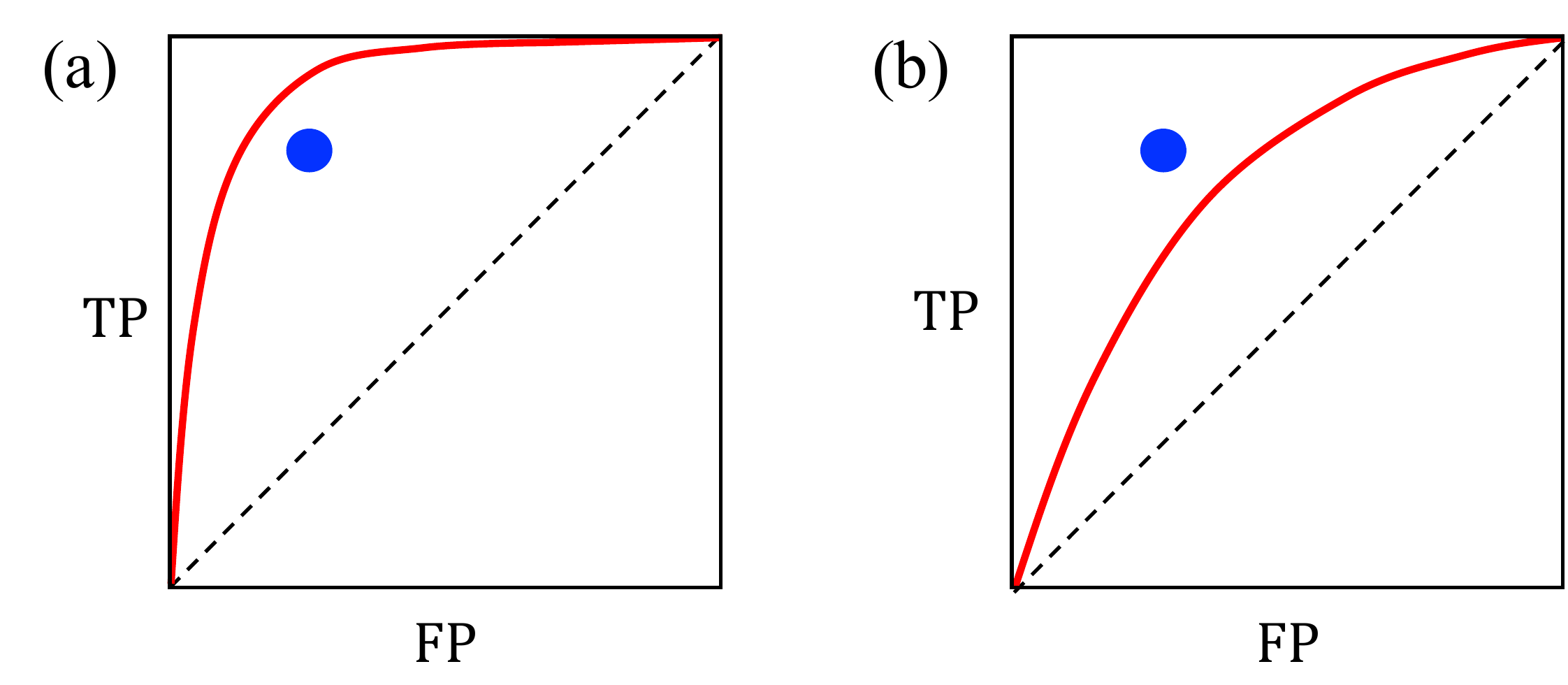}
\caption{Our criterion for the effectiveness of group testing.
As shown in (a), when 
the original test property (dot at $(p_{\mathrm{FP}},p_{\mathrm{TP}})$) is below the 
ROC curve (solid line),
we consider the group testing to be superior to the original test.
In contrast, when the original test property is above the ROC curve, as shown in (b),
we consider the group testing to be inferior to the original test.
}
\label{fig:hantei}
\end{figure}

We discuss the parameter region in which the 
identification performance of the group testing under the Bayesian optimal setting is
superior to that of the original test.
We introduce the criterion 
shown in Fig.\ref{fig:hantei} for the quantitative comparison
between the original test and the group testing.
The solid line in Fig. \ref{fig:hantei} is an example of the ROC curve obtained with Bayesian group testing,
and the dot located at $(p_{\mathrm{FP}},p_{\mathrm{TP}})$ 
represents an example of the original test property.
The better test method approaches the ROC curve toward 
the point $(\mathrm{FP}=0$,$\mathrm{TP}=1)$;
hence,
we consider the group test under the Bayesian optimal setting superior to the
original test when the dot
is below the ROC curve, as shown in Fig. \ref{fig:hantei} (a).
In the contrasting situation,
the dot is over the ROC curve, as shown in Fig. \ref{fig:hantei} (b), and
we consider that the group test cannot
exceed the original test performance.
The ROC curve is obtained with the distributions of the posterior marginal probabilities
derived by the replica method under the Bayesian optimal setting $P_\rho^\pm(\rho)$.

Fig.\ref{fig:efficient} shows the phase diagrams based on the criterion 
at $\alpha=0.5$ and $K=10$ for 
(a) $\theta=0.05$ and (b) $\theta=0.07$,
where the shaded area represents the 
region where the group testing
under the Bayesian optimal setting is effective.
Here, we do not consider the region $p_{\mathrm{TP}}<0.5$ and 
$p_{\mathrm{FP}}>0.5$, where the test performance is worse than the 
random decision on the items' states.
The effective region shrinks as the prevalence increases,
and extends as the number of tests increases.
As $p_{\mathrm{TP}}$ decreases,
the group testing performance becomes inferior to that of the original test
for a small region $p_{\mathrm{FP}}$.
This is because a
part of the true negatives is erroneously changed to positive
while correcting the false negatives,
and the fraction of false results is larger than $p_{\mathrm{FP}}$.

As discussed in Sec.\ref{sec:BO},
the Bayesian optimal setting yields the largest value of the expected AUC
when the correlation between the items' states is ignored.
Therefore, in the model mismatch case,
the effective regions shown in Fig.\ref{fig:efficient} are
smaller than that in the Bayesian optimal setting.
In the parameter region where the group testing under the Bayesian optimal setting is 
inferior to the original test,
it is expected that the group testing under any other setting
is also inferior to the original test.
We can utilize the phase diagrams of the Bayes optimal setting
as a guide to interpret which parameter region is appropriate for the group testing.

\begin{figure}
\begin{minipage}{0.495\hsize}
\centering
\includegraphics[width=1.75in]{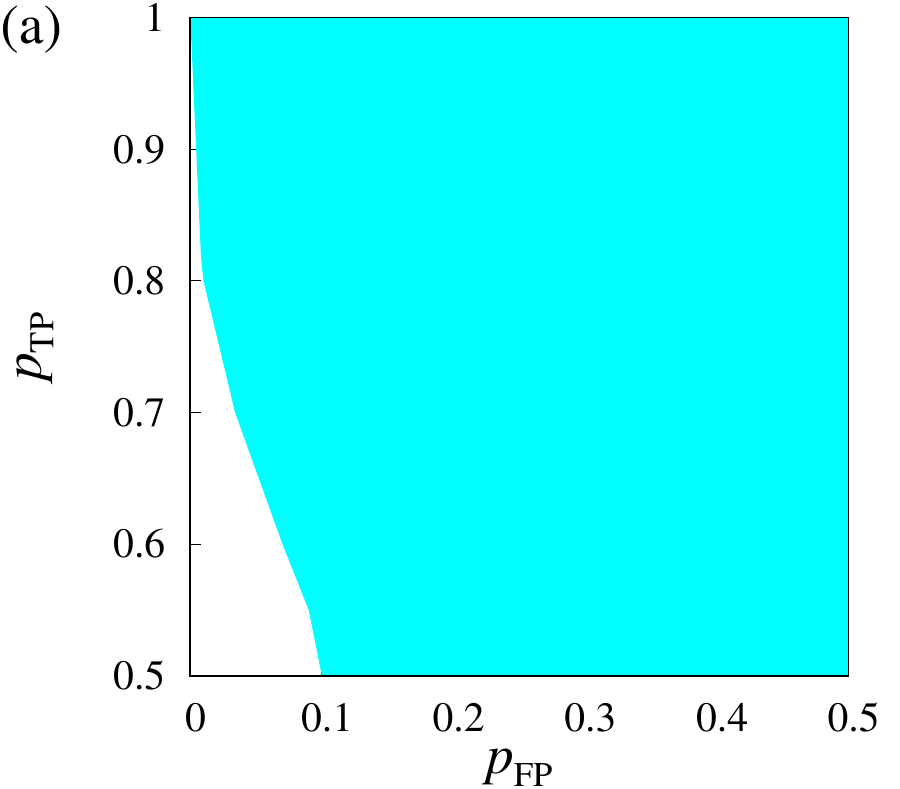}
\end{minipage}
\begin{minipage}{0.495\hsize}
\centering
\includegraphics[width=1.75in]{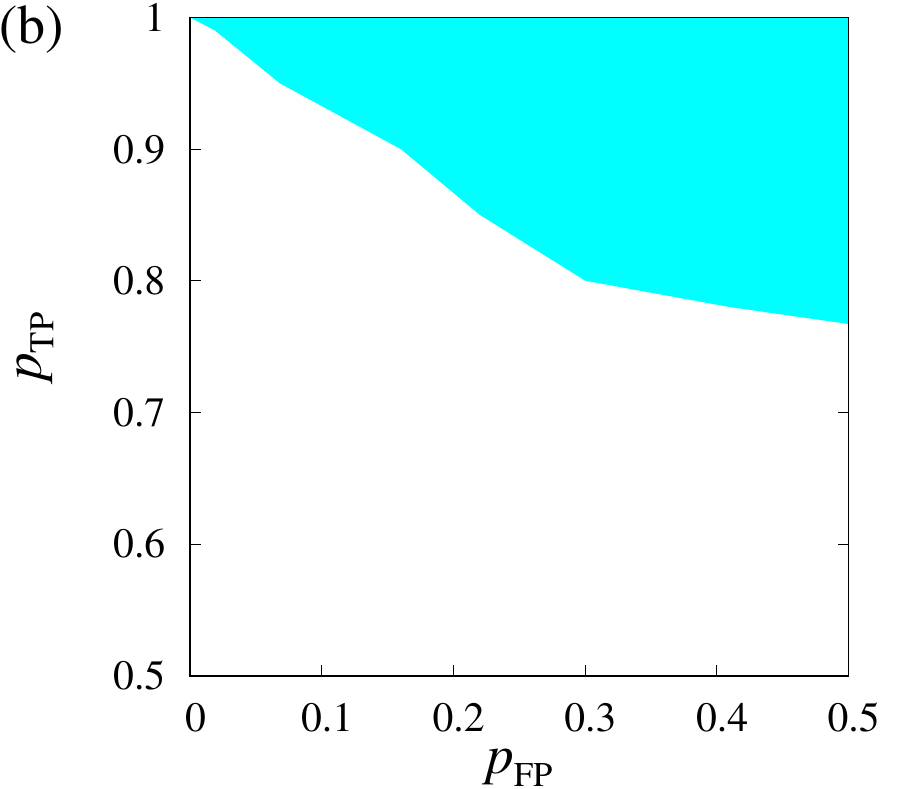}
\end{minipage}
\caption{At $\alpha=0.5$ and $K=10$,
the identification performance of the Bayesian group testing exceeds that of the 
original test in the shaded parameter region.
(a) and (b) are for $\theta=0.05$ and $\theta=0.07$, respectively.}
\label{fig:efficient}
\end{figure}

\section{Interpretation of the replica analysis from
the correspondence with the BP algorithm}
\label{sec:BP}

\subsection{BP algorithm}

\begin{figure*}[t]
\centering
\includegraphics[width=5in]{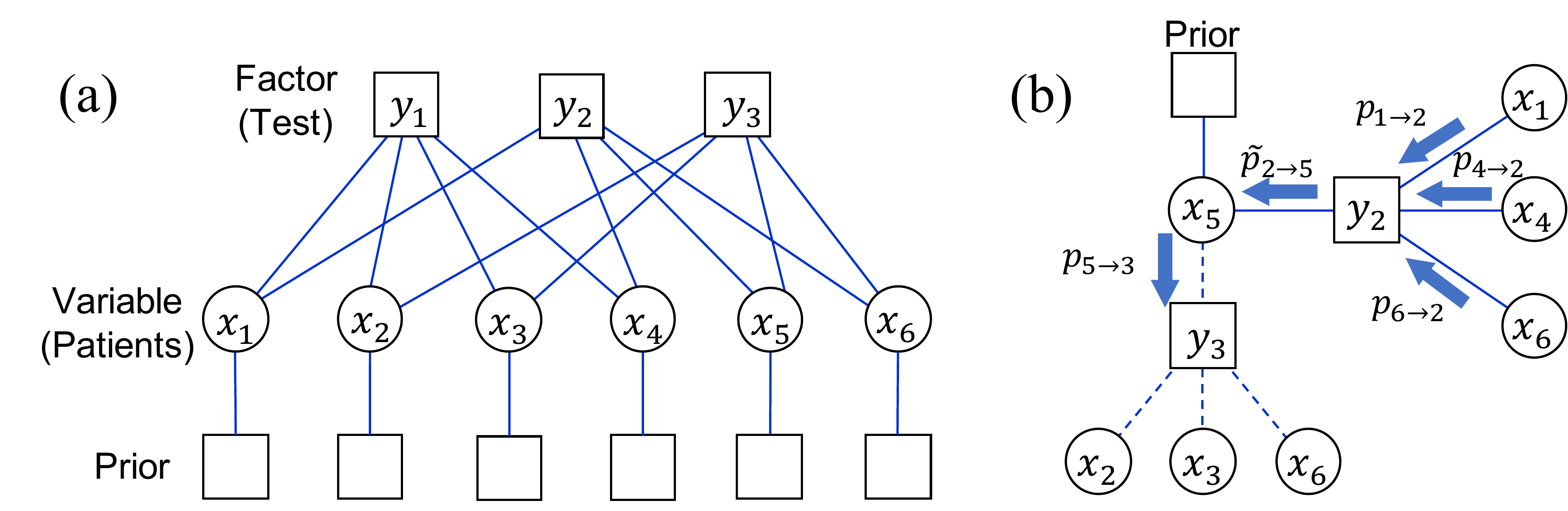}
\caption{(a) Factor graph representation of the posterior distribution
for pools on which the tests are performed.
(b) Tree approximation of the graph and messages defined on the edges.}
\label{fig:graphical}
\end{figure*}

The meaning of the distributions $\pi$ and $\hat{\pi}$ in
the replica method and the corresponding population dynamics
can be interpreted by
utilizing the correspondence between the replica method and 
BP algorithm.
The derivation of the BP algorithm for the group testing can be referred from 
\cite{Mezard_GT, Kanamori, Johnson, Sakata_JPSJ},
and is briefly explained here.
Fig.\ref{fig:graphical}(a) illustrates a graphical representation of the 
posterior distribution \eqref{eq:posterior} for the group testing,
where the pools on which the tests are not performed 
($\nu$-th pool with $c_\nu=0$) are not depicted.
The variable nodes ($\bigcirc$)
and factor nodes ($\square$) represent the items' states and the test results
performed on the pools, respectively.
The edges represent the pooling method;
the edge between the 1st factor node and 1st item
indicates that the 1st item is contained in the 1st pool.
The degrees of factor nodes and variable nodes correspond to the pool size $K$ 
and overlap $C$.
In the derivation of the BP algorithm, we locally impose a tree approximation 
as shown in Fig.\ref{fig:graphical}(b), 
and define two {\it messages} $p_{i\to\nu}(x_i)$ and 
$\tilde{p}_{\nu\to i}(x_i)$ on the edge that connects 
$\nu$-th factor node and $i$-th variable node.
Intuitively, these messages correspond to the
marginal posterior distributions of $x_i$
before and after performing the $\nu$-th test
under the tree approximation.
The BP algorithm describes the 
constructive manner of the joint distribution
on the tree as
\begin{align}
\tilde{p}_{\nu\to i}(x_i)&\propto\sum_{\bm{x}\backslash x_i}\tilde{f}(y_\nu|\bm{x}_{(\nu)})\prod_{j\in{\cal L}(\nu)\backslash i}p_{j\to\nu}(x_j)\label{eq:message_nu_to_i}\\
p_{i\to \nu}(x_i)&\propto\phi(x_i)\prod_{\eta\in{\cal G}(i)\backslash\nu}\tilde{p}_{\eta\to i}(x_i),
\label{eq:message_i_to_nu}
\end{align}
where we denote $\tilde{f}(y_\nu|\bm{x}_{(\nu)})\equiv f(y_\nu|c_\nu=1,\bm{x}_{(\nu)})$ for conciseness.
As mentioned earlier,
${\cal L}(\nu)$ and ${\cal G}(i)$ are the 
set of item labels in the $\nu$-th pool
and the set of pool labels $i$-th item is contained,
respectively.
Using these messages,
the marginalized posterior distribution is given by
\begin{align}
p_i(x_i)\propto \phi(x_i)\prod_{\eta\in{\cal G}(i)}\tilde{p}_{\eta\to i}(x_i).
\label{eq:BP_marginal}
\end{align}
On the tree graphs, \eqref{eq:BP_marginal} describes the exact marginal distribution,
and to obtain \eqref{eq:BP_marginal},
we need to propagate the messages on the tree
once starting from the arbitrary root.
For the general graphs,
we need to recursively update 
the messages for sufficient time steps until convergence;
hence, hereafter, we denote the messages at time step $t$ 
as $\tilde{p}_{\nu\to i}^{(t)}$ and $p_{i\to\nu}^{(t)}$.
In the case of the group testing with random pools,
the BP algorithm is expected to provide exact values of the posterior marginals
at $N,~M\to\infty$ keeping $\alpha=M\slash N$ finite.

The messages are defined for the binary variables as defective or non-defective;
hence, 
they are represented by 
the $[0,1]-$Bernoulli parameters $\{m_{j\to\nu}^{(t)}\}$ and $\{\tilde{m}_{\nu\to i}^{(t)}\}$,
which are called the F-cavity fields and V-cavity fields, respectively,
as
\begin{align}
p_{j\to\nu}^{(t)}(x_i)&=(1-m_{j\to\nu}^{(t)})(1-x_i)+m_{j\to\nu}^{(t)}x_i\\
\tilde{p}_{\nu\to i}^{(t)}(x_i)&=(1-\tilde{m}_{\nu\to i}^{(t)})(1-x_i)+\tilde{m}_{\nu\to i}^{(t)}x_i.
\end{align}
The time evolution of these Bernoulli parameters
are derived from \eqref{eq:message_nu_to_i}--\eqref{eq:message_i_to_nu}
as
\begin{align}
&m_{j\to\nu}^{(t)}=\frac{\theta\displaystyle\prod_{\eta\in{\cal G}(i)\backslash\nu}\tilde{m}^{(t-1)}_{\eta\to i}}{
(1-\theta)\displaystyle\prod_{\eta\in{\cal G}(i)\backslash\nu}(1-\tilde{m}_{\eta\to i}^{(t-1)})+\theta\prod_{\eta\in{\cal G}(i)\backslash\nu}\tilde{m}_{\eta\to i}^{(t-1)}}\\
&\tilde{m}_{\nu\to i}^{(t)}\\
\nonumber
&=\!\frac{U_\nu}{U_\nu\left\{2\!-\!\!\!\displaystyle\prod_{j\in{\cal L}(\nu)\!\backslash i}(1\!-\!m_{j\to\nu}^{(t-1)})\right\}
\!+\!W_\nu\!\!\!\displaystyle\prod_{j\in{\cal L}(\nu)\!\backslash i}(1\!-\!m_{j\to\nu}^{(t-\!1)})},
\end{align}
where
\begin{align}
U_\nu&=p_{\mathrm{TP}}y_\nu+(1-p_{\mathrm{TP}})(1-y_\nu),\label{eq:BP_U}\\
W_\nu&=p_{\mathrm{FP}}y_\nu+(1-p_{\mathrm{FP}})(1-y_\nu).
\end{align}
We denote the obtained cavity fields after sufficient updates as
$\{m_{j\to\nu}\}$ and $\{\tilde{m}_{\nu\to i}\}$.
The marginal posterior distribution
\eqref{eq:BP_marginal} is expressed using the Bernoulli parameter
$m_i$, which is given by the F-cavity fields as
\begin{align}
m_i=\frac{\theta\prod_{\eta\in{\cal G}(i)}\tilde{m}_{\eta\to i}}{
(1-\theta)\prod_{\eta\in{\cal G}(i)}(1-\tilde{m}_{\eta\to i})+\theta\prod_{\eta\in{\cal G}(i)}\tilde{m}_{\eta\to i}}.
\label{eq:rho_BP}
\end{align}
\eqref{eq:rho_BP} is the estimate of the 
marginal posterior probability $\rho_i$;
hence, using \eqref{eq:rho_BP} and the appropriate cutoff,
one can obtain the items' states under the given test results $\bm{y}$ and 
pooling method $\bm{c}$.
Furthermore, the Bayes factor \eqref{eq:def_BF}
can be calculated using the BP algorithm as
\begin{align}
\mathrm{BF}_i^{10}=\prod_{\eta\in{\cal G}(i)}\frac{\tilde{m}_{\eta\to i}}{1-\tilde{m}_{\eta\to i}}.
\label{eq:BF_BP}
\end{align}
The decision based on the Bayes factor 
\eqref{eq:BF_decision}
can be easily implemented by the BP algorithm using the expression
\eqref{eq:BF_BP}.

An advantage of the BP algorithm is that the
updating of the messages is implemented using the matrix products.
In Algorithm \ref{alg:BP_matrix}, 
the BP algorithm using the matrix representation is 
summarized, 
where the messages are represented in the matrix forms
$\bm{M}\in\mathbb{R}^{N\times M}$ and $\tilde{\bm{M}}\in\mathbb{R}^{M\times N}$,
where $M_{i\nu}=m_{i\to\nu}$ and $\tilde{M}_{\nu i}=\tilde{m}_{\nu\to i}$.
In Algorithm \ref{alg:BP_matrix}, 
we introduce the matrix $\bm{F}\in\{0,1\}^{M\times N}$ 
representing the pooling method $\bm{c}$,
where $F_{\nu i}=1$ when $\nu\in{\cal G}(i)$
or $i\in{\cal L}(\nu)$, and $F_{\nu i}=0$ otherwise.
For convenience, we define an operator 
$\Pi(\bm{A})$ for an arbitrary $M\times N$ matrix
$\bm{A}$ that outputs an $N$-dimensional vector whose $i$-th component is 
$\prod_{\mu=1}^M A_{\mu i}$, and
$\bm{1}_{M\times N}$ is an $M\times N$ matrix whose components are 1.
The notations
$\circ$ and $\oslash$ represent the Hadamard product and Hadamard division, respectively,
namely $(A\circ B)_{ij}=A_{ij}B_{ij}$
and $(A\oslash B)_{ij}=A_{ij}\slash B_{ij}$,
where $A$ and $B$ are matrices of the same size.

\begin{algorithm}[t]
\caption{BP algorithm for group testing (matrix representation)}
\label{alg:BP_matrix}
\begin{algorithmic}[1]
\Require {$p_{\mathrm{TP}}$,  $p_{\mathrm{FP}}$, 
$\theta$, $\bm{y}$, $\bm{F}$, $\varepsilon$}
\Ensure {$\bm{m}$}
\Initialize{
$\bm{Y}\gets\bm{y}\bm{1}_{1\times N}$\\
$\bm{U}\gets p_{\mathrm{TP}}\bm{Y}+(1-p_{\mathrm{TP}})(\bm{1}_{M\times N}-\bm{Y})$\\
$\bm{W}\gets p_{\mathrm{FP}}\bm{Y}+(1-p_{\mathrm{FP}})(\bm{1}_{M\times N}-\bm{Y})$\\
$t\gets 0$\Comment{Iteration counting}\\
$s\gets 0$\Comment{Indicator of the convergence}\\
$\bm{M}^{(0)}\gets$ Initial values from $[0,1]^{N\times M}$\\
$\bm{M}^{(0)}\gets\bm{M}^{\!(0)}\circ\bm{F}^{\mathrm{T}}$\Comment{Remove undefined messages}\\
$\widetilde{\bm{M}}^{(0)}\gets$ Initial values from $[0,1]^{M\times N}$\\
$\widetilde{\bm{M}}^{(0)}\gets\widetilde{\bm{M}}^{(0)}\circ\bm{F}$\Comment{Remove undefined messages}\\
}
    \While{$s=0$}\Comment{BP loop start}
    \State{$t\gets t+1$}
            \State{$\bm{Q}^{(t)}\!\gets\!\!(\Pi(\bm{1}_{\!N\!\times\! M}\!-\!\bm{M}^{(t-1)})\bm{1}_{\!1\!\times\! N})\oslash\!(\bm{1}_{\!N\!\times\! M}\!-\!\bm{M}^{(t-1)})^{\mathrm{T}}$}
     \State{$\widetilde{\bm{M}}^{(t)}\!\!\gets\!\!\bm{F}\circ(\bm{U}\oslash(\bm{U}+(\bm{U}\circ(\bm{1}_{M\times N}-\bm{Q}^{(t)})+\bm{W}\circ\bm{Q}^{(t)}))$}
     \State{$\widetilde{\bm{Q}}^{(t)}_+\!\!\gets\!\!(\Pi(\widetilde{\bm{M}}^{(t-1)}\!+\!(\bm{1}_{\!M\!\times\! N}\!-\!\bm{F}))\bm{1}_{\!1\!\times\! M})\oslash\!(\widetilde{\bm{M}}^{(t-1)}\!+\!(\bm{1}_{\!M\!\times\! N}\!-\!\bm{F}))^{\mathrm{T}}$}
     \State{$\widetilde{\bm{Q}}^{(t)}_-\!\!\gets\!\!(\Pi(\bm{1}_{\!M\!\times\! N}\!-\!\widetilde{\bm{M}}^{(t-1)})\bm{1}_{\!1\!\times\! M})
     \oslash (\bm{1}_{\!M\!\times\! N}\!-\!\widetilde{\bm{M}}^{(t-1)})^{\mathrm{T}}$}
           \State{$\bm{M}^{(t)}\gets\bm{F}^{\mathrm{T}}\circ(\theta\bm{Q}^{(t)}_+\oslash(\theta\bm{Q}^{(t)}_++(1-\theta)\bm{Q}^{(t)}_-))$}
           \State{$dM^{(t)}\gets ||\bm{M}^{(t)}\!-\!\bm{M}^{(t-1)}||_2$}   
           \State{$d\widetilde{M}^{(t)}\gets ||\widetilde{\bm{M}}^{(t)}\!-\!\widetilde{\bm{M}}^{(t-1)}||_2$}   
           \State{$s\gets\mathbb{I}(dM^{(t)}<NM\varepsilon$ and $d\widetilde{M}^{(t)}<NM\varepsilon)$}
    \EndWhile
    \State{\textbf{end while}}
   \State{$\bm{m}\gets\theta\Pi(\widetilde{\bm{M}}^{(t)}+(\bm{1}_{M\times N}-\bm{F}))
   \oslash(\theta\Pi(\widetilde{\bm{M}}^{(t)}+(\bm{1}_{M\times N}-\bm{F}))+
   (1-\theta)\Pi(\bm{1}_{M\times N}-\widetilde{\bm{M}}^{(t)}))$}
\end{algorithmic}
\end{algorithm}

\subsection{Expectation of BP trajectory and replica analysis}

The cavity fields in the BP algorithm are random variables that 
depend on the realization of the randomness
$\bm{y}$ and $\bm{c}$, where $\bm{y}$ is generated by the true items' states 
$\bm{x}^{(0)}$.
Let us define the probability distribution of the F-cavity fields for 
defective and non-defective items at step $t$ as 
\begin{align}
\pi_{\mathrm{BP}}^{(t)}(m|1)=E_{\bm{y},\bm{c},\bm{x}^{(0)}|x_i^{(0)}=1}[\delta(m^{(t)}_{i\to\nu}-m)]
\label{eq:cavity_1}\\
\pi_{\mathrm{BP}}^{(t)}(m|0)=E_{\bm{y},\bm{c},\bm{x}^{(0)}|x_i^{(0)}=0}[\delta(m^{(t)}_{i\to\nu}-m)],
\label{eq:cavity_0}
\end{align}
and that of the V-cavity fields as 
\begin{align}
\tilde{\pi}_{\mathrm{BP}}^{(t)}(\tilde{m}|1)=E_{\bm{y},\bm{c},\bm{x}^{(0)}|x_i^{(0)}=1}[\delta(\tilde{m}^{(t)}_{\nu\to i}-m)]
\label{eq:hcavity_1}\\
\tilde{\pi}_{\mathrm{BP}}^{(t)}(\tilde{m}|0)=E_{\bm{y},\bm{c},\bm{x}^{(0)}|x_i^{(0)}=0}[\delta(\tilde{m}^{(t)}_{\nu\to i}-m)],
\label{eq:hcavity_0}
\end{align}
where $E_{\bm{y},\bm{c},\bm{x}^{(0)}|x_i^{(0)}=1}[\cdot]$
and $E_{\bm{y},\bm{c},\bm{x}^{(0)}|x_i^{(0)}=0}[\cdot]$
denote the expectation with respect to randomness under the constraint that $x_i^{(0)}=1$ and $x_i^{(0)}=0$, respectively.
Here, we assume that \eqref{eq:cavity_1}--\eqref{eq:hcavity_0}
do not depend on $i$,
and the dependency between the messages can be ignored.
Under this assumption,
we can show that the time evolution equations of the distributions \eqref{eq:cavity_1}--\eqref{eq:hcavity_0}
correspond to the recursive updating of the 
distributions derived by the replica method as follows.
Assuming the independence of $\tilde{m}_{\eta\to i}$ for $\eta\in{\cal G}(i)$,
\eqref{eq:hcavity_1} is transformed as
\begin{align}
\nonumber
&\pi_{\mathrm{BP}}^{(t)}(m|1)\\
\nonumber
&=\!E_{\bm{y},\bm{c},\bm{x}^{(0)}\!|x_i^{(0)}\!=1}\!\!\left[\delta\!\left(\!m\!-\!\frac{\rho\displaystyle\prod_{\eta\in{\cal G}(i)\backslash\nu}\tilde{m}_{\eta\to i}}{
(1\!-\!\rho)\!\!\!\!\!\!\!\displaystyle\prod_{\eta\in{\cal G}(i)\backslash\nu}\!\!\!\!\!\!(1\!-\!\tilde{m}_{\eta\to i})\!+\!\rho\!\!\!\!\!\!\!\prod_{\eta\in{\cal G}(i)\backslash\nu}\!\!\!\!\!\!\tilde{m}_{\eta\to i}}
\!\right)\!\right]\\
&=\int\prod_{\gamma=1}^{C-1}d\tilde{m}_\gamma\tilde{\pi}_{\mathrm{BP}}^{(t-1)}(\tilde{m}_\gamma|1)
\delta\left(m-\mu(\tilde{\bm{m}}_{(C-1)})\right),
\label{eq:SE_1}
\end{align}
where $\tilde{\bm{m}}_{(C-1)}=[\tilde{m}_1,\cdots,\tilde{m}_{C-1}]^{\mathrm{T}}$ and 
$\mu(\tilde{\bm{m}}_{(C-1)},\theta)$ is given by \eqref{eq:mu_replica}.
Equation \eqref{eq:SE_1} corresponds to the recursive relationship of the 
distribution derived using the replica method \eqref{eq:PD_pi} at $x=1$.
The same relationship holds for $\pi_{\mathrm{BP}}^{(t)}(m|0)$
based on the above discussion.

For the derivation of the V-cavity field distribution,
we need to consider the generative process of $\bm{y}$.
When $x_i=1$, the test result on the $\nu$-th pool $y_\nu$, which contains the $i$-th item,
is 1 with probability $p_{\mathrm{TP}}$ and 0 with probability $1-p_{\mathrm{TP}}$, 
irrespective of the states of the other items in the pool.
Therefore, under the assumption of independency of 
the V-cavity fields, \eqref{eq:hcavity_1} is given by 
\begin{align}
\nonumber
&\tilde{\pi}^{(t)}_{\mathrm{BP}}(\tilde{m}|1)=\int \prod_{\kappa=1}^{K-1} dm_\kappa \sum_{x_1,\cdots,x_{K-1}}
\prod_{\kappa=1}^{K-1}\phi(x_\kappa)\pi^{(t-1)}_{\mathrm{BP}}(m_\kappa|x_\kappa )\\
\nonumber
&\hspace{0.5cm}\times\Big\{p_{\mathrm{TP}}\delta
\left(\tilde{m}-\tilde{\mu}(p_{\mathrm{TP}},p_{\mathrm{FP}},\bm{m}_{(K-1)})\right)\\
&\hspace{0.7cm}+(1\!-\!p_{\mathrm{TP}})\delta(\tilde{m}-\tilde{\mu}(1\!-\!p_{\mathrm{TP}},1\!-\!p_{\mathrm{FP}},\bm{m}_{(K\!-\!1)})\Big\},
\label{eq:SE_h_1}
\end{align}
where $\bm{m}_{(K-1)}=[m_1,\cdots,m_{K-1}]^{\mathrm{T}}$, 
and $\tilde{\mu}(p_{\mathrm{TP}},p_{\mathrm{FP}},\bm{m}_{(K-1)})$ is given by
\eqref{eq:tilde_mu_def}.
Equation \eqref{eq:SE_h_1} is equivalent to the analytical expression
derived using the replica method \eqref{eq:PD_pi_hat_+}.
Meanwhile, when $x_i^{(0)}=0$, the test result $y_\nu$
is governed by the other items in the pool.
When all items are non-defective,
$y_\nu=1$ and $y_\nu=0$ are realized with probability $p_{\FP}$ and $1-p_{\FP}$,
respectively.
If there is at least one defective item in the $\nu$-th pool,
$y_\nu=1$ and $y_\nu=0$ are realized with probability $p_{\TP}$ and
$1-p_{\TP}$, respectively.
Thus, only the all-zero case differs from the distribution
$\tilde{\pi}_{\mathrm{BP}}^{(t)}(\tilde{m}|1)$. In summary, we obtain
\begin{align}
\nonumber
&\tilde{\pi}_{\mathrm{BP}}^{(t)}(\tilde{m}|0)
=\tilde{\pi}^{(t)}_{\mathrm{BP}}(\tilde{m}|1)+\int \prod_{\kappa=1}^{K-1}dm_\kappa \pi_{\mathrm{BP}}^{(t)}
(m_\kappa|0)(1-\theta)^{K-1}\\
\nonumber
&\times\Big\{(p_{\mathrm{FP}}\!-\!p_{\TP})\delta
\left(\tilde{m}-\tilde{\mu}(p_{\mathrm{TP}},p_{\mathrm{FP}},\bm{m}_{(K-1)})\right)\\
&\hspace{0.2cm}+(p_{\TP}\!-\!p_{\FP})\delta\left(\tilde{m}-\tilde{\mu}(1\!-\!p_{\mathrm{TP}},1\!-\!p_{\mathrm{FP}},\bm{m}_{(K\!-\!1)})\right)\}
\label{eq:SE_h_0}
\end{align}
Equation \eqref{eq:SE_h_0} is equivalent to \eqref{eq:PD_pi_hat_-} in the replica method. 

In fact, the BP algorithm is defined for 
one realization of the randomness $\bm{y}$, $\bm{c}$, and $\bm{x}^{(0)}$.
However, 
because of the law of large numbers,
it is expected that the empirical distributions of the cavity fields 
for a single typical samples of $\bm{y}, \bm{c}, \bm{x}^{(0)}$
converge in probability as
\begin{align}
\frac{1}{N\theta C}\sum_{i=1}^N\!\sum_{\nu\in{\cal G}(i)}\!\!x_i^{(0)}\delta(m^{(t)}_{i\to\nu}\!-\!m)
&\xrightarrow{N\to\infty} \pi_{\mathrm{BP}}^{(t)}(m|1)
\label{eq:pi_BP_SA_+}\\
\frac{1}{N\theta C}\sum_{i=1}^N\!\sum_{\nu\in{\cal G}(i)}\!\!(1\!-\!x_i^{(0)})\delta(m^{(t)}_{i\to\nu}\!-\!m)
&\xrightarrow{N\to\infty} \pi_{\mathrm{BP}}^{(t)}(m|0)\\
\frac{1}{MK\theta}\!\sum_{\nu=1}^M\!\sum_{i\in{\cal M}(\nu)}\!\!\!x_i^{(0)}\delta(\tilde{m}^{(t)}_{\nu\to i}\!-\!\tilde{m})
&\xrightarrow{N\to\infty} \tilde{\pi}_{\mathrm{BP}}^{(t)}(\tilde{m}|1)\\
\frac{1}{MK\theta}\!\sum_{\nu=1}^M\!\sum_{i\in{\cal M}(\nu)}\!\!\!(1\!-\!x_i^{(0)})\delta(\tilde{m}^{(t)}_{i\to\nu}\!-\!\tilde{m})
&\xrightarrow{N\to\infty} \tilde{\pi}_{\mathrm{BP}}^{(t)}(\tilde{m}|0).
\label{eq:pih_BP_SA_-}
\end{align}
for any $t$.
Eqs. \eqref{eq:pi_BP_SA_+}-\eqref{eq:pih_BP_SA_-}
are known as self-averaging property.
In Figs.\ref{fig:BP_vs_PD} and \ref{fig:BP_vs_PD_hat},
the time evolution of the logarithmic values of 
$\pi^{(t)}(\mu|x)$ and $\hat{\pi}^{(t)}(\hat{\mu}|x)$ 
($x\in\{0,1\}$) in the replica method 
and that of $\pi_{\mathrm{BP}}^{(t)}(\mu|x)$ and 
$\hat{\pi}^{(t)}_{\mathrm{BP}}(\hat{\mu}|x)$ in the BP algorithm are
compared at 
$\alpha=0.5$, $K=10$, $p_{\mathrm{TP}}=0.9$, and $p_{\mathrm{FP}}=0.05$.
The distribution in BP is calculated as the L.H.S. of \eqref{eq:pi_BP_SA_+}--\eqref{eq:pih_BP_SA_-} at $N=5\times 10^5$
for a single sample of $\bm{y}, \bm{c}, \bm{x}^{(0)}$.
As the initial condition,
we set all members of the populations $\bm{\pi}^+$ and $\bm{\pi}^-$ 
at $\theta$ and 
that of populations $\hat{\bm{\pi}}^+$ and $\hat{\bm{\pi}}^-$ at 0.5
in Algorithm \ref{alg:Population}.
Correspondingly, 
we set $m_{i\to\nu}=\theta$ and $\hat{m}_{\nu\to i}=0.5$ as the
initial condition in BP
for all pairs of $(i,\nu)$, where $i\in{\cal L}(\nu)$ or $\nu\in{\cal G}(i)$.
The difference between the distribution by PD and 
that by BP is $O(10^{-2})$ at the maximum;
hence, it is considered that the assumption of the self-averaging property 
is adequate.
In particular, the bifurcation in the distribution
that 
appears at $t\leq 5$ for $\pi(\mu|x)$ and 
$\pi_{\mathrm{BP}}(m|x)$, and 
that appears at $t\leq 4$ for 
$\hat{\pi}(\hat{\mu}|x)$ and 
$\hat{\pi}_{\mathrm{BP}}(\hat{m}|x)$ 
entirely match each other.

\begin{figure*}
\begin{minipage}{0.495\hsize}
\centering
\includegraphics[width=3in]{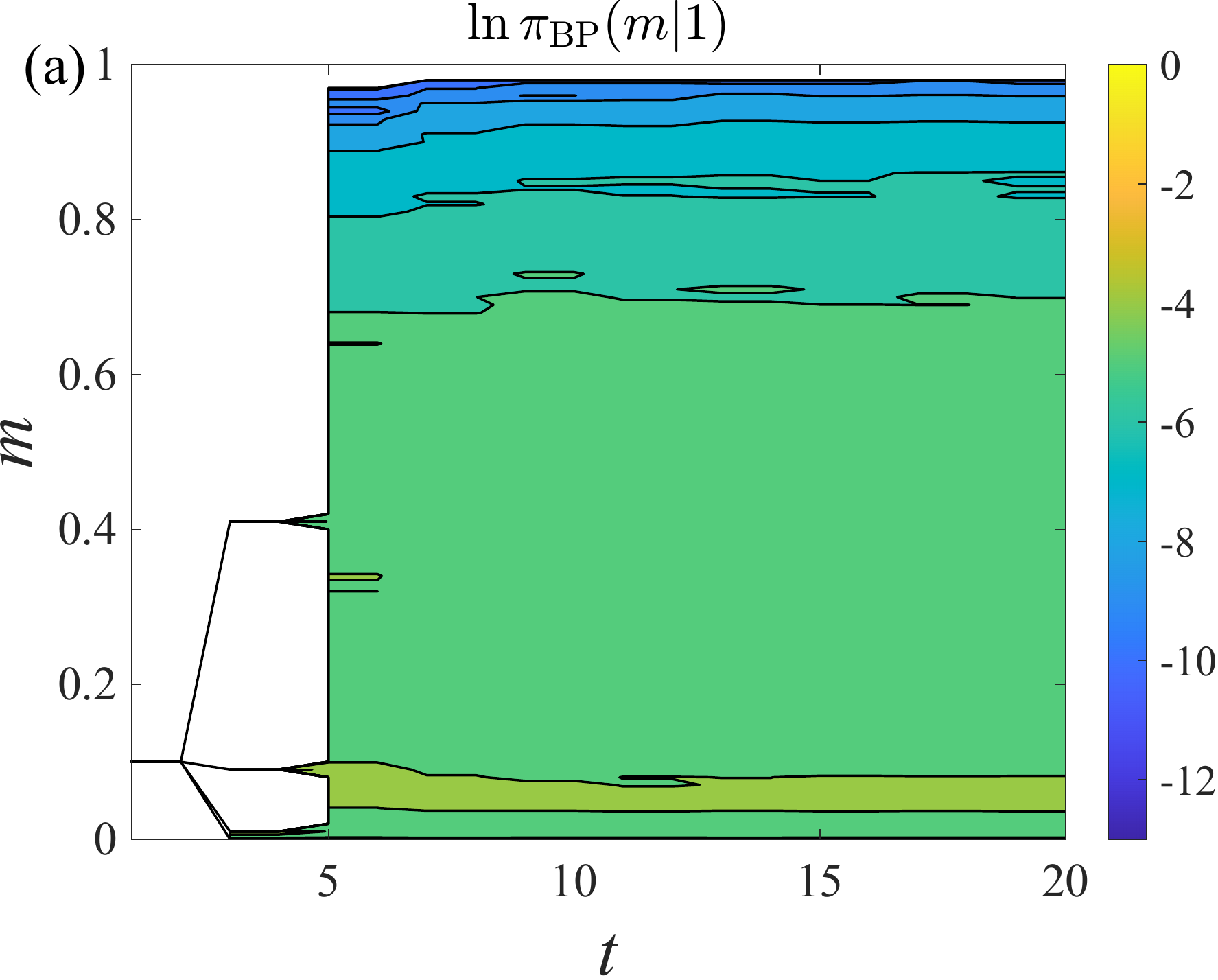}
\end{minipage}
\begin{minipage}{0.495\hsize}
\centering
\includegraphics[width=3in]{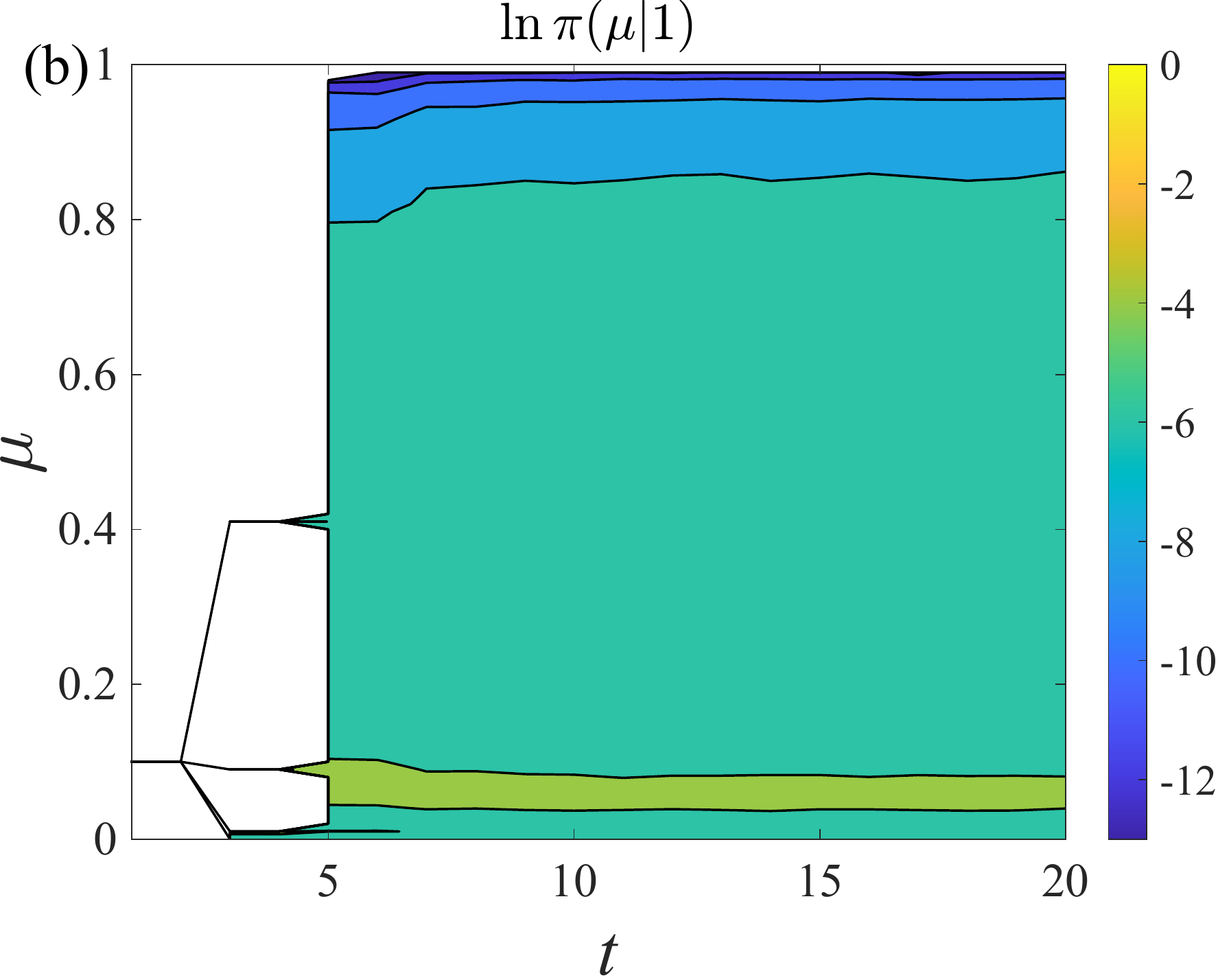}
\end{minipage}
\begin{minipage}{0.495\hsize}
\centering
\includegraphics[width=3in]{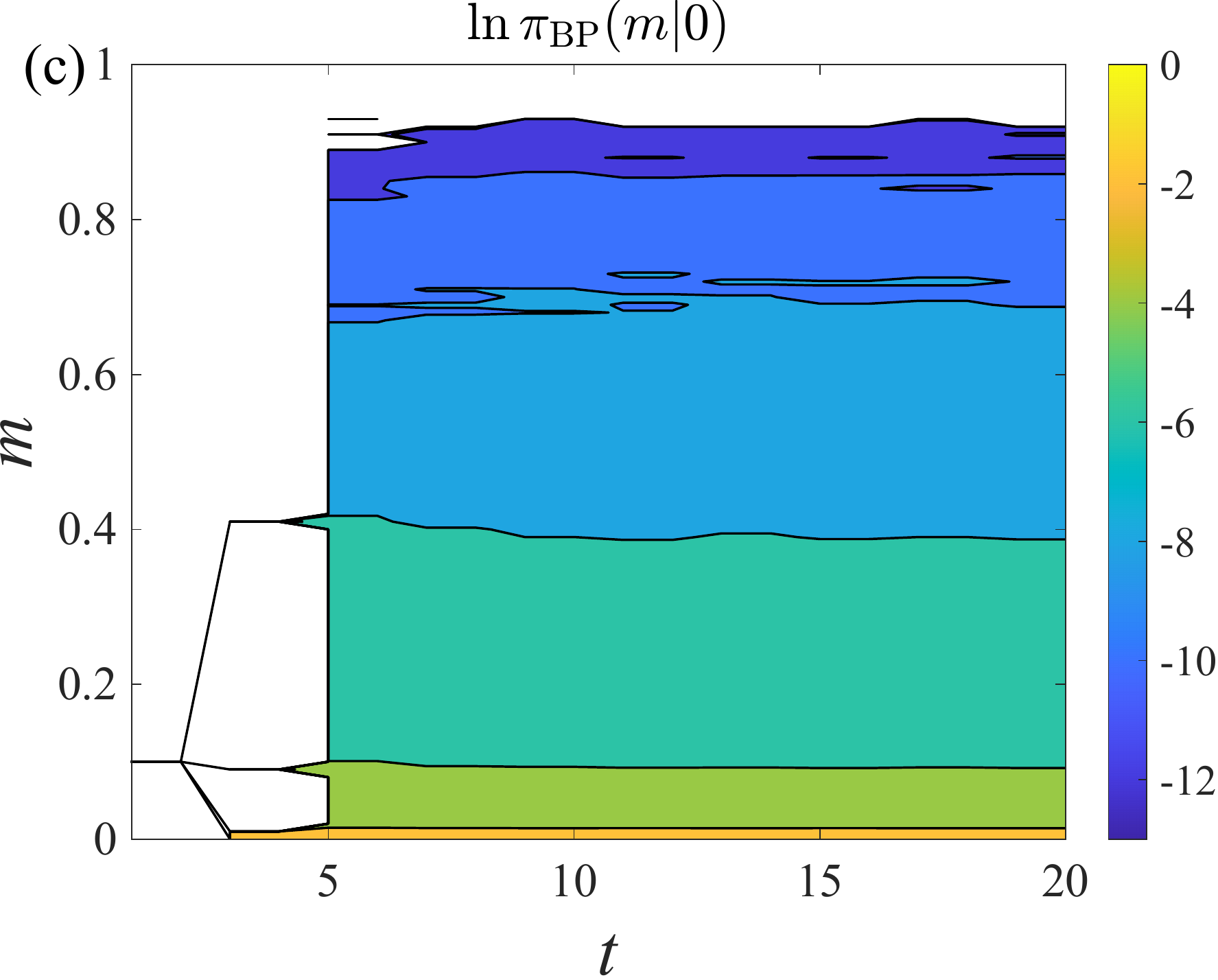}
\end{minipage}
\begin{minipage}{0.495\hsize}
\centering
\includegraphics[width=3in]{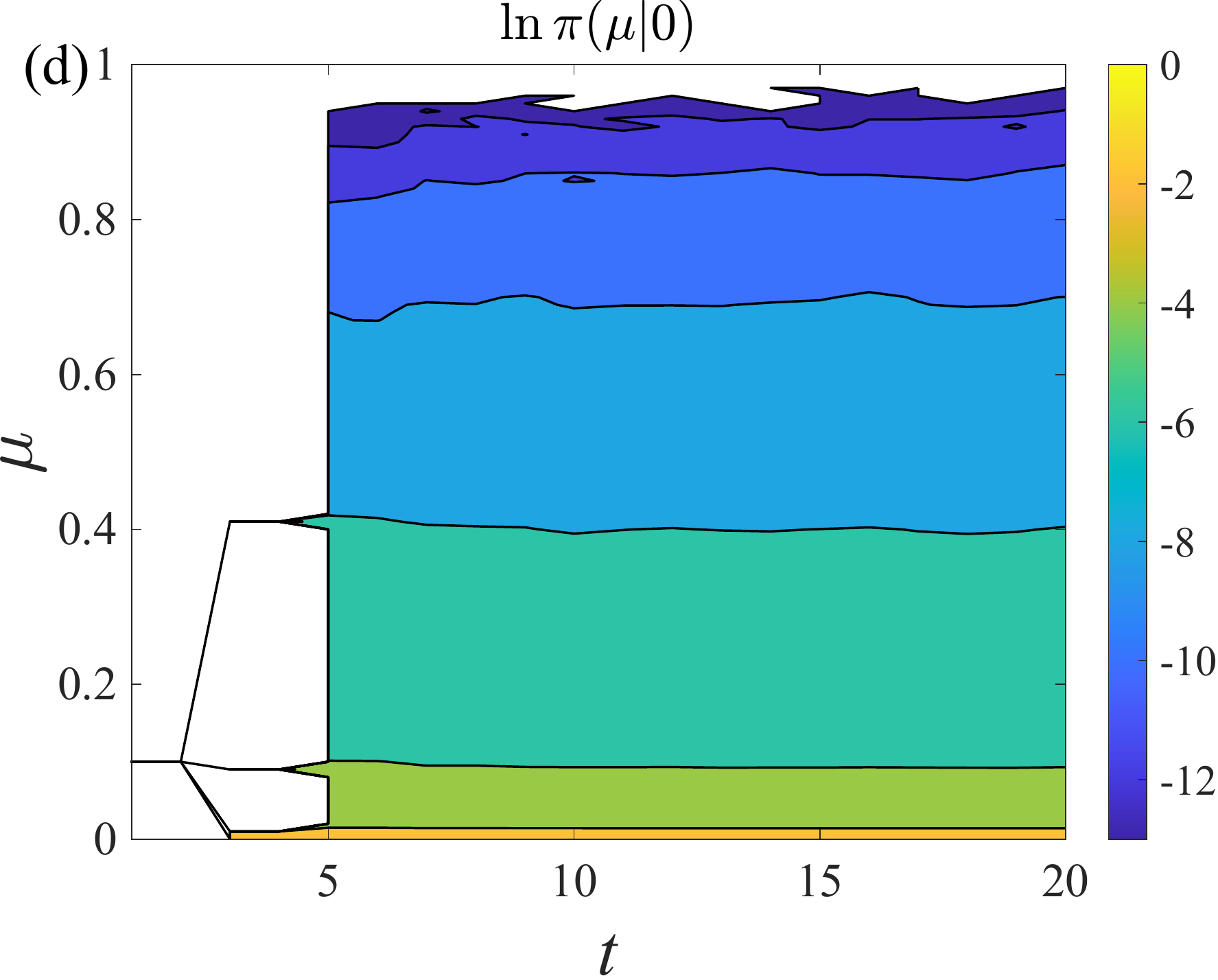}
\end{minipage}
\caption{Time evolution of the logarithmic distributions
$\pi_{\mathrm{BP}}(m|x)$ and $\pi(\mu|x)$ for $x\in\{0,1\}$
at $\alpha=0.5$, $K=10$, $p_{\mathrm{TP}}=0.9$, and $p_{\mathrm{FP}}=0.05$.
In BP, the distributions are calculated under a fixed randomness at 
$N=5\times 10^5$.
}
\label{fig:BP_vs_PD}
\end{figure*}

\begin{figure*}
\begin{minipage}{0.495\hsize}
\centering
\includegraphics[width=3in]{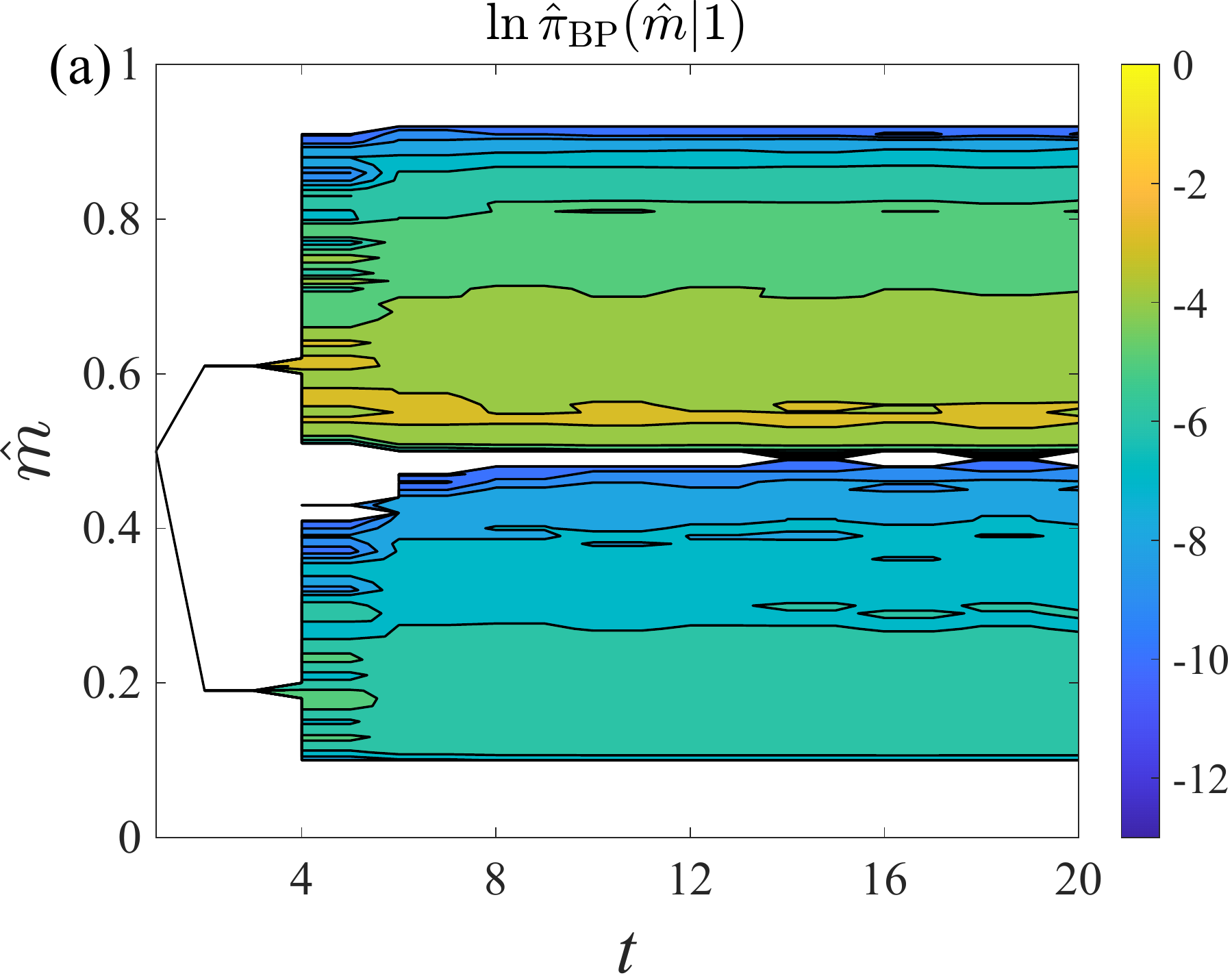}
\end{minipage}
\begin{minipage}{0.495\hsize}
\centering
\includegraphics[width=3in]{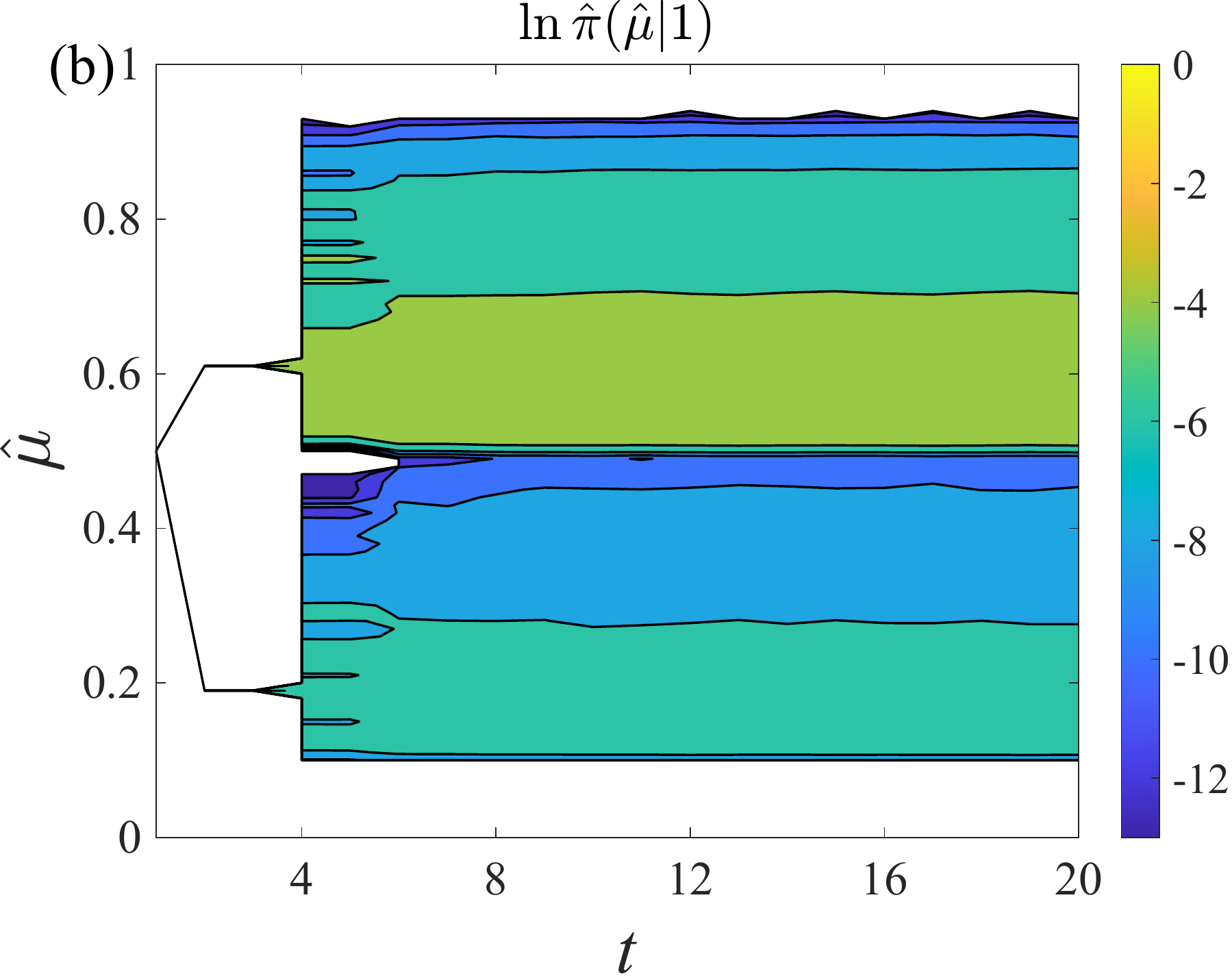}
\end{minipage}
\begin{minipage}{0.495\hsize}
\centering
\includegraphics[width=3in]{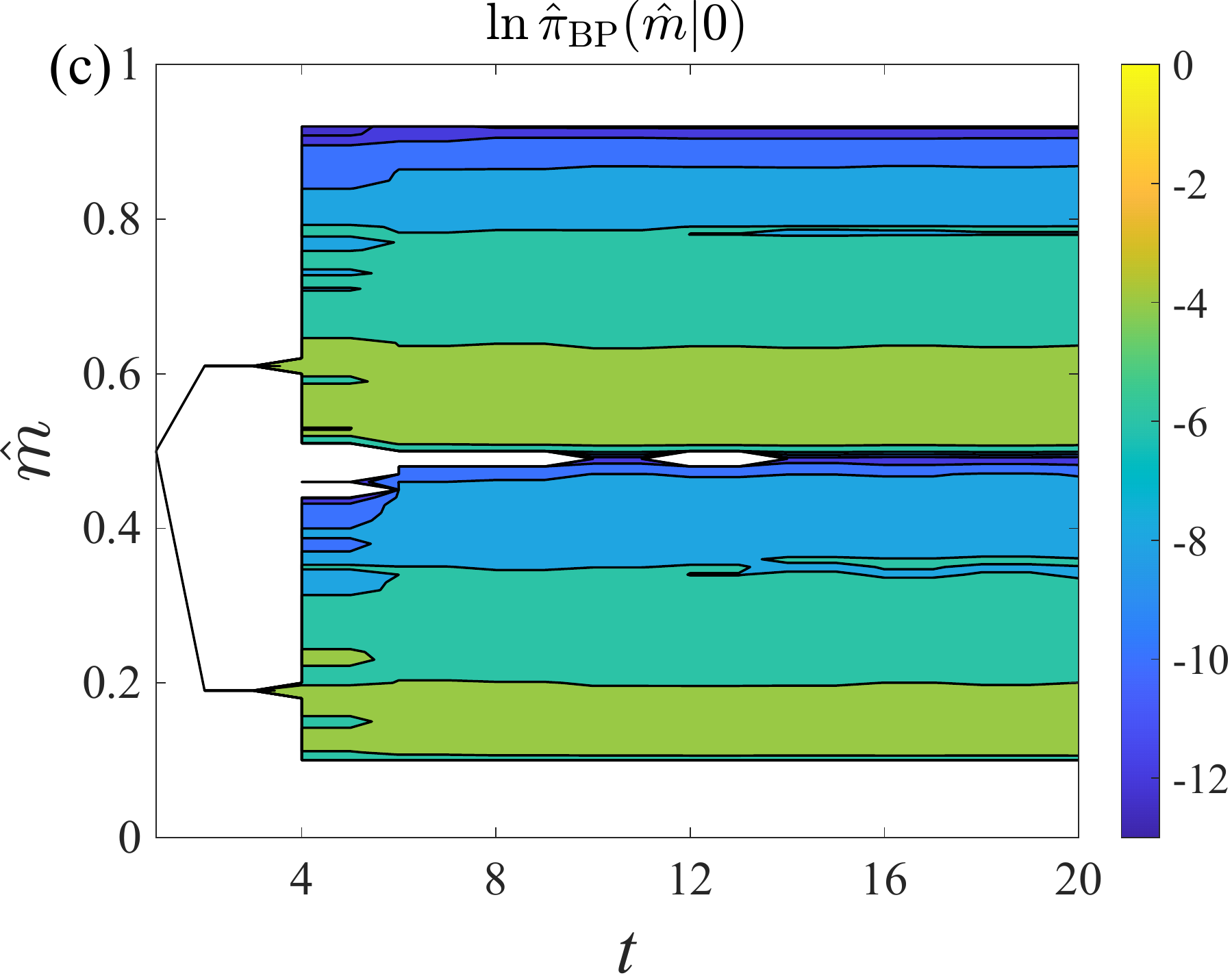}
\end{minipage}
\begin{minipage}{0.495\hsize}
\centering
\includegraphics[width=3in]{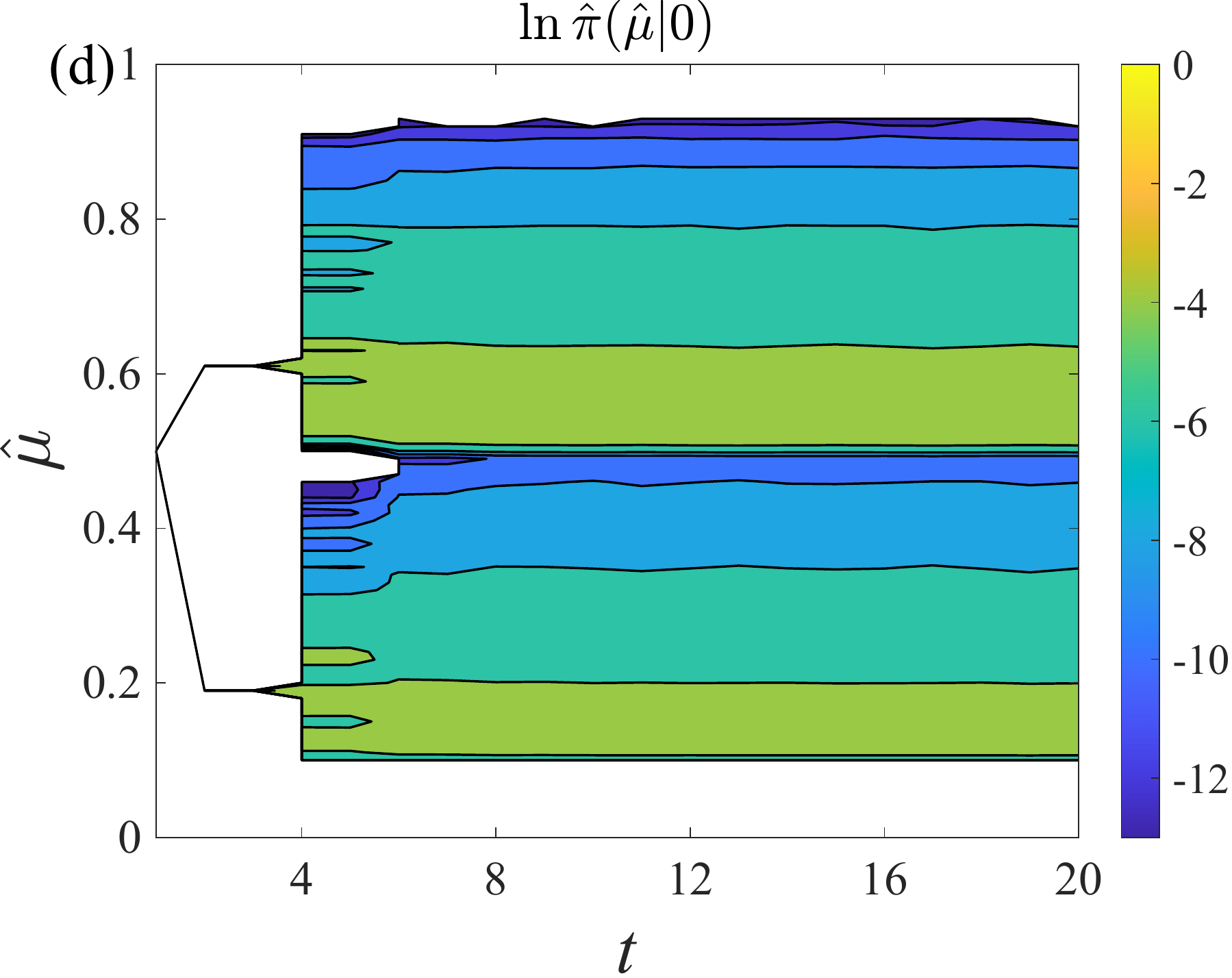}
\end{minipage}
\caption{Time evolution of the logarithmic distributions
$\hat{\pi}_{\mathrm{BP}}(\hat{m}|x)$ and $\hat{\pi}(\hat{\mu}|x)$ for $x\in\{0,1\}$
at $\alpha=0.5$, $K=10$, $p_{\mathrm{TP}}=0.9$, and $p_{\mathrm{FP}}=0.05$.
In BP, the distributions are calculated under a fixed randomness at 
$N=5\times 10^5$.}
\label{fig:BP_vs_PD_hat}
\end{figure*}

\section{Summary and Conclusion}
\label{sec:Summary}

In this study, we analyzed a group testing model
under the Bayesian optimal setting.
We demonstrated that the posterior AUC takes the maximum value
when the marginal posterior probability under the Bayesian optimal setting 
is used as the diagnostic variable.
Furthermore, we derived an optimal cutoff based on the expected risk;
in particular, the estimator defined by the cutoff that equals the prevalence
maximizes an unbiased estimator of the expected Youden index.
The derived cutoff can be interpreted by using the Bayes factor.
To understand the performance of the group testing under the 
Bayesian optimal setting,
we applied the replica method in statistical physics 
to the group testing model.
The obtained result matched the result of the algorithm,
which supports our analysis.
Based on the consideration of the Bayesian optimal setting,
the obtained results are expected to provide an
estimation of the upper bound for the group testing performance.

In the following, we summarize the assumptions introduced in our problem setting,
and explain how we can approach these assumptions to achieve realistic settings in the
future.

\begin{description}

\item[$\bullet$ A1~~] The pools containing at least one defective item 
are regarded as positive.

When we consider the clinical testing
where the specimens of the patients are pooled,
the dilution effect is negligible.
To consider this effect,
a group testing model that has a detection limit is proposed \cite{Threshold_GT}.
Furthermore, in applying the group testing to the genome sequence processing,
semi-quantitative group testing is a realistic setting 
rather than the logical sum rule \cite{SemiGT}.

\item[$\bullet$ A2~~] Independence of the tests

When the specimens collected from the defected patients
do not contain sufficient amounts of the pathogen
to exceed the detection threshold,
the results of the test performed on the pools 
that contain insufficient specimen quantity
can be non-defective 
with a high probability when compared with other pools.
This is an example that violates the assumption of the independence 
and identicality of the tests.

In the case of the diagnostic test for intrathoracic tuberculosis in children,
a multiple sampling and pooling method for specimens collected using different methods such as
gastric aspirate, a nasopharyngeal aspirate, and induced sputum
has been discussed to obtain greater specimen volume \cite{MixSpecimen}.
The combinatorial formulation of the collection method of the specimen and 
the subsequent group testing are expected to provide more realistic models.

\item[$\bullet$ A3-4] Prior knowledge of the true positive probability, false positive probability, and prevalence

The test properties $p_{\mathrm{TP}}$ and $p_{\mathrm{FP}}$ 
and the prevalence $\theta$ are generally unknown;
hence, an estimation procedure is required.
There are various methods to estimate
the true positive probability and false positive probability in a diagnostic test, and
the utilization of these methods before performing the group testing is
straightforward.
For instance, in previous studies, the estimation of parameters was studied by introducing
Bayesian inference \cite{TPFP_estimation}.
To simultaneously estimate these parameters and the items' states in the group testing,
the expectation-maximization method
and hierarchical Bayes approach \cite{Sakata_JPSJ}
can be combined with the BP algorithm.


\item[$\bullet$ A5~~] Equal pretest probability of all items 

In the BP algorithm mentioned in Sec.\ref{sec:Population},
the evaluation of the marginal posterior probability
under the item-dependent prior distribution is possible.
Furthermore, as a realistic setting,
the regional-dependence of the prevalence is considered by introducing
the mixed effect model \cite{GT_mixed}.
However, the cutoff determination and performance evaluation in the case of inhomogeneous prevalence should be discussed.

\end{description}

\section*{Acknowledgments}
The authors thank Yukito Iba
for the helpful comments and discussions.
This work was partially supported by a Grant-in-Aid for Scientific Research 19K20363 from the Japanese Society for the Promotion of Science (JSPS), 
and CASIO Science Promotion Foundation (AS) and 
JST CREST Grant Number JPMJCR1912 (YK).

\appendix
\section*{Details of the replica method}

We introduce the following expression of the Kronecker's delta:
\begin{align}
\delta(a,b)=\oint_{|z|=1} \frac{dz}{2\pi\sqrt{-1}z^{b+1}}z^a.
\label{eq:Kronecker}
\end{align}
Applying \eqref{eq:Kronecker} to the
joint distribution for the randomness \eqref{eq:randomness}, we obtain:
\begin{align}
\nonumber
&P_{\mathrm{rand}}(\bm{y},\bm{c},\bm{x}^{(0)})\\
&=\frac{1}{\cal D}
\prod_{j=1}^N\oint\frac{dz_j}{2\pi\sqrt{-1}z_j^{C+1}}z_j^{\sum_{\mu\in{\cal G}(j)}c_\mu}
f(\bm{y}|\bm{c},\bm{x}^{(0)})\phi(\bm{x}^{(0)}),
\label{eq:P_randomness}
\end{align}
where the integral representation of Kronecker's delta is introduced.
The procedure for calculating \eqref{eq:M_+2} is summarized in three steps.
\begin{description}
\item[$\bullet$ 1] Summation of $\bm{y}$ and $\bm{c}$
\item[$\bullet$ 2] Integration of $\bm{z}$, which appears in the integral representation of Kronecker's delta in (\ref{eq:P_randomness})
\item[$\bullet$ 3] Approximation using the saddle point method under RS assumption
\end{description}
We explain these procedures below.

\subsection{Summation of $\bm{y}$ and $\bm{c}$}

Summation over $\bm{y}$ and $\bm{c}$ leads to the following expression:
\begin{align}
\nonumber
{\cal M}_{ik}^+(n)&=
\frac{1}{{\cal D}}\oint\frac{\prod_{j=1}^Ndz_j z_j^{-(C+1)}}{(2\pi \sqrt{-1})^N}\sum_{\{\bm{x}^{(a)}\}}x_i^{(0)}x_i^{(1)}\cdots x_i^{(k)}\\
&\times\phi(\{\bm{x}^{(a)}\})\prod_{\nu=1}^{N_p}
\left[1+\prod_{j\in{\cal L}(\nu)}^K z_{j}W_n(\tilde{\bm{x}}_{(\mu)})\right].
\end{align}
Here, $\tilde{\bm{x}}_i=\{x_i^{(0)},x_i^{(1)},\cdots,x_i^{(n)}\}\in\{0,1\}^{n+1}$ is the 
replica vector of the $j$-th variable,
and 
$\tilde{\bm{x}}_{(\nu)}=\{\tilde{\bm{x}}_{1(\nu)},\cdots,\tilde{\bm{x}}_{K(\nu)}\}$.
We set $\phi(\{\bm{x}^{(a)}\})=\prod_{a=0}^n\phi(\bm{x}^{(a)})$,
and $W_n(\tilde{\bm{x}}_{(\mu)})$ is defined as 
\begin{align}
&W_n(\tilde{\bm{x}}_{(\mu)})=\prod_{a=0}^n\left\{p_{\mathrm{TP}}T(\bm{x}^{(a)}_{(\mu)})+p_{\mathrm{FP}}(1-T(\bm{x}^{(a)}_{(\mu)}))\right\}\\
\nonumber
&+\prod_{a=0}^n\left\{(1-p_{\mathrm{TP}})T(\bm{x}^{(a)}_{(\mu)})+(1-p_{\mathrm{FP}})(1-T(\bm{x}^{(a)}_{(\mu)}))\right\}.
\end{align}
Furthermore, we introduce the dummy variables $\tilde{\bm{u}}_j=\{u_j^{(0)},u_j^{(1)},\cdots,u_j^{(n)}\}\in\{0,1\}^{n+1}$ for $j=1,\cdots,K$,
and substitute the identity
$\sum_{\tilde{\bm{u}}_j}\delta(\tilde{\bm{x}}_{j},\bm{u}_j)=1$
for every $j\in\{1,\cdots,K\}$ as
\begin{align}
\nonumber
{\cal M}_{ik}^+(n)&=\frac{1}{{\cal D}}\oint\frac{\prod_{j=1}^Ndz_jz_j^{-(C+1)}}{(2\pi \sqrt{-1})^N}
\!\!\!\sum_{\{\bm{x}^{(a)}\}}\!\!\!
\phi(\{\bm{x}^{(a)}\})
\prod_{\kappa=0}^kx_i^{(\kappa)}\\
\nonumber
&\times\exp\left[\!\sum_{\nu=1}^{N_p}\!\prod_{j\in{\cal L}(\nu)}\!\left\{ \!z_{j}
\!
\sum_{\tilde{\bm{u}}_j}\delta(\tilde{\bm{x}}_{j(\nu)},\tilde{\bm{u}}_j)\!\right\}\!W_n(\{\tilde{\bm{u}}_j\})\!\right]\\
\nonumber
&\simeq
\frac{\prod_{i=1}^Ndz_iz_i^{-(C+1)}}{(2\pi \sqrt{-1})^N}
\sum_{\{\bm{x}^{(a)}\}}\phi(\{\bm{x}^{(a)}\})\prod_{\kappa=0}^kx_i^{(\kappa)}\\
&\times\exp\left[\frac{N^K}{K!}\sum_{\{\tilde{\bm{u}}_i\}}\prod_{j=1}^K{\cal Q}_n(\tilde{\bm{u}}_j)W_n(\{\tilde{\bm{u}}_i\})\right].
\label{eq:Phi_ik_4}
\end{align}
In deriving \eqref{eq:Phi_ik_4}, we use the relationship
$\sum_{\nu=1}^{N_p}1\sim\frac{1}{K!}\sum_{i_1=1}^N\cdots \sum_{i_K=1}^N1$,
which is valid at sufficiently large values of $N$.
Here, we set $\{\tilde{\bm{u}}_i\}=\{\tilde{\bm{u}}_1,\cdots,\tilde{\bm{u}}_K\}$, and 
define the function ${\cal Q}_n(\tilde{\bm{u}})$ as follows:
\begin{align}
{\cal Q}_n(\tilde{\bm{u}})=\frac{1}{N}\sum_{i=1}^Nz_i\prod_{a=0}^n\delta\left(x_i^{(a)},u^{(a)}\right).
\label{eq:Q_def}
\end{align}
(\ref{eq:Q_def}) is a function of $\bm{z}$, $\{\bm{x}_i^{(a)}\}$, and $\tilde{\bm{u}}$,
but it is expected that $\frac{1}{N}\sum_{i=1}^Nz_i\prod_{a=0}^n\delta(x_i^{(a)},u^{(a)})\to
E_{\bm{z},\tilde{\bm{x}}}[z\prod_{a=0}^n\delta(x^{(a)},u^{(a)})]$ holds for sufficiently large values of $N$;
hence, we consider ${\cal Q}_n$ as a function of $\tilde{\bm{u}}$.

\subsection{Integration of $\bm{z}$}

We introduce the identity for all possible 
$\tilde{\bm{u}}\in\{0,1\}^{n+1}$:
\begin{align}
\nonumber
1&=\int d{\cal Q}_n(\tilde{\bm{u}})\delta\left(\frac{1}{N}\sum_{i=1}^Nz_i\prod_{a=0}^n\delta(x_i^{(a)},u^{(a)})-{\cal Q}_n(\bm{u})\right)\\
\nonumber
&=\int \frac{d\hat{\cal Q}_n(\tilde{\bm{u}})d{\cal Q}_n(\tilde{\bm{u}})}{2\pi}\\
&\times\!\exp\!\left\{\hat{\cal Q}_n(\tilde{\bm{u}})\!
\left(\sum_{i=1}^N\!z_i\!\prod_{a=0}^n\!\delta(x_i^{(a)}\!\!,u^{(a)})\!-\!N{\cal Q}_n(\tilde{\bm{u}})\!\right)\!\right\},
\end{align}
where $\hat{\cal Q}_n(\tilde{\bm{u}})$ is the conjugate of ${\cal Q}_n(\tilde{\bm{u}})$.
Substituting this into \eqref{eq:Phi_ik_4} and integrating $\bm{z}$, we obtain
\begin{align}
{\cal M}^+_{ik}(n)&=\frac{1}{\cal D}\int d\bm{{\cal Q}}_nd\hat{\bm{{\cal Q}}}_n\exp\left(N\psi(\bm{{\cal Q}}_n,\hat{\bm{{\cal Q}}}_n)\right)\label{eq:M_+_fin}\\
\nonumber
&\times\frac{\sum_{\tilde{\bm{x}}_i}x_i^{(0)}x_i^{(1)}\cdots x_i^{(k)}\phi(\tilde{\bm{x}}_i)\left(\hat{\cal Q}_n(\tilde{\bm{x}}_i)\right)^C}{\sum_{\tilde{\bm{x}}_i}\phi(\tilde{\bm{x}}_i)\left(\hat{\cal Q}_n(\tilde{\bm{x}}_i)\right)^C},
\end{align}
where $\bm{{\cal Q}}_n=\{{\cal Q}_n(\tilde{\bm{u}})|\tilde{\bm{u}}\in\{0,1\}^{n+1}\}$, and 
$\hat{\bm{{\cal Q}}}_n=\{\hat{{\cal Q}}_n(\tilde{\bm{u}})|\tilde{\bm{u}}\in\{0,1\}^{n+1}\}$.
The function $\psi(\bm{{\cal Q}}_n,\hat{\bm{{\cal Q}}}_n)$ is given by
\begin{align}
\psi(\bm{{\cal Q}}_n,\hat{\bm{{\cal Q}}}_n)&={\cal S}_n(\hat{\bm{{\cal Q}}}_n)-
{\cal V}(\bm{{\cal Q}}_n,\hat{\bm{{\cal Q}}}_n)+{\cal E}_n(\bm{{\cal Q}}_n),
\label{eq:lnV}
\end{align}
where
\begin{align}
{\cal V}(\bm{{\cal Q}}_n,\hat{\bm{{\cal Q}}}_n)&=\sum_{\tilde{\bm{u}}}\hat{{\cal Q}}_n(\tilde{\bm{u}}){\cal Q}_n(\tilde{\bm{u}})\label{eq:V_n}\\
{\cal S}_n(\hat{\bm{{\cal Q}}}_n)&=\ln\sum_{\tilde{\bm{x}}}\prod_{a=0}^n\phi(\bm{x}^{(a)})(\hat{{\cal Q}}_n(\tilde{\bm{x}}))^C\label{eq:S_n}\\
{\cal E}_n(\bm{{\cal Q}}_n)&=\frac{N^{K-1}}{K!}\sum_{\tilde{\bm{u}}_1,\cdots,\tilde{\bm{u}}_K}\prod_{k=1}^K{\cal Q}_n(\tilde{\bm{u}}_k)W_n(\{\tilde{\bm{u}}_k\}).\label{eq:E_n}
\end{align}
Following the same procedure,
${\cal M}_{ik}^-$ is given by
\begin{align}
{\cal M}^-_{ik}(n)&=\frac{1}{\cal D}\int d\bm{{\cal Q}}_nd\hat{\bm{{\cal Q}}}_n\exp\left(N\psi(\bm{{\cal Q}}_n,\hat{\bm{{\cal Q}}}_n)\right)\label{eq:M_-_fin}\\
\nonumber
&\times\frac{\sum_{\tilde{\bm{x}}_i}(1-x_i^{(0)})x_i^{(1)}\cdots x_i^{(k)}\phi(\tilde{\bm{x}}_i)\left(\hat{\cal Q}_n(\tilde{\bm{x}}_i)\right)^C}{\sum_{\tilde{\bm{x}}_i}\phi(\tilde{\bm{x}}_i)\left(\hat{\cal Q}_n(\tilde{\bm{x}}_i)\right)^C}.
\end{align}
As shown in (\ref{eq:M_+_fin}) and (\ref{eq:M_-_fin}),
${\cal M}_{ik}^{\pm}(n)$ does not depend on $i$;
hence, it is denoted as ${\cal M}_k^{\pm}(n)$.

\subsection{Saddle point method}

Considering sufficiently large $N$,
we introduce the saddle point method 
for the integrals in (\ref{eq:M_+_fin}) and (\ref{eq:M_-_fin}).
Thus, we obtain 
\begin{align}
{\cal M}_k^+(n)&=\frac{\exp\left(N\psi(\bm{{\cal Q}}^*_n,\hat{\bm{{\cal Q}}}^*_n)\right)}{{\cal D}}\\
\nonumber
&\times\frac{\sum_{\tilde{\bm{x}}}x^{(0)}x^{(1)}\cdots x^{(k)}\phi(\tilde{\bm{x}})\left(\hat{\cal Q}^*_n(\tilde{\bm{x}})\right)^C}{\sum_{\tilde{\bm{x}}}\phi(\tilde{\bm{x}})\left(\hat{\cal Q}_n^*(\tilde{\bm{x}})\right)^C},
\end{align}
where $\hat{\cal Q}^*_n$ and ${\cal Q}^*_n$ denote the saddle points
defined as
\begin{align}
\{\bm{{\cal Q}}_n^*,\hat{\bm{{\cal Q}}}_n^*\}=\arg\mathop{\mathrm{extr}}_{\bm{{\cal Q}}_n,\hat{\bm{{\cal Q}}}_n}\psi(\bm{{\cal Q}}_n,\hat{\bm{{\cal Q}}}_n),
\label{eq:saddle}
\end{align}
where $\mathrm{extr}_{\bm{{\cal Q}}_n,\hat{\bm{{\cal Q}}}_n}$ 
represents the extremization with respect to 
$\bm{{\cal Q}}_n$ and $\hat{\bm{{\cal Q}}}_n$.

\subsubsection{Replica symmetric ansatz}
\label{sec:RS}

To obtain the expressions of $\bm{{\cal Q}}_n^*$ and $\hat{\bm{{\cal Q}}}_n^*$
given by (\ref{eq:saddle}),
we introduce the replica symmetric (RS)
assumption to ${\cal Q}_n(\tilde{\bm{x}})$ and $\hat{{\cal Q}}_n(\tilde{\bm{x}})$,
which are expressed as (\ref{eq:Q_RS})
and
\begin{align}
\nonumber
\hat{\cal Q}_n&(\tilde{\bm{u}})
=\hat{Q}_np_n(u^{(0)})\\
&\!\times\!\int \!d\hat{\mu}\hat{\pi}(\hat{\mu}|u^{(0)})
\!\prod_{a=1}^n\left\{(1\!-\!\hat{\mu})(1\!-\!u^{(a)})\!+\!\hat{\mu}u^{(a)}\right\},
\end{align}
respectively,
where $\int d\hat{\mu}\hat{\pi}(\hat{\mu}|u^{(0)})=1$ holds for $u^{(0)}\in\{0,1\}$,
and 
\begin{align}
\hat{p}_n(u^{(0)})&=(1-\hat{\rho}_n)(1-u^{(0)})+(1-\hat{\rho}_n)u^{(0)}.
\end{align}
Under the RS assumption, the summation of $\tilde{\bm{u}}$ can be implemented, and the 
resultant form of ${\cal M}_k^+(n)$ is given by
\begin{align}
\nonumber
{\cal M}_k^+(n)&=
\theta~\mathop{\mathrm{extr}}_{\Omega_n,\hat{\Omega}_n}\Big[\int \prod_{\gamma=1}^Cd\hat{\mu}_\gamma\hat{\pi}(\hat{\mu}_\gamma|1)\frac{\exp\left(N\psi(\Omega_n,\hat{\Omega}_n)\right)}{\cal D}\\
\nonumber
&\times\left\{\frac{\theta\prod_{\gamma=1}^C\hat{\mu}_\gamma}{\theta\prod_{\gamma=1}^C\hat{\mu}_\gamma+(1-\theta)\prod_{\gamma=1}^C(1-\hat{\mu}_\gamma)}\right\}^k\\
&\hspace{0.5cm}\times\left\{(1-{\theta})\prod_{c=1}^C(1-\hat{\mu}_c)+{\theta}\prod_{c=1}^C\hat{\mu}_c\right\}^n\Big],\label{eq:M_+_RS}
\end{align}
where $\Omega_n=\{\pi(\cdot|1),\pi(\cdot|0),Q_n,\rho_n\}$ and 
$\hat{\Omega}_n=\{\hat{\pi}(\cdot|1),\hat{\pi}(\cdot|0),\hat{Q}_n,\hat{\rho}_n\}$.
The terms in $\psi$, \eqref{eq:V_n}--\eqref{eq:E_n}, under the RS assumption are given by
\begin{align}
\nonumber
&{\cal S}_n=C\ln\hat{Q}_n\\
\nonumber
&+\ln\Big[\theta\hat{\rho}_n^C\int\prod_{\gamma=1}^Cd\hat{\mu}_\gamma\hat{\pi}(\hat{\mu}_c|1)\\
\nonumber
&\hspace{1.5cm}\times\Big\{(1-{\theta})\prod_{\gamma=1}^C(1-\hat{\mu}_\gamma)+{\theta}\prod_{\gamma=1}^C\hat{\mu}_\gamma\Big\}^n\\
\nonumber
&\hspace{1.0cm}+(1-\theta)(1-\hat{\rho}_n)^C\int\prod_{\gamma=1}^Cd\hat{\mu}_\gamma\hat{\pi}(\hat{\mu}_\gamma|0)\\
&\hspace{1.5cm}\times\Big\{(1-{\theta})\prod_{\gamma=1}^C(1-\hat{\mu}_\gamma)+{\theta}\prod_{\gamma=1}^C\hat{\mu}_\gamma\Big\}^n\Big]\label{eq:S_RS}\\
&{\cal V}_n=
Q_n\hat{Q}_n\int d\mu d\hat{\mu}\label{eq:V_RS}\\
\nonumber
&\times\Big[\rho_n\hat{\rho}_n\pi(\mu|1)\hat{\pi}(\hat{\mu}|1)
\left\{(1-\mu)(1-\hat{\mu})+\mu\hat{\mu}\right\}^n\\
\nonumber
&+(1-\rho_n)(1-\hat{\rho}_n)\pi(\mu|0)\hat{\pi}(\hat{\mu}|0)\left\{(1-\mu)(1-\hat{\mu})+\mu\hat{\mu}\right\}^n\Big]\\
&{\cal E}_n=Q_n^K\int d\bm{\mu}_{(K)}
\sum_{u_1\cdots u_K}\Big[\prod_{\ell=1}^Kp_n(u_{\ell})\pi(\mu_k|u_k)\label{eq:E_RS}\\
\nonumber
&\times\Big\{p_{\mathrm{TP}}\Big(p_{\mathrm{TP}}(1-q(\bm{\mu}_{(K)}))+p_{\mathrm{FP}}q(\bm{\mu}_{(K)})\Big)^n\\
\nonumber
&+\!(1\!-\!p_{\mathrm{TP}})\Big((1\!-\!p_{\mathrm{TP}})(1\!-\!q(\bm{\mu}_{(K)}))\!+\!(1\!-\!p_{\mathrm{FP}})q(\bm{\mu}_{(K)})\Big)^n\!\Big\}\\
\nonumber
&-\prod_{\ell=1}^K\pi(\mu_{\ell}|0)(1-\rho_n)^K(p_{\mathrm{TP}}-p_{\mathrm{FP}})\\
\nonumber
&\hspace{0.5cm}\times\Big\{\Big(p_{\mathrm{TP}}(1-q(\bm{\mu}_{(K)}))+p_{\mathrm{FP}}q(\bm{\mu}_{(K)})\Big)^n\\
\nonumber
&\hspace{0.8cm}-\Big((1-p_{\mathrm{TP}})(1-q(\bm{\mu}_{(K)}))+(1-p_{\mathrm{FP}})q(\bm{\mu}_{(K)})\Big)^n\Big\}\Big].
\end{align}
The expressions of \eqref{eq:M_+_RS}--\eqref{eq:E_RS}
are available for general $n\in\mathbb{R}$; hence,
we extend these expressions to $n\in\mathbb{R}$ and 
take the limit $n\to 0$

\subsubsection{Calculation of the normalization constant ${\cal D}$}

From the definition of ${\cal D}$ shown in (\ref{eq:P_randomness}),
${\cal D}$ is equivalent to ${\cal M}_k^\pm$ at $n\to0$ and $k\to 0$.
At this limit, $W_n(\{u_i^{(0)}\})$ is reduced to a constant; hence, 
the summation over $\bm{x}^{(0)}$ in \eqref{eq:Phi_ik_4}
and the saddle point method for the integral of $\bm{{\cal Q}}_0$
and $\hat{\bm{{\cal Q}}}_n$ gives the following expression:
\begin{align}
\nonumber
\frac{1}{N}&\log{\cal D}=
\!\!\!\!\mathop{\mathrm{extr}}_{Q_0,\hat{Q}_0,\rho_0,\hat{\rho}_0}
\!\Big[\log\!\left(\hat{Q}_0^C\{(1\!-\!\theta_0)(1\!-\!\hat{\rho}_0)^C\!+\!\theta_0\hat{\rho}_0^C\}\!\right)\\
&-Q_0\hat{Q}_0\{(1-\rho_0)(1-\hat{\rho}_0)+\rho_0\hat{\rho}_0\}+
\frac{N^{K-1}Q_0^K}{K!}\Big].
\end{align}
At the saddle point, we obtain 
\begin{align}
Q_0&=\left(\frac{C(K-1)!}{N^{K-1}}\right)^{1\slash K}\label{eq:Q_0}\\
\hat{Q}_0&=\frac{2C}{Q_0}\label{eq:Qh_0}\\
\rho_0&=\theta\label{eq:rho_0}\\
\hat{\rho}_0&=\frac{1}{2}.\label{eq:rhoh_0}
\end{align}
Furthermore, $\exp(\psi)\slash{\cal D}\to 1$ holds
at $n\to 0$, and we obtain 
\begin{align}
&\sum_{k=0}^{\infty}\frac{\hat{\rho}^k}{k!}m_k^+=\theta\int\prod_{\gamma=1}^Cd\hat{\mu}_\gamma
\hat{\pi}(\hat{\mu}_\gamma|1)\label{eq:m_+_final}\\
\nonumber
&\hspace{1.5cm}\times\exp\left(\frac{\hat{\rho}\theta\prod_{\gamma=1}^C\hat{\mu}_\gamma}
{\theta\prod_{\gamma=1}^C\hat{\mu}_\gamma+(1-\theta)\prod_{\gamma=1}^C(1-\hat{\mu}_\gamma)}\right),\\
&\sum_{k=0}^{\infty}\frac{\hat{\rho}^k}{k!}m_k^-=(1-\theta)\int\prod_{\gamma=1}^Cd\hat{\mu}_\gamma
\hat{\pi}(\hat{\mu}_\gamma|0)\label{eq:m_-_final}\\
\nonumber
&\hspace{1.5cm}\times\exp\left(\frac{\hat{\rho}\theta\prod_{\gamma=1}^C\hat{\mu}_\gamma}
{\theta\prod_{\gamma=1}^C\hat{\mu}_\gamma+(1-\theta)\prod_{\gamma=1}^C(1-\hat{\mu}_\gamma)}\right).
\end{align}
Substituting \eqref{eq:m_+_final} and \eqref{eq:m_-_final}
into \eqref{eq:P_+_def} and \eqref{eq:P_-_def},
we obtain \eqref{eq:P_+_RS_fin} and \eqref{eq:P_-_RS_fin}.


\subsubsection{Derivation of $\pi(\mu|u)$ and $\hat{\pi}(\hat{\mu}|u)$ for $u\in\{0,1\}$}

The functional form of $\pi(\mu|u)$ and $\hat{\pi}(\hat{\mu}|u)$
under the RS assumption is obtained based on the 
saddle point conditions 
$\frac{\partial\psi}{\partial\hat{\pi}(\hat{\mu}|u)}=0~(u=\{0,1\})$
and $\frac{\partial\psi}{\partial{\pi}(\hat{\mu}|u)}=0~(u=\{0,1\})$.
First, the condition $\frac{\partial\psi}{\partial\hat{\pi}(\hat{\mu}|u)}=0$ for $u\in\{0,1\}$
gives the following equation:
\begin{align}
\nonumber
&\hat{\eta}+\int\prod_{\gamma=1}^{C-1}d\hat{\mu}_\gamma\hat{\pi}(\hat{\mu}_\gamma|u)
\\
\nonumber
&\hspace{0.5cm}\times\log\left\{(1-\hat{\mu})(1-\mu(\hat{\bm{\mu}}_{(C-1)},{\theta}))
+\hat{\mu}\mu(\hat{\bm{\mu}}_{(C-1)},{\theta})\right\}\\
\nonumber
&+\int\prod_{\gamma=1}^{C-1}d\hat{\mu}_\gamma\hat{\pi}(\hat{\mu}_\gamma|u)\log\left\{(1-{\theta})\prod_{\gamma=1}^{C-1}(1-\hat{\mu}_\gamma)+{\theta}\prod_{\gamma=1}^{C-1}\hat{\mu}_\gamma\right\}\\
&=\int d\mu\pi(\mu|u)\log\{(1-\mu)(1-\hat{\mu})+\mu\hat{\mu}\},
\end{align}
where $\hat{\eta}$ is a Lagrange multiplier for the constraint 
$\int d\hat{\mu}\hat{\pi}(\hat{\mu}|u)=1$.
Setting $\hat{\eta}$ appropriately, we obtain (\ref{eq:pi}).
Next, we can derive the following equation
from the condition $\frac{\partial\psi}{\partial\pi(\mu|1)}=0$:
\begin{align}
\nonumber
&\int d\hat{\mu}\hat{\pi}(\hat{\mu}|1)\log\{(1-\mu)(1-\hat{\mu})+\mu\hat{\mu}\}\\
\nonumber
&=\int\prod_{k=1}^{K-1}d\mu_k\prod_{k=1}^{K-1}\sum_{u_k}
\phi(u_k)\pi(\mu_k|u_k)\\
\nonumber
&\times\Big[p_{\mathrm{TP}}\log\Big\{\!(1-\mu)(1-\hat{\mu}(p_{\mathrm{TP}},p_{\mathrm{FP}},\bm{\mu}_{(K-1)}))\\
\nonumber
&\hspace{4.0cm}+\mu\hat{\mu}(p_{\mathrm{TP}},p_{\mathrm{FP}},\bm{\mu}_{(K-1)})\Big\}\\
\nonumber
&\hspace{0.5cm}+(1\!-p_{\mathrm{TP}})\log\Big\{\!(1-\mu)(1-\hat{\mu}(1\!-p_{\mathrm{TP}},\!1\!-p_{\mathrm{FP}},\bm{\mu}_{(K-1)}))\\
&\hspace{2.0cm}+
\mu\hat{\mu}(1\!-p_{\mathrm{TP}},\!1\!-p_{\mathrm{FP}},\!\bm{\mu}_{(K-1)})\Big\}\Big].
\end{align}
Solving this, we obtain (\ref{eq:pi_hat_+}).
Following the same procedure based on $\frac{\partial\psi}{\partial\pi(\mu|0)}=0$, 
\eqref{eq:pi_hat_-} is obtained.

\providecommand{\noopsort}[1]{}\providecommand{\singleletter}[1]{#1}%


\begin{thebibliography}{10}
\providecommand{\url}[1]{#1}
\csname url@samestyle\endcsname
\providecommand{\newblock}{\relax}
\providecommand{\bibinfo}[2]{#2}
\providecommand{\BIBentrySTDinterwordspacing}{\spaceskip=0pt\relax}
\providecommand{\BIBentryALTinterwordstretchfactor}{4}
\providecommand{\BIBentryALTinterwordspacing}{\spaceskip=\fontdimen2\font plus
\BIBentryALTinterwordstretchfactor\fontdimen3\font minus
  \fontdimen4\font\relax}
\providecommand{\BIBforeignlanguage}[2]{{%
\expandafter\ifx\csname l@#1\endcsname\relax
\typeout{** WARNING: IEEEtran.bst: No hyphenation pattern has been}%
\typeout{** loaded for the language `#1'. Using the pattern for}%
\typeout{** the default language instead.}%
\else
\language=\csname l@#1\endcsname
\fi
#2}}
\providecommand{\BIBdecl}{\relax}
\BIBdecl

\bibitem{Dorfman}
R.~Dorfman, ``The detection of defective members of large poulations,''
  \emph{Ann. Math. Statist.}, vol.~14, pp. 436--440, 1943.

\bibitem{Atia}
K.~G. Atia and V.~Saligrama, ``Boolean compressed sensing and noisy group
  testing,'' \emph{IEEE Transaction on Information Theory}, vol.~58, no.~3, pp.
  1880--1901, 2012.

\bibitem{Sparse_book}
T.~Hastie, R.~Tibshirani, and M.~Wainwright, \emph{Statistical Learning with
  Sparsity: The Lasso and Generalizations}.\hskip 1em plus 0.5em minus
  0.4em\relax Chapman \& Hall, 2015.

\bibitem{GT_HIV}
M.~Krajden, D.~Cook, A.~Mak, K.~Chu, N.~Chahil, M.~Steinberga, M.~Rekartb, and
  M.~Gilberta, ``Pooled nucleic acid testing increases the diagnostic yield of
  acute hiv infections in a high- risk population compared to 3rd and 4th
  generation hiv enzyme immunoassays,'' \emph{Journal of Clinical Virology},
  vol.~61, pp. 132--137, 2014.

\bibitem{GT_HBV_B}
S.~H. Kleinman, D.~M. Strong, G.~G.~E. Tegtmeier, P.~V. Holland, J.~B. Gorlin,
  C.~R. Cousins, R.~P. Chiacchierini, and L.~A. Pietrelli, ``Hepatitis b virus
  (hbv) dna screening of blood donations in minipools with the cobas
  ampliscreen hbv test,'' \emph{Transfusion}, vol.~45, pp. 1247--1257, 2005.

\bibitem{GT_HBV_C}
B.~Sarov, L.~Novack, N.~Beer, J.~Safi, H.~Soliman, J.~S. Pliskin, E.~Litvak,
  A.~Yaari, and E.~Shinar, ``Feasibility and cost-benefit of implementing
  pooled screening for hcvag in small blood bank settings,'' \emph{Transfusion
  Medicine}, vol.~17, pp. 479--487, 2007.

\bibitem{rare_mutation}
C.~I. Amos, M.~L. Frazier, and W.~Wang, ``Dna pooling in mutation detection
  with reference to sequence analysis,'' \emph{Am. J. Hum. Genet.}, vol.~66,
  pp. 1689--1692, 2000.

\bibitem{GT_biomonitoring}
A.~L. Heffernan, L.~L. Aylward, L.-M.~L. Toms, P.~D. Sly, M.~Macleod, and J.~F.
  Mueller, ``Pooled biological specimens for human biomonitoring of
  environmental chemicals: Opportunities and limitations,'' \emph{Journal of
  Exposure Science and Environmental Epidemiology}, vol.~24, pp. 225--232,
  2014.

\bibitem{Mezard_GT}
M.~M\'{e}zard, M.~Tarzia, and C.~Toninelli, ``Statistical physics of group
  testing,'' \emph{Journal of Physics: Conference Series}, vol.~95, p. 012019,
  2008.

\bibitem{Sobel1}
M.~Sobel and P.~A. Groll, ``Group testing to eliminate efficiently all
  defectives in a binomial sample,'' \emph{Bell Labs Technical Journal},
  vol.~38, no.~5, pp. 1179--1252, 1959.

\bibitem{Hwang}
K.~F. Hwang, ``A method for detecting all defective members in a population by
  group testing,'' \emph{Journal of the American Statistical Association},
  vol.~67, no. 339, pp. 605--608, 1972.

\bibitem{STD}
N.~Thierry-Mieg, ``A new pooling strategy for high-throughput screening: the
  shifted transversal design,'' \emph{BMC Bioinformatics}, vol.~7, no.~28, pp.
  1--13, 2006.

\bibitem{Origami}
G.~Xu, D.~Nolder, J.~Reboud, M.~C. Oguike, D.~A. van Schalkwyk, C.~J.
  Sutherland, and J.~M. Cooper, ``Paper‐origami‐based multiplexed malaria
  diagnostics from whole blood,'' \emph{Angew. Chem.}, vol. 128, pp.
  15\,476--15\,479, 2016.

\bibitem{Doucet_GT}
M.~Cuturi, O.~Teboul, Q.~Berthet, A.~Doucet, and J.-P. Vert, ``Noisy adaptive
  group testing using bayesian sequential experimental design,''
  \url{https://arxiv.org/abs/2004.12508}, 2020.

\bibitem{Sakata_PRE}
A.~Sakata, ``Active pooling design in group testing based on bayesian posterior
  prediction,'' \emph{Physical Review E}, vol. 103, p. 022110, 2021.

\bibitem{Katona}
G.~O.~H. Katona, \emph{A Survey of Combinatorial Theory}.\hskip 1em plus 0.5em
  minus 0.4em\relax The American Society of Human Genetics, 1973, ch.~23, pp.
  285--308.

\bibitem{Aldridge}
M.~Aldridge, O.~Johnson, and J.~Scarlett, \emph{Group Testing: An Information
  Theory Perspective}.\hskip 1em plus 0.5em minus 0.4em\relax Now Publishers,
  2019.

\bibitem{Johnson}
D.~Sejdinovic and O.~Johnson, ``Note on noisy group testing: Asymptotic bounds
  and belief propagation reconstruction,'' in \emph{48th Annual Allerton
  Conference on Communication, Control, and Computing (Allerton)}.\hskip 1em
  plus 0.5em minus 0.4em\relax IEEE, 2010, pp. 998 -- 1003.

\bibitem{Sakata_JPSJ}
A.~Sakata, ``Bayesian inference of infected patients in group testing with
  prevalence estimation,'' \emph{Journal of Physical Society Japan}, vol.~89,
  p. 084001, 2020.

\bibitem{Cevher}
J.~Scarlett and V.~Cevher, ``Phase transitions in group testing,'' in
  \emph{27th Annual ACM-SIAM Symposium on Discrete Algorithms (SODA)}.\hskip
  1em plus 0.5em minus 0.4em\relax ACM, 2016, pp. 40--53.

\bibitem{Coja-Oghlan}
A.~Coja-Oghlan, O.~Gebhard, M.~Hahn-Klimroth, and P.~Loick,
  ``Information-theoretic and algorithmic thresholds for group testing,''
  \emph{IEEE Transactions on Information Theory}, vol.~66, no.~12, pp.
  7911--7928, 2020.

\bibitem{no_GS}
A.~W.~S. Rutjes, J.~B. Reitsma, A.~Coomarasamy, K.~S. Khan, and P.~M.~M.
  Bossuyt, ``Evaluation of diagnostic tests when there is no gold standard. a
  review of methods,'' \emph{Health Technology Assessment}, vol.~11, no.~50,
  2007.

\bibitem{Lehmann}
E.~L. Lehmann and G.~Casella, \emph{Theory of Point Estimation}.\hskip 1em plus
  0.5em minus 0.4em\relax Oxford University PressSpringer, 1998.

\bibitem{Utility}
E.~Somoza and D.~Mossman, ``"biological markers" and psychiatric diagnosis:
  Risk-benefit balancing using roc analysis,'' \emph{Biological Psychiatry},
  vol.~29, no.~8, pp. 811--826, 1991.

\bibitem{ROC1}
R.~Kumar and A.~Indrayan, ``Receiver operating characteristic (roc) curve for
  medical researchers,'' \emph{Indian Pediatr}, vol.~48, no.~4, pp. 277--87,
  2011.

\bibitem{ROC2}
K.~Hajian-Tilaki, ``Receiver operating characteristic (roc) curve analysis for
  medical diagnostic test evaluation,'' \emph{Caspian J Intern Med}, vol.~4,
  no.~2, pp. 627--635, 2013.

\bibitem{AUC}
A.~J. Hanley and J.~B. McNeil, ``The meaning and use of the area under a
  receiver operating characteristic (roc) curve,'' \emph{Radiology}, vol. 143,
  no.~1, pp. 29--36, 1982.

\bibitem{Marroquin}
J.~Marroquin, S.~Mitter, and T.~Poggio, ``Probabilistic solution of ill-posed
  problems in computational vision,'' \emph{Journal of the American Statistical
  Association}, vol.~82, no. 397, pp. 76--89, 1987.

\bibitem{Rujan}
\BIBentryALTinterwordspacing
P.~Ruj\'an, ``Finite temperature error-correcting codes,'' \emph{Phys. Rev.
  Lett.}, vol.~70, pp. 2968--2971, May 1993. [Online]. Available:
  \url{https://link.aps.org/doi/10.1103/PhysRevLett.70.2968}
\BIBentrySTDinterwordspacing

\bibitem{Iba1999}
Y.~Iba, ``The nishimori line and bayesian statistics,'' \emph{J. Phys. A: Math.
  Gen.}, vol.~32, no.~21, p. 3875, 1999.

\bibitem{Goodman}
S.~N. Goodman, ``Toward evidence-based medical statistics. 2: The bayes
  factor,'' \emph{Ann Intern Med}, vol. 130, no.~12, pp. 1005--13, 1999.

\bibitem{Jeffreys}
H.~Jeffreys, \emph{Theory of Probability}.\hskip 1em plus 0.5em minus
  0.4em\relax Oxford University Press, 1961.

\bibitem{Good1985}
I.~J. Good, \emph{Bayesian Statistics vol.2 (Bernardo, J. M. and DeGroot, M. H.
  and Lindley, D. and Smith, A. F. M. Eds.)}.\hskip 1em plus 0.5em minus
  0.4em\relax Elsevier Science Publishers, 1985, ch. Weight of evidence: A
  brief survey, pp. 249--270.

\bibitem{Kass_Raftery}
R.~E. Kass and A.~E. Raftery, ``Bayes factors,'' \emph{Journal of the American
  Statistical Association}, vol.~90, no. 430, pp. 773--795, 1995.

\bibitem{Kabashima_LDPC}
T.~Murayama, Y.~Kabashima, D.~Saad, and R.~Vicente, ``Statistical physics of
  regular low-density parity-check error-correcting codes,'' \emph{Phys. Rev.
  E}, vol.~62, p. 1577, 2000.

\bibitem{Kabashima2009}
Y.~Kabashima, T.~Wadayama, and T.~Tanaka, ``A typical reconstruction limit for
  compressed sensing based on lp-norm minimization,'' \emph{Journal of
  Statistical Mechanics: Theory and Experiment}, vol. 2009, no.~09, p. L09003,
  2009.

\bibitem{deFinetti}
E.~Hewitt and L.~J. Savage, ``Symmetric measures on cartesian products,''
  \emph{Transactions of the American Mathematical Society}, vol.~80, no.~2, pp.
  470--501, 1955.

\bibitem{Mezard-Parisi}
M.~M\'{e}zard and G.~Parisi, ``The bethe lattice spin glass revisited,''
  \emph{Eur. Phys. J. B}, vol.~20, p. 217, 2001.

\bibitem{Kanamori}
T.~Kanamori, H.~Uehara, and M.~Jimbo, ``Pooling design and bias correction in
  dna library screening,'' \emph{Journal of Statistical Theory and Practice},
  vol.~6, no.~1, pp. 220--238, 2012.

\bibitem{Threshold_GT}
P.~Damaschke, ``Threshold group testing,'' in \emph{General Theory of
  Information Transfer and Combinatorics}, R.~Ahlswede, L.~B\"{a}umer, N.~Cai,
  H.~Aydinian, V.~Blinovsky, C.~Deppe, and H.~Mashurian, Eds., 2006, pp.
  707--718.

\bibitem{SemiGT}
A.~Emad and O.~Milenkovic, ``Semiquantitative group testing,'' \emph{IEEE
  Transaction on Information Theory}, vol.~60, no.~8, pp. 4614--4636, 2014.

\bibitem{MixSpecimen}
E.~Walters, M.~M. van~der Zalm, A.-M. Demers, A.~Whitelaw, M.~Palmer, C.~Bosch,
  H.~R. Draper, H.~S. Schaaf, P.~Goussard, C.~J. Lombard, R.~P. Gie, and A.~C.
  Hesseling, ``Specimen pooling as a diagnostic strategy for microbiologic
  confirmation in children with intrathoracic tuberculosis,'' \emph{Pediatr
  Infect Dis J.}, vol.~38, pp. e128--e131, 2019.

\bibitem{TPFP_estimation}
L.~Joseph, T.~W. Gyorkos, and L.~Coupal, ``Bayesian estimation of disease
  prevalence and the parameters of diagnostic tests in the absence of a gold
  standard,'' \emph{American Journal of Epidemilogy}, vol. 141, pp. 263--272,
  1995.

\bibitem{GT_mixed}
P.~Chen, J.~M. Tebbs, and C.~R. Bilder, ``Group testing regression models with
  fixed and random effects,'' \emph{Biometrics}, vol.~65, pp. 1270--1278, 2009.

\end{thebibliography}
\end{document}